\documentclass[11pt,letterpaper]{article}

\usepackage{tabularx}
\usepackage{microtype}
\usepackage{epsfig}
\usepackage{enumerate}
\usepackage{graphicx}
\usepackage{amsthm}
\usepackage{amssymb}
\usepackage{amsmath}
\usepackage{amsfonts}
\usepackage{mathrsfs}
\usepackage{algorithm2e}

\usepackage{graphicx}
\usepackage{tikz, subfigure}
\usetikzlibrary{decorations.shapes,decorations.pathreplacing}
\usetikzlibrary{matrix}
\usetikzlibrary{arrows,shapes,positioning}

\usepackage{vmargin}
\setmarginsrb{1in}{1in}{1in}{1in}{0pt}{0pt}{0pt}{6mm}

\usepackage{color}

\newtheorem{proposition}{Proposition}
\newtheorem{lemma}{Lemma}
\newtheorem{theorem}{Theorem}
\newtheorem{corollary}{Corollary}
\newtheorem{definition}{Definition}

\newcommand{\bw}{\mathrm{bw}}
\newcommand{\TD}{\mathcal{T}}
\newcommand{\tw}{\mathrm{tw}}
\newcommand{\pw}{\mathrm{pw}}
\newcommand{\prob}[1]{\textsc{#1}}
\newcommand{\PD}{\mathcal{P}}
\newcommand{\diam}{\text{diam}}

\newcommand{\cclass}[1]{\textnormal{\textsf{\small #1}}}
\newcommand{\cP}{\mbox{\cclass{P}}}
\newcommand{\cNP}{\mbox{\cclass{NP}}}

\newcommand{\cFPT}{\mbox{\cclass{FPT}}}

\newcommand{\cW}[1]{\mbox{\cclass{W[#1]}}}

\begin{document}

\title{Parameterized Complexity of Bandwidth on Trees}

\author{Markus Dregi\thanks{Department of Informatics, University of Bergen, Norway}
\and Daniel Lokshtanov\footnotemark[1]}

\maketitle

\begin{abstract}


The bandwidth of a $n$-vertex graph $G$ is the smallest integer $b$ such that
there exists a bijective function $f : V(G) \rightarrow \{1,...,n\}$, called a
layout of $G$, such that for every edge $uv \in E(G)$, $|f(u) - f(v)| \leq b$.
In the {\sc Bandwidth} problem we are given as input a graph $G$ and integer
$b$, and asked whether the bandwidth of $G$ is at most $b$. We present two
results concerning the parameterized complexity of the {\sc Bandwidth} problem
on trees. 

First we show that an algorithm for {\sc Bandwidth} with running time
$f(b)n^{o(b)}$ would violate the Exponential Time Hypothesis, even if the
input graphs are restricted to be trees of pathwidth at most two. Our lower
bound shows that the classical $2^{O(b)}n^{b+1}$ time algorithm by Saxe
[SIAM Journal on Algebraic and Discrete Methods, 1980] is essentially optimal.

Our second result is a polynomial time algorithm that given a tree $T$ and
integer $b$, either correctly concludes that the bandwidth of $T$ is more than
$b$ or finds a layout of $T$ of bandwidth at most $b^{O(b)}$. This is the first
parameterized approximation algorithm for the bandwidth of trees.

\end{abstract}

\section{Introduction}

A layout for a graph~$G$ is a bijective function~$\alpha: V(G) \rightarrow \{1,
\ldots, |V(G)|\}$, and the bandwidth of the layout $\alpha$ is the maximum over
all edges $uv \in E(G)$ of $|\alpha(u)- \alpha(v)|\leq b$. The bandwidth of $G$
is the smallest integer $b$ such that $G$ has a layout of bandwidth $b$. In the
{\sc Bandwidth} problem we are given as input a graph~$G$ and an integer~$b$
and the goal is to determine whether the bandwidth of $G$ is at most $b$. In
the optimization variant we are given $G$ and the task is to find a layout with
smallest possible bandwidth.

The problem arises in sparse matrix computations, where given an $n\times n$
matrix~$A$ and an integer~$k$, the goal is to decide whether there is a
permutation matrix~$P$ such that $PAP^T$ is a matrix whose all non-zero entries
lie within the $k$ diagonals on either side of the main diagonal. Standard
matrix operations such as inversion and multiplication as well as Gaussian
elimination can be sped up considerably if the input matrix~$A$ can be
transformed into a matrix~$PAP^T$ of small bandwidth \cite{George1981}. 


{\sc Bandwidth} is one of the most well-studied
$\cNP$-complete~\cite{GareyJ1979,Papadimitriou76} problems. The problem remains
$\cNP$-complete even on very restricted subclasses of trees, such as
caterpillars of hair length at most $3$~\cite{monien1986bandwidth}. Furthermore, it is
$\cNP$-hard to approximate the bandwidth within any constant factor, even on
trees~\cite{dubey2011hardness}. The best approximation algorithm for {\sc
Bandwidth} on general graphs is by Dungan and Vempala~\cite{DunaganV01}, this
algorithm has approximation ratio $(\log n)^{3}$. For trees
Gupta~\cite{Gupta00} gave a slightly better approximation algorithm with ratio
$(\log n)^{9/4}$, while for caterpillars a $O(\frac{\log n}{\log \log n})$-approximation~\cite{feige2009approximating} can be achieved.


One could argue that the {\sc Bandwidth} problem is most interesting when the
bandwidth of the graph is very small compared to the size of the graph. Indeed,
when the bandwidth of $G$ is {\em constant} the matrix operations discussed
above can be implemented in linear time. For each $b \geq 1$ it is possible to
recognize the graphs with bandwidth at most $b$ in time $2^{{\cal
O}(b)}n^{b+1}$ using the classical algorithm of Saxe~\cite{saxe1980dynamic}. At this
point it is very natural to ask how much Saxe's algorithm can be improved. Our
first main result is that assuming the Exponential Time Hypothesis of
Impagliazzo, Paturi and Zane~\cite{ImpagliazzoPZ01}, no sigificant improvement
is possible, even on very restricted subclasses of trees. In particular we show
the following theorem.
\begin{theorem}
    \label{thm:mainHardness}
    Assuming the Exponential Time Hypothesis there is no $f(b)n^{o(b)}$ time
    algorithm for {\sc Bandwidth} of trees of pathwidth at most $2$.
\end{theorem}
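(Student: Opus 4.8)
The plan is to reduce from a problem that is already known to have no $f(k)n^{o(k)}$ algorithm under ETH, the natural candidate being $k\times k$ \textsc{Grid Tiling} or, more directly in the bandwidth literature, a suitably parameterized CSP / \textsc{$k$-Colored Subgraph} type problem whose instances of ``complexity'' $k$ encode $\Theta(k)$-ary relations. The cleanest route is to start from a problem whose ETH lower bound is of the form $n^{o(k/\log k)}$ or $n^{o(k)}$ on structured instances — concretely, I would reduce from the \emph{problem of deciding whether $k$ numbers can be placed so that $k$ pairwise difference constraints are simultaneously satisfiable}, which is essentially what a bandwidth layout must enforce locally. So the first step is to fix the source problem and the exact form of its ETH hardness (with a reduction from $3$-SAT on $n$ variables and $m=O(n)$ clauses producing an instance with parameter $k=O(n/\log n)$, so that an $f(k)n^{o(k)}$ bandwidth algorithm would give a $2^{o(n)}$ algorithm for $3$-SAT).

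The heart of the reduction is a gadget construction. Given the source instance, I would build a tree $T$ and a bandwidth bound $b$ (with $b$ proportional to the parameter $k$ of the source instance) such that any bandwidth-$b$ layout of $T$ is forced to have a highly rigid ``skeleton.'' The standard technique here — going back to Monien's caterpillar hardness and used heavily in bandwidth reductions — is to attach long paths and ``brooms'' (high-degree stars subdivided appropriately) so that in any optimal layout these structures occupy essentially contiguous blocks of $\approx b$ positions, partitioning $\{1,\dots,n\}$ into consecutive ``columns'' of width $\Theta(b)$. Within this rigid frame, the free positions left inside each column can be used to encode the value of a source variable, and small ``transmitter'' subtrees spanning two adjacent columns are used to enforce that the encoded values are consistent and satisfy the constraints. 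Crucially, because $T$ must be a tree of pathwidth at most $2$, every gadget must itself be a tree of pathwidth $\le 2$ and the whole assembly must remain acyclic — this rules out the cycle-based gadgets common in other bandwidth reductions and forces the construction to rely on the interplay between path-like backbones (pathwidth $1$) and one extra layer of attached substructure (to reach, but not exceed, pathwidth $2$).

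The two directions of correctness are then: (i) from a satisfying assignment of the source instance, explicitly construct a layout of $T$ of bandwidth $\le b$ by laying out the skeleton column-by-column and slotting in each encoded value — this is a routine but careful ``place everything where the intended semantics says'' argument; and (ii) from any layout of bandwidth $\le b$, prove the skeleton is forced (this is where the path/broom gadgets earn their keep, via counting arguments: a length-$\ell$ path needs $\ge \ell$ consecutive-ish positions, and a vertex of degree $d$ forces its neighbors into a window of $2b+1$ positions, so $d>2b$ is impossible and $d$ close to $2b$ is nearly rigid), and then decode a satisfying assignment from the positions of the encoding vertices. The main obstacle I expect is precisely engineering the rigidity under the pathwidth-$2$ restriction: making the skeleton forced \emph{strongly enough} that the decoding in direction (ii) goes through, while keeping every gadget a pathwidth-$\le 2$ tree and keeping the blow-up linear (so $b=\Theta(k)$ and $|V(T)|=\mathrm{poly}$), since a super-linear relationship between $b$ and $k$ would weaken the conclusion from $n^{o(b)}$ to something useless. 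A secondary subtlety is handling the ``off-by-$O(b)$'' slack inherent in bandwidth layouts — the columns are only approximately contiguous — which typically requires a careful amortized/interval argument rather than exact positional bookkeeping.
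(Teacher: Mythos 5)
Your plan stops exactly where the real work begins, and in its current form it has two genuine gaps. First, the source problem is never pinned down. The paper reduces from \textsc{Clique} (via a trivial \textsc{Even Clique} wrapper), which by Chen et al.\ already has no $f(k)n^{o(k)}$ algorithm under ETH, and the reduction keeps $b = 4k+16$, so the lower bound transfers directly. Your alternatives --- \textsc{Grid Tiling}, an unspecified CSP, or a ``$k$ numbers with pairwise difference constraints'' problem, or a direct 3-SAT reduction with $k = O(n/\log n)$ --- are either not problems with an established lower bound of the required form or require the much harder grouping technique; choosing \textsc{Clique} is not a cosmetic detail, it is what makes the parameter accounting work with a linear blowup.

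Second, and more importantly, the constraint-checking mechanism is missing. Saying that ``transmitter subtrees spanning two adjacent columns enforce that the encoded values are consistent and satisfy the constraints'' does not explain how a tree of pathwidth $2$, with only $\Theta(k)$ slack per window of the layout, can verify all $\binom{k}{2}$ adjacency constraints among the $k$ selected vertices. This is the crux of the paper's construction: $k$ long \emph{threads} are forced (by walls, long gatelands of near-maximum-degree stars, and filler paths that exhaust every free position) to run in lockstep through a stretched main path, each thread carries an encoding of an entire row of the adjacency matrix, the \emph{selector} forces each thread's knot into a distinct hole (choosing a vertex), and the \emph{validator} aligns row $i$ of the chosen thread with a second bank of holes so that a knot and a non-neighbour leaf cannot coexist in one hole --- that is how a single $O(k)$-width window simultaneously checks all pairs. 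Your broom/column picture is in the same spirit as the paper's walls and gates, but the rigidity lemmas (that passing paths place exactly one vertex between consecutive stretched path vertices, that gates leave no spare positions, that knots cannot sit inside gates) and the adjacency-matrix-on-threads idea are precisely the content of the proof, and none of it is supplied or even sketched concretely enough to check. As written, the proposal is a statement of intent that acknowledges the main obstacle (forced rigidity under the pathwidth-$2$ and linear-blowup constraints) without resolving it, so it does not establish the theorem.
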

The proof of Theorem~\ref{thm:mainHardness} also implies that {\sc Bandwidth}
is $W[1]$-hard on trees of pathwidth at most $2$ (see~\cite{downey1999parameterized,FlumGroheBook,Niedermeier2006} for an
introduction to parameterized complexity).

As a counterweight to the bad news of Theorem~\ref{thm:mainHardness} we give
the first approximation algorithm for {\sc Bandwidth} of trees whose
approximation ratio depends only on the bandwidth $b$, and not on the size of
the graph. Specifically we give a polynomial time algorithm that given as input
a tree $T$ and integer $b$ either correctly concludes that the bandwidth of $T$
is greater than $b$ or outputs a layout of width at most $b^{O(b)}$. A key subroutine of our algorithm for trees is an approximation algorithm for the bandwidth of caterpillars with ratio $O(b^3)$. 
Our algorithm for trees outperforms the $(\log n)^{9/4}$-approximation algorithm of
Gupta~\cite{Gupta00} whenever $b = o(\frac{\log \log n}{\log \log \log n})$.
Our algorithm is the first {\em parameterized approximation} algorithm for the
{\sc Bandiwth} problem on trees, that is an algorithm with approximation ratio
$g(b)$ and running time $f(b)n^{O(1)}$. A parameterized approximation algorithm
for the closely related {\sc Topological Bandwidth} problem has been known for
awhile~\cite{Marx08}, while the existence of a parameterized approximation
algorithm for {\sc Bandwidth}, even on trees was unknown prior to this work. 

An interestng aspect of our approximation algorithm is the way we lower bound
the bandwidth of the input tree $T$. It is well known that the bandwidth of a
graph $G$ is lower bounded by its {\em pathwidth}, and by its {\em local
density}\footnote{A definition of these notions can be found in the
preliminaries}. One might wonder how far these lower bounds could be from the
true bandwidth of $G$. It was conjectured that the answer to this question is
``not too far'', in particular that any graph with pathwidth $c_1$ and local
density $c_2$ would have bandwidth at most $c_3$ where $c_3$ is a constant
depending only on $c_1$ and $c_2$. Chung and Seymour~\cite{ChungS89} gave a
counterexample to this conjecture by constructing a special kind of trees,
called {\em cantor combs}, with pathwidth $2$, local density at most $10$, and
bandwidth approximately $\frac{\log n}{\log \log n}$. Our approximation
algorithm essentially shows that the only structures driving up the bandwidth
of a tree are pathwidth, local density and cantor comb-like subgraphs.

\smallskip
\noindent
{\bf Related Work.}  
There is a vast literature on the {\sc Bandwidth} problem. For example the problem has been extensively studied from the perspective of 
approximation algoritms~\cite{dubey2011hardness,DunaganV01,Feige00,feige2009approximating,Gupta00}, 
parameterized complexity~\cite{BodlaenderFH94,GolovachHKLMS11,saxe1980dynamic}, 
polynomial time algorithms on restricted classes of graphs~\cite{assmann1981bandwidth,HeggernesKM09,kleitman1990computing,yan1997bandwidth}, 
and graph theory~\cite{ChinnOldBandwidthSurvey,ChungS89}. We focus here on the study of algorithms for {\sc Bandwidth} for small values of $b$.

Following the~$2^{O(b)}n^{b+1}$ time algorithm of Saxe~\cite{saxe1980dynamic}, published in 1980,
there was no progress on algorithms for the recognition of graphs of constant
bandwidth. With the advent of parameterized complexity in the late 80's and
early 90's~\cite{downey1999parameterized} it became an intriguing open problem whether one could
improve the algorithm of Saxe to remove the dependency on $b$ in the exponent
of $n$, and obtain a $f(b)n^{O(1)}$ time algorithm.

In a seminal paper from 1994, Bodlaender, Fellows, and
Hallet~\cite{BodlaenderFH94} proved that a number of layout problems do
not admit fixed parameter tractable algorithms unless $\cFPT=\cW{t}$ for every $t
\geq 1$, a collapse considered by many to be almost as unlikely as $\cP=\cNP$. In the
same paper Bodlaender, Fellows, and Hallet~\cite{BodlaenderFH94} claim that
their techniques can be used to show that a $f(b)n^{O(1)}$ time algorithm for
{\sc Bandwidth} would also imply $\cFPT=\cW{t}$ for every $t \geq 1$. Downey and
Fellows~(\cite{downey1999parameterized}, page 468) further claim that the techniques
of~\cite{BodlaenderFH94} imply that even fixed parameter algorithm for {\sc
Bandwidth} {\em on trees} would yield the same collapse. Unfortunately a full
version of~\cite{BodlaenderFH94} substantiating these claims is yet to appear.



\section{Preliminaries}
\label{section:prelim}

All graphs in this paper are undirected and unweighted. For a graph $G$, we
will use the notation $V(G)$ and $E(G)$ for the vertex set and edge set
respectively. Or just $V$ and $E$ whenever the graph is clear from the
context. The degree of a vertex $v$ is denoted by $\deg(v)$ and the maximum
degree in a graph by $\deg(G)$. By $\diam(G)$ we will mean the diameter of a
graph $G$.  A \emph{clique} of size $n$, denoted $K_n$ is a graph where every
pair of vertices are connected by an edge. We will use the notation $P_l$ to
describe a path of length $l$ and $\hat{P}_l$ for a specific instance of
$P_l$. When we need to index paths this will be done by superscript, i.e.
$P^i$. For two graphs $G$ and $H$, we say that $H$ is a \emph{subgraph} of $G$
if $V(H) \subseteq V(G)$ and $E(H) \subseteq E(G)$.  Furthermore, we say that
$H$ is an induced subgraph of $G$ if $V(H) \subseteq V(G)$ and $E(H) = E(G)
\cap V(H)^2$. An induced subgraph of $G$ whose vertices are $X$ is denoted by
$G[X]$. When removing a set of vertices $X$ from a graph $G$, we will use the
notation $G-X$ for the graph $G[V(G) \setminus X]$. And furthermore, if we are
removing a single vertex $v$ we will write this as $G-v$, and this is short
for $G- \{v\}$.

If a function $f$ is defined on a set $X$ and $Y \subseteq X$ we will use the
notation $f(Y)$ for $\cup_{y \in Y} f(y)$. When it is clear from the context
that we are referring to a vertex set of a graph, we will refer to just the
graph. Furthermore, when a function $f$ is defined on the vertex set of a
graph, we will sometimes use the sloppy notation $f(G)$ instead of $f(V(G))$.

For intervals of natural numbers we will use the notation $[n]$ for the
interval $[1, \dots, n]$. A $k$-coloring of a graph $G$ is a function from
$V(G)$ to $[k]$ such that two adjacent vertices are given different values.
The chromatic number of $G$, denoted $\chi(G)$ is the minimum $k$ such that
there is a $k$-coloring of $G$.

\subsection*{Graph Classes}
A tree is a connected graph without any cycles. A caterpillar is a tree $T$
with a path $B$ as a subgraph, such that all vertices of degree $3$ or more
lie on $B$. We then say that $B$ is a backbone of $T$ and every connected
component of $T-B$ is a stray or a hair. We say that a caterpillar is of stray
length $s$ if there exists a backbone such that all strays are of size at most
$s$.  An interval graph is a graph such that there exists a function from
$V(G)$ into intervals of $\mathbb{N}$ such that the images of two vertices have
a non-empty intersection if and only if the two vertices are adjacent.

\subsubsection*{Decompositions}
A \emph{tree decomposition} $\TD$ of a graph $G$ is a pair $(T, X)$ with $T = 
(I, M)$ being a tree and $X = \{X_i \mid i \in I\}$ a collection of subsets of 
$V$ such that:

\begin{enumerate}
  \item $\bigcup_{i \in I} X_i = V$,
  \item for every edge $uv$ there is a bag $X_i$ such that both $u$ and $v$ are 
    contained in $X_i$ and
  \item for every vertex $v \in V$ the set $\{i \in I \mid v \in X_i\}$ induces 
    a tree in $T$.
\end{enumerate}
The \emph{treewidth} of a tree decomposition $\TD$, denoted $\tw(G, \TD) = 
\max_{i \in I} |X_i| - 1$ and the treewidth of a graph $G$ is defined as 
$\tw(G) = \min \{ \tw(G, \TD) \mid \TD \text{ is a tree decomposition of } 
G\}$.  A \emph{path decomposition} $\PD$ of a graph is a tree decomposition 
such that $T$ is a path. And the \emph{pathwidth} of a graph $G$, denoted 
$\pw(G)$ is the minimum width over all path decompositions.

\subsubsection*{Orderings and Bandwidth}
A \emph{linear ordering} or \emph{layout} $\alpha$ of a set $S$ is a bijection
between $S$ and $[|S|]$. Given a graph $G = (V,E)$ and a linear ordering
$\alpha$ over $V$, the \emph{bandwidth} of $\alpha$ denoted $\bw(G, \alpha) =
\max_{uv \in E} |\alpha(u) - \alpha(v)|$.  And furthermore, the bandwidth of
$G$ denoted $\bw(G) = \min \{\bw(G,\alpha) \mid \alpha \text{ is a linear
ordering over } V \}$. We say that $\alpha$ is a \emph{$k$-bandwidth ordering}
of a graph $G$ if $\bw(G, \alpha) \leq k$. And we say that a bandwidth ordering
$\alpha$ of $G$ is optimal if $\bw(G, \alpha) = \bw(G)$.

Let $u$ and $v$ be a pair of vertices of a graph $G$ and $\alpha$ an ordering
of $V(G)$. We then say that $u$ is left of $v$ in $\alpha$ if $\alpha(u) <
\alpha(v)$ and that $u$ is right of $v$ if $\alpha(v) < \alpha(u)$. A sparse ordering
$\beta$ of a graph $G$ is an injective function from $V(G)$ to $\mathbb{Z}$.
And the bandwidth of a sparse ordering $\beta$ of $G$, denoted $\bw(G, \beta)
= \max_{uv \in E} |\beta(u) - \beta(v)|$. We say that a linear ordering $\alpha$ of
$G$ is a compression of a sparse ordering $\beta$ of $G$ if for every pair of
vertices $u,v$ in $G$ it holds that $\beta(u) < \beta(v)$ if and only if
$\alpha(u) < \alpha(v)$.

\begin{definition}
    \label{def:local-denisity}
    For a graph $G$ we define the local density of $G$ as
    \[D(G) = \max_{G' \subseteq G} \frac{|V(G')| - 1}{\diam(G')}.\]
\end{definition}

The following proposition will be used repeatedly in our arguments.

\begin{proposition}[Folklore] For every graph $G$ it holds that $D(G) \leq \bw(G)$ and $\pw(G) \leq \bw(G)$.
\end{proposition}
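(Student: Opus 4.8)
The plan is to prove the two inequalities $D(G) \leq \bw(G)$ and $\pw(G) \leq \bw(G)$ separately, in both cases by taking an optimal bandwidth layout $\alpha$ of $G$ and extracting from it the combinatorial object that witnesses the bound. Fix $b = \bw(G)$ and let $\alpha : V(G) \to [n]$ be a layout with $\bw(G,\alpha) = b$.

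For the local density bound, I would argue as follows. Let $G'$ be an arbitrary subgraph of $G$; write $V(G') = \{v_1,\dots,v_m\}$ with $\alpha(v_1) < \alpha(v_2) < \dots < \alpha(v_m)$. The key observation is that since $G'$ is connected (we may assume $G'$ is connected, since the diameter and hence the ratio is maximized on a connected subgraph — or more carefully, restrict attention to a connected subgraph realizing the max), any two of its vertices are joined by a path in $G'$, and along any edge $uv$ of that path $|\alpha(u)-\alpha(v)| \leq b$. Hence if $u,v \in V(G')$ are at distance $d$ in $G'$, then $|\alpha(u)-\alpha(v)| \leq bd$. Taking $u = v_1$ and $v = v_m$, and noting $\alpha(v_m) - \alpha(v_1) \geq m-1$ because the $m$ values are distinct integers, we get $m - 1 \leq b \cdot \mathrm{dist}_{G'}(v_1,v_m) \leq b \cdot \diam(G')$, so $\frac{|V(G')|-1}{\diam(G')} \leq b$. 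Since $G'$ was arbitrary, $D(G) \leq b$.

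For the pathwidth bound, I would build an explicit path decomposition of width $b$ from $\alpha$. For $i \in [n-b]$ (adjusting the range appropriately, or padding), let $X_i = \alpha^{-1}(\{i, i+1, \dots, i+b\})$, a "sliding window" of $b+1$ consecutive vertices in the layout, and take the bags in the natural path order $X_1, X_2, \dots$. Each bag has size $b+1$, so the width is $b$. One then checks the three axioms of a (path) decomposition: every vertex $v$ lies in bag $X_i$ exactly for $i$ in the contiguous range $[\alpha(v)-b, \alpha(v)]$ (intersected with the valid index range), giving properties (1) and (3); and for every edge $uv$, since $|\alpha(u)-\alpha(v)| \leq b$, the window starting at $\min(\alpha(u),\alpha(v))$ contains both endpoints, giving property (2). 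Hence $\pw(G) \leq b = \bw(G)$.

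Neither direction presents a real obstacle — this is genuinely folklore — but the one point requiring a little care is the reduction to connected subgraphs in the density argument: one should note that for a disconnected $G'$ the diameter is infinite (or undefined), so the maximum in the definition of $D(G)$ is effectively taken over connected subgraphs, and on those the path-lifting inequality $|\alpha(u)-\alpha(v)| \leq b \cdot \mathrm{dist}_{G'}(u,v)$ applies cleanly. A symmetric care point in the pathwidth argument is handling the boundary indices of the sliding window so that every vertex is covered and the index set stays contiguous; padding the layout or letting $i$ range over all of $\mathbb{Z}$ with $X_i = \alpha^{-1}(\{i,\dots,i+b\} \cap [n])$ sidesteps this entirely.
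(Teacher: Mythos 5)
Your proof is correct, and both halves (the path-lifting bound $|\alpha(u)-\alpha(v)|\leq b\cdot\mathrm{dist}_{G'}(u,v)$ for local density, and the sliding-window bags $\alpha^{-1}(\{i,\dots,i+b\})$ for pathwidth) are exactly the standard folklore arguments. The paper states this proposition without proof, so there is nothing to compare against beyond noting that your write-up, including the care about connected subgraphs and window boundary indices, is complete and sound.
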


For a graph $T$, an integer $b$ and a $b$-bandwidth ordering $\alpha$ we
provide the following definitions. Given a set of vertices $Y \subseteq V(T)$
we define the \emph{inclusion interval of $Y$}, denoted $I(Y)$ as $[\min
\alpha(Y), \max \alpha(Y)]$ and for two vertices $u$ and $v$ we define
$I(u,v)$ as $I(\{u,v\})$ or equivalently $[\min\{\alpha(u),\alpha(v)\},
\max\{\alpha(u), \alpha(v)\}]$. Given a subgraph $H$ of $T$ we define $I(H)$
as $I(V(H))$. Whenever necessary, we will use subscript to avoid confusion
about which ordering is considered.

\subsubsection*{Problems}

We will differentiate the parametrized version of a problem (parameterized by
the natural parameter) from the classical one by putting a $p$ in front of the
name, i.e. \prob{$p$-Bandwidth} is the parameterized version of
\prob{Bandwidth}. We will face two other problems in this paper. The first one
is \prob{Clique}, where given a graph $G$ and an integer $k$, one is asked
whether there is a clique of size $k$ in $G$. The second one is \prob{Even
Clique}, which is an instance of \prob{Clique} where you are promised that $k$
is an even number. Both of the problems will be discussed in their parametrized
form.

\newcommand{\gcenter}{\textit{center}}
\newcommand{\gate}[1]{\Pi_{#1}}
\newcommand{\path}[1]{\hat{P}_{#1}}
\newcommand{\gin}{\textit{in}}
\newcommand{\gout}{\textit{out}}
\newcommand{\icenter}{\textit{in center}}
\newcommand{\ocenter}{\textit{out center}}
\newcommand{\gfirst}{\textit{first}}
\newcommand{\glast}{\textit{last}}
\newcommand{\lcenter}{\textit{lcenter}}
\newcommand{\rcenter}{\textit{rcenter}}

\section{Lower Bounds}
In this section we will give a reduction from \prob{$p$-Even Clique} to
\prob{$p$-Bandwidth} with a linear blowup of the parameter. For the rest of
this section we will refer to the parameter of the instance of \prob{$p$-Even
Clique} as $k$ and the parameter of the resulting \prob{$p$-Bandwidth} instance
as $b = 4k+16$. Before we continue, we introduce some definitions we will use
throughout the section. For a subpath $\path{l} = \{v_1, \dots, v_l\}$ of a graph $T$
we say that $\path{l}$ is \emph{stretched} with respect to a $b$-bandwidth
ordering $\alpha$ if $|\alpha(v_{i+1}) - \alpha(v_i)| = b$ for every $i \in [1,
    l)$.  Observe that as $\alpha$ is injective, stretched implies either
    $\alpha(v_1) < \alpha(v_2) < \dots < \alpha(v_l)$ or $\alpha(v_l) < \dots <
    \alpha(v_2) < \alpha(v_1)$.  Furthermore, we say that a path $P$
    \emph{passes through} some subgraph $H$ in $\alpha$ if $I(H) \subseteq
    I(P)$.

\subsection{A Gentle Introduction to the Reduction}
We will now give an informal description of the reduction. We hope it will
provide the reader with some intuition of why \prob{$p$-Bandwidth} is as hard
as it is. As already mentioned, the reduction will be from instances $(G,k)$ of
\prob{$p$-Even Clique} to instances $(T,b)$ of \prob{$p$-Bandwidth}. To obtain
the results of Theorem~\ref{thm:mainHardness} we must first of all ensure that
$(G,k)$ is a yes-instance if and only if $(T,b)$ is a yes-instance. And
furthermore, we require $T$ to be a tree of size polynomial in $|V(G)|$ and
$k$,
and that the path-width of $T$ is at most $2$. Last, $b$ must be of size
$O(k)$.

We start, by providing some boundaries for $b$-bandwidth orderings of $T$.
Meaning that we force specific parts of $T$ to be the leftmost and rightmost
elements of every such ordering.  This is done by introducing two stars with
$2b$ leafs and adding a path from one of the leafs of the first star to one
of the leafs of the second. The two stars will be referred to as walls and the
path between them as the main path.  Observe that for both of the walls, the
leafs must occupy the $2b$ values closest to the value of the center in any
$b$-bandwidth ordering. It follows that the main path must be within the
inclusion interval of the two walls, since otherwise the main path would be
stretched all to long at some edge passing through a wall.

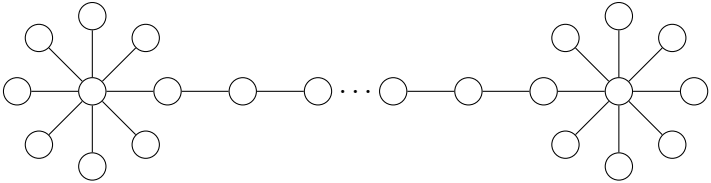
\begin{figure}[ht!]
  \centering
  \begin{tikzpicture}
    \tikzset{scale=1}
    \tikzset{Vertex/.style={shape=circle,draw,scale=1}}
    \tikzset{Edge/.style={}}
    
    \foreach \e/\x/\y [count=\k] in {
    0/0/0,
    2/2/0,
    3/3/0,
    4/4/0,
    5/5/0,
    7/7/0,
    w11/0.71/0.71,
    w12/1/0,
    w13/-0.71/0.71,
    w14/0/1,
    w15/-0.71/-0.71,
    w16/-1/0,
    w17/0.71/-0.71,
    w18/0/-1,
    w21/7.71/0.71,
    w22/8/0,
    w23/6.29/0.71,
    w24/7/1,
    w25/6.29/-0.71,
    w26/6/0,
    w27/7.71/-0.71,
    w28/7/-1
    }
    \node[Vertex] (\e) at (\x,\y) {};

    \foreach \a/\b in 
    {{0/w11},{0/w12},{0/w13},{0/w14},{0/w15},{0/w16},{0/w17},{0/w18},
    {7/w21},{7/w22},{7/w23},{7/w24},{7/w25},{7/w26},{7/w27},{7/w28},
    {w12/2},{2/3},{4/5},{5/w26}}
    \draw[Edge](\a) to node {} (\b);

    \node[] () at (3.5,0) {$\hdots$};
    
  \end{tikzpicture}
  \caption{An illustration of the walls for $b = 4$.}
  \label{figures:intro-walls}
\end{figure}

We are now controlling the first and last vertices in any $b$-bandwidth
ordering of the graph and hence it is time to start encoding our instance of
\prob{$p$-Even Clique}. To keep control, the rest of $T$ will
be attached to the main path. Before we continue, we select one of the walls
and base an ordering of the reduction graph on this selection. This wall will
from now on be referred to as the first wall and the other wall will be referred to
as the last wall. We now attach $k$ paths, from now on referred to as threads,
to the vertex of the main path that is also a leaf of the first wall. Each
thread will encode a selection of a vertex in $G$, and then we
will check whether this set of vertices in fact forms a clique or not.

\begin{figure}[ht!]
  \centering
  \begin{tikzpicture}
    \tikzset{scale=1}
    \tikzset{Vertex/.style={shape=circle,draw,scale=1}}
    \tikzset{Edge/.style={}}
 
    \foreach \e/\x/\y [count=\k] in {
    0/0/0,
    2/2/0,
    3/3/0,
    4/4/0,
    5/5/0,
    7/7/0,
    w11/0.71/0.71,
    w12/1/0,
    w13/-0.71/0.71,
    w14/0/1,
    w15/-0.71/-0.71,
    w16/-1/0,
    w17/0.71/-0.71,
    w18/0/-1,
    w21/7.71/0.71,
    w22/8/0,
    w23/6.29/0.71,
    w24/7/1,
    w25/6.29/-0.71,
    w26/6/0,
    w27/7.71/-0.71,
    w28/7/-1
    }
    \node[Vertex] (\e) at (\x,\y) {};

    \foreach \a/\b in 
    {{0/w11},{0/w12},{0/w13},{0/w14},{0/w15},{0/w16},{0/w17},{0/w18},
    {7/w21},{7/w22},{7/w23},{7/w24},{7/w25},{7/w26},{7/w27},{7/w28},
    {w12/2},{2/3},{4/5},{5/w26}}
    \draw[Edge](\a) to node {} (\b);

    \node[] () at (3.5,0) {$\hdots$};

    \foreach \e/\x/\y [count=\k] in {
    p11/1.5/1,
    p12/1.5/2,
    p1l1/1.5/3,
    p1l2/1.5/4,
    p21/2.5/1,
    p22/2.5/2,
    p2l1/2.5/3,
    p2l2/2.5/4,
    pk1/3.5/1,
    pk2/3.5/2,
    pkl1/3.5/3,
    pkl2/3.5/4
    }
    \node[Vertex] (\e) at (\x,\y) {};

    \foreach \a/\b in 
    {{w12/p11},{p11/p12},{w12/pk1},{pk1/pk2},{p1l1/p1l2},{pkl1/pkl2},
    {w12/p21},{p21/p22},{p2l1/p2l2}}
    \draw[Edge](\a) to node {} (\b);

    \node[] () at (1.5,2.6) {$\vdots$};
    \node[] () at (2.5,2.6) {$\vdots$};
    \node[] () at (3.5,2.6) {$\vdots$};

    \draw [decorate,decoration={brace,amplitude=10pt},xshift=0pt,yshift=0pt]
    (1.3,4.3) -- (3.7,4.3) node [black,midway,yshift=18pt] {$k$ paths};
  
  \end{tikzpicture}
  \caption{We will use $k$ paths to encode the selection of vertices to be in
  the clique.}
  \label{figures:reduction-frame}
\end{figure}
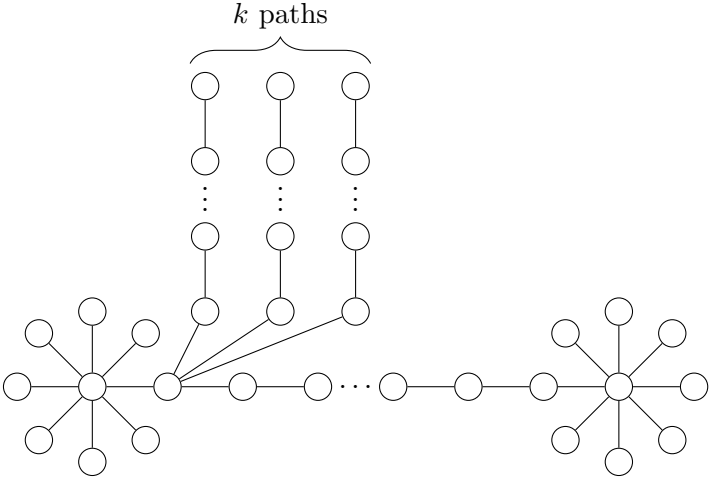

To control how information propagates through a bandwidth ordering, we
introduce gates. A $k$-gate is a vertex on the main path with $2(b-k-1)$ leafs
attached to it, that is in addition to the two neighbours it has on the main
path. The goal is to force every thread to pass through every $k$-gate. Then
every thread will position two vertices within the
positions of distance at most $b$ away from the center of the gate. And hence
there will be $2(b-k-1) + 2k + 2 = 2b$ vertices that have to be positioned
close to the center, leaving no available room.

A hole is basically two vertices on the main
path with some extra space in between. This extra space is obtained by
attaching not so many leafs to the two vertices. A knot is a large star
centered at one of the threads. The idea is that a knot requires so much space
that it cannot be positioned close to a gate. And hence, if a subpath of the
main path consists of only gates and holes, a knot that is to be positioned
within the inclusion interval of this subpath must be positioned within the
hole.

Before we start the process of embedding gadgets on the main path and the threads,
we need a guarantee ensuring that any resulting bandwidth ordering will behave
nicely. Consider the following situation, we have a graph $T$ and a
$b$-bandwidth ordering $\alpha$ of $T$. $T$ contains $k+1$ disjoint paths, one
of the paths $P$ being of length $l$ such that all the other paths are passing
through $P$ in $\alpha$. In addition there is a set of $(l-1)(b-k-1)$ vertices
$X$ disjoint from all the paths, such that the image of $X$ is contained in
the inclusions interval of $P$.
Lemma~\ref{lemma:passing-paths-are-well-behaved} then tells us that that $P$
must be stretched with respect to $\alpha$, meaning that the vertices of
$P$ appear in the same order in $T$ as in $\alpha$ up to reversion and that
the distance between two consecutive vertices is $b$. Furthermore, each of the
paths passing through will position exactly one vertex in between any two
consecutive vertices of $P$. As the reader probably can image, we will apply
this result with the main
path as $P$ and the threads as the paths passing through. This will ensure that
how and in which order the vertices appear in $\alpha$ is highly similar to how
they are ordered in $T$.

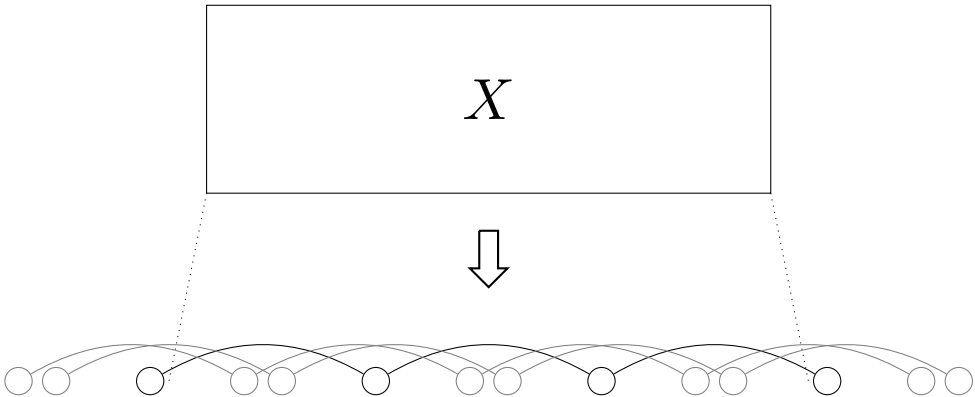
\begin{figure}[ht!]
  \centering
  \begin{tikzpicture}
    \tikzset{scale=0.25}
    \tikzset{Vertex/.style={shape=circle,draw,scale=1}}
    \tikzset{Edge/.style={}}

    \node[scale=2] at (18,15) {$X$};
    \draw (3,10) rectangle(33,20);
    \draw[dotted] (3,10) -- (1,0);
    \draw[dotted] (33,10) -- (35,0);
    \draw[thick] (17.5,8) -- (18.5,8) -- (18.5,6) -- (19,6) -- (18,5) 
    -- (17,6) -- (17.5,6) -- (17.5,8);
 
    \foreach \e/\x/\y [count=\k] in {
        m1/0/0,
        m2/12/0,
        m3/24/0,
        m4/36/0
    }
    \node[Vertex] (\e) at (\x,\y) {};

    \foreach \e/\x/\y [count=\k] in {
        p10/-7/0,
        p11/5/0,
        p12/17/0,
        p13/29/0,
        p14/41/0,
        p20/-5/0,
        p21/7/0,
        p22/19/0,
        p23/31/0,
        p24/43/0
    }
    \node[Vertex, color=gray] (\e) at (\x,\y) {};

    \draw[Edge, bend left](m1) to node {} (m2);
    \draw[Edge, bend left](m2) to node {} (m3);
    \draw[Edge, bend left](m3) to node {} (m4);

    \draw[Edge, bend left, color=gray](p10) to node {} (p11);
    \draw[Edge, bend left, color=gray](p11) to node {} (p12);
    \draw[Edge, bend left, color=gray](p12) to node {} (p13);
    \draw[Edge, bend left, color=gray](p13) to node {} (p14);

    \draw[Edge, bend left, color=gray](p20) to node {} (p21);
    \draw[Edge, bend left, color=gray](p21) to node {} (p22);
    \draw[Edge, bend left, color=gray](p22) to node {} (p23);
    \draw[Edge, bend left, color=gray](p23) to node {} (p24);
 
  \end{tikzpicture}
  \caption{An illustration of Lemma~\ref{lemma:passing-paths-are-well-behaved}.
  The black path is $P$ and the grey are the ones passing through $P$.}
  \label{figures:nicely}
\end{figure}

We will now start to embed gadgets. First we introduce three long sequences of
gates on the main path. These sequences naturally partitions our graph into nine
sectors. We will refer to them as the first wall, the
first wasteland, the first gateland, the selector, the middle gateland the
validator, the last gateland, the last wasteland and the last wall. See
Figure~\ref{figures:sectors} for an illustration. By making
the threads very long, one can force them to pass through every gate. This
together with the lemma described above implies that the sectors will appear in
the same order in any $b$-bandwidth ordering as they do in the graph up to reversion.

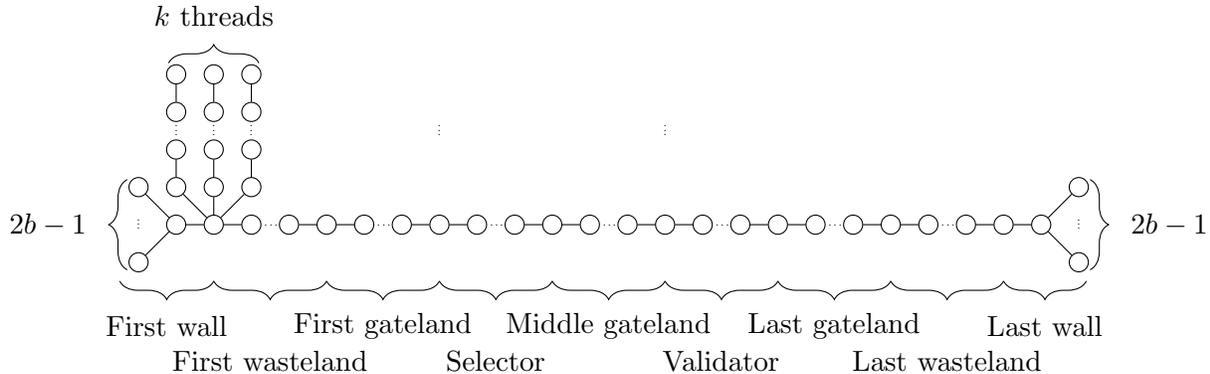
\begin{figure}[ht!]
  \centering
  \begin{tikzpicture}
    \tikzset{scale=0.5}
    \tikzset{Vertex/.style={shape=circle,draw,scale=0.7}}
    \tikzset{Edge/.style={}}
    \tikzset{Dots/.style={scale=0.4}}

    \foreach \e/\x/\y [count=\k] in {
    0/0/0,
    1/1/0,
    2/2/0,
    3/3/0,
    4/4/0,
    5/5/0,
    6/6/0,
    7/7/0,
    8/8/0,
    9/9/0,
    10/10/0,
    11/11/0,
    12/12/0,
    13/13/0,
    14/14/0,
    15/15/0,
    16/16/0,
    17/17/0,
    18/18/0,
    19/19/0,
    20/20/0,
    21/21/0,
    22/22/0,
    23/23/0,
    p11/0/1,
    p12/0/2,
    p1l1/0/3,
    p1l2/0/4,
    p21/1/1,
    p22/1/2,
    p2l1/1/3,
    p2l2/1/4,
    pk1/2/1,
    pk2/2/2,
    pkl1/2/3,
    pkl2/2/4,
    lw1/-1/1,
    lw2/-1/-1,
    rw1/24/1,
    rw2/24/-1
    }
    \node[Vertex] (\e) at (\x,\y) {};

    \foreach \a/\b in 
    {{0/1},{1/2},{3/4},{4/5},{6/7},{7/8},{9/10},{10/11},{12/13},{13/14},
    {15/16},{16/17},{18/19},{19/20},{21/22},{22/23},
    {1/p11},{p11/p12},{1/pk1},{pk1/pk2},{p1l1/p1l2},{pkl1/pkl2},
    {1/p21},{p21/p22},{p2l1/p2l2}, 
    {0/lw1},{0/lw2},{23/rw1},{23/rw2}}
    \draw[Edge](\a) to node {} (\b);

    \node[Dots] () at (0,2.6) {$\vdots$};
    \node[Dots] () at (2,2.6) {$\vdots$};
    \node[Dots] () at (1,2.6) {$\vdots$};
    \node[Dots] () at (7,2.6) {$\vdots$};
    \node[Dots] () at (13,2.6) {$\vdots$};
    \node[Dots] () at (2.55, 0) {$\dots$};
    \node[Dots] () at (5.55, 0) {$\dots$};
    \node[Dots] () at (8.55, 0) {$\dots$};
    \node[Dots] () at (11.55, 0) {$\dots$};
    \node[Dots] () at (14.55, 0) {$\dots$};
    \node[Dots] () at (17.55, 0) {$\dots$};
    \node[Dots] () at (20.55, 0) {$\dots$};
    \node[Dots] () at (-1, 0.1) {$\vdots$};
    \node[Dots] () at (24, 0.1) {$\vdots$};

    \draw [decorate,decoration={brace,amplitude=7pt},xshift=0pt,yshift=0pt]
    (-0.2,4.3) -- (2.2,4.3) node [black,midway,yshift=18pt] {$k$ threads};

    \draw [decorate,decoration={brace,amplitude=7pt},xshift=0pt,yshift=0pt]
    (-1.3,-1.2) -- (-1.3,1.2) node [black,midway,xshift=-30pt] {$2b-1$};
    
    \draw [decorate,decoration={brace,amplitude=7pt},xshift=0pt,yshift=0pt]
    (24.3,1.2) -- (24.3,-1.2) node [black,midway,xshift=30pt] {$2b-1$};
    
    \draw [decorate,decoration={brace,amplitude=7pt},xshift=0pt,yshift=0pt]
    (1,-1.5) -- (-1.5,-1.5) node [black,midway,yshift=-17pt] {First wall};
    
    \draw [decorate,decoration={brace,amplitude=7pt},xshift=0pt,yshift=0pt]
    (4,-1.5) -- (1,-1.5) node [black,midway,yshift=-30pt] {First wasteland};
    
    \draw [decorate,decoration={brace,amplitude=7pt},xshift=0pt,yshift=0pt]
    (7,-1.5) -- (4,-1.5) node [black,midway,yshift=-17pt] {First gateland};
    
    \draw [decorate,decoration={brace,amplitude=7pt},xshift=0pt,yshift=0pt]
    (10,-1.5) -- (7,-1.5) node [black,midway,yshift=-30pt] {Selector};
    
    \draw [decorate,decoration={brace,amplitude=7pt},xshift=0pt,yshift=0pt]
    (13,-1.5) -- (10,-1.5) node [black,midway,yshift=-17pt] {Middle gateland};
    
    \draw [decorate,decoration={brace,amplitude=7pt},xshift=0pt,yshift=0pt]
    (16,-1.5) -- (13,-1.5) node [black,midway,yshift=-30pt] {Validator};
    
    \draw [decorate,decoration={brace,amplitude=7pt},xshift=0pt,yshift=0pt]
    (19,-1.5) -- (16,-1.5) node [black,midway,yshift=-17pt] {Last gateland};
    
    \draw [decorate,decoration={brace,amplitude=7pt},xshift=0pt,yshift=0pt]
    (22,-1.5) -- (19,-1.5) node [black,midway,yshift=-30pt] {Last wasteland};
    
    \draw [decorate,decoration={brace,amplitude=7pt},xshift=0pt,yshift=0pt]
    (24.2,-1.5) -- (22,-1.5) node [black,midway,yshift=-17pt] {Last wall};
  \end{tikzpicture}
  \caption{The sectors of our reduction graph.}
  \label{figures:sectors}
\end{figure}

We aim at forcing a large set over vertices to 
be embedded in between the fist and the last wasteland. It follows that this part of
the main path will be stretched and every thread will position exactly one vertex
in between every two consecutive vertices of the main path. Recall that the
threads are to encode which vertices we take as our clique. This will be done
by how much of the thread is positioned within the inclusion interval of the
first wasteland before it starts its journey towards the last wasteland. And
the job of the wastelands are exactly this, to handle the slack produced by
different choices of vertices to form the clique.

We now describe how we enforce the selection of vertices in a manner that
allows us to extract this information in a useful way in the validator. First,
we order the vertices of $G$ by labeling them with numbers from $1$ to $n$.
Basically, we want there to be a linear function describing the number
of vertices positioned in the first wasteland given the label of the vertex
this thread choose. This is obtained by embedding $n$ holes within the
selector, with a certain number of gates in between every pair of consecutive
holes. Then we embed a knot on each thread. The idea is that
each thread must position its knot within a hole and every hole can contain at
most one knot. Which hole the knot is positioned within gives the vertex the
thread selects for the clique.

We should now ensure that the selected vertices forms a clique in $G$.
This is done by the validator. The validator is partitioned into $2n-1$ zones.
The first $n-1$ and last $n-1$ zones are referred to as neutral zones and nothing is
embedded on this part of the main path. The middle zone is referred to as the
validation zone. Like the selector, also the validator zone consists of $n$
holes separated by a series of gates. Now the idea is to embed the
adjacency matrix of $G$ on the threads row by row in such a way that if vertex
$i$ is selected by the thread, then the part representing row number $i$ of the
matrix is positioned within the validator zone. The matrix will be represented
as follows; Partition the subpath of the thread representing row $i$ into $n$
parts. At part number $i$ we 
embed a knot. And then, for every non-neighbour $j$ we will attach a leaf to
part $j$. What will
happen is that when the vertices are selected the corresponding holes in the
validator will be filled up by knots. And then, if two vertices are not
adjacent there will also be a leaf that should be positioned within the same
hole as a knot, and this there will not be room for. Furthermore, if a vertex is not
selected there will not be a knot in the corresponding hole so that it can contain as many
leafs as necessary. The last crucial observation is that in the neutral zone,
there is room for both leafs and knot to co-exist close in the bandwidth
ordering.

The observant reader might recall that we promised some large set of vertices
that should be embedded within the first and the last wasteland. This will be
handled by attaching paths of appropriate size right after both the first and
the second gateland. By making every hole and gate within the selector and
validator into $(k+1)$-holes and $(k+1)$-gates these paths can travel around in the
two sectors filling up the remaining space. We are now done with the informal
introduction and for the details we refer to the rest of this section.

\subsection{Tools}

In this section we give some definitions and results for bandwidth which are
crucial for our reduction. 

\begin{lemma}
    \label{lemma:no-jumping}
    Let $(T,b)$ be an instance of \prob{$p$-Bandwidth} and $\path{2}, P^1,
    \dots, P^k$ be $k+1$ disjoint subpaths of $T$. Given a $b$-bandwidth
    ordering $\alpha$ such that $P^1, \dots, P^k$ pass through $\path{2}$ and
    there is a set of vertices $X$ disjoint from $\path{2}, P^1, \dots, P^k$
    such that $|X| \geq b-k-1$ and $\alpha(X) \subseteq I(\path{2})$, then
    $|\alpha(P^i) \cap I(\path{2})| = 1$ for every $i$.
\end{lemma}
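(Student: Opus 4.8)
The plan is to determine exactly how little room the hypotheses leave inside $I(\path{2})$, and then to argue that every $P^i$ is forced to spend exactly one position there. Write $\ell=\min\alpha(\path{2})$ and $r=\max\alpha(\path{2})$, so that $I(\path{2})=[\ell,r]$ contains precisely $d+1$ integer positions, where $d=r-\ell$; the two vertices of $\path{2}$ sit at $\ell$ and $r$, and since they are adjacent in $T$ we have $d\le b$. Put $a_i=\lvert\alpha(P^i)\cap I(\path{2})\rvert$. First I would prove a counting bound: by injectivity of $\alpha$ and the disjointness of $\path{2},P^1,\dots,P^k,X$, the $d+1$ positions of $I(\path{2})$ are occupied by the $2$ endpoints of $\path{2}$, by at least $b-k-1$ vertices of $X$, and by $a_i$ vertices of each $P^i$, so
\[
\sum_{i=1}^{k} a_i \;\le\; (d+1)-2-(b-k-1)\;=\;d-b+k\;\le\;k.
\]
Hence, once we know $a_i\ge 1$ for every $i$, the inequality $k\le\sum_i a_i\le k$ forces $a_i=1$ for all $i$, which is the statement.

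The core of the proof is the bound $a_i\ge 1$. Suppose $a_j=0$ for some $j$. Since $P^j$ passes through $\path{2}$ we have $[\ell,r]\subseteq I(P^j)$, and as $\ell,r$ are occupied by the disjoint subpath $\path{2}$, the path $P^j$ has a vertex whose image is $<\ell$ and a vertex whose image is $>r$; because $P^j$ is connected and, by $a_j=0$, has no vertex with image in $[\ell,r]$, some edge $xy\in E(P^j)$ satisfies $\alpha(x)\le \ell-1$ and $\alpha(y)\ge r+1$. The bandwidth bound $\alpha(y)-\alpha(x)\le b$ then yields $\alpha(x)\ge \alpha(y)-b\ge r+1-b$, so $\alpha(x)$ lies in the window $[\,r+1-b,\ \ell-1\,]$, which contains only $b-d-1$ integers. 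Now let $S=\{i:a_i=0\}$ and $s=\lvert S\rvert$. Each $i\in S$ contributes such a ``jump vertex'' $x_i$, and the $x_i$ are pairwise distinct because the $P^i$ are pairwise disjoint; hence $s\le b-d-1$. On the other hand the counting bound gives $\sum_i a_i\ge k-s$ (each $i\notin S$ has $a_i\ge 1$), so $k-s\le d-b+k$, i.e.\ $s\ge b-d$. Comparing, $b-d\le s\le b-d-1$, a contradiction; therefore $S=\varnothing$ and $a_i\ge 1$ for all $i$, which completes the argument.

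The only delicate point is exactly this lower bound $a_i\ge 1$. The naive reason a passing path cannot skip over $I(\path{2})$ — that one edge of length at most $b$ cannot jump across an interval of length roughly $b$ with a vertex on each side — only works when $d$ is within $1$ of $b$, whereas a priori the hypotheses guarantee only $d\ge b-k$. The double count above resolves this: it confines the jump vertices of all skipping paths to a window whose width $b-d-1$ shrinks precisely as $d$ shrinks, which is the regime in which the first counting bound is forcing many paths to skip. I expect the remaining work to be routine bookkeeping, mainly checking the degenerate parameter ranges (for instance when $b-k-1\le 0$, or when the window $[\,r+1-b,\ \ell-1\,]$ is already empty, so that $S=\varnothing$ is immediate).
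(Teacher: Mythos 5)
Your proof is correct. It shares the paper's opening move: counting the occupants of $I(\hat{P}_{2})$ to bound the paths' contribution (the paper derives $|I(\hat{P}_{2})\cap\alpha(\bigcup P^i)|\le k$; you keep the sharper form $\sum_i a_i\le d-b+k$). Where you genuinely diverge is in ruling out a skipping path. The paper makes an extremal choice: among the paths with $a_i=0$ it selects the one whose jump edge $(v^j_l,v^j_r)$ has leftmost left endpoint, and then observes that $X$, the two vertices of $\hat{P}_{2}$, one vertex of every $P^i$, and a second vertex of $P^j$ all land in $I(v^j_l,v^j_r)$, so an interval spanned by a single edge would need at least $b+2$ positions, contradicting bandwidth $b$. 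You instead count the skipping paths themselves: the bandwidth bound confines each skipper's left jump endpoint to the $b-d-1$ positions of $[r+1-b,\ \ell-1]$, while your refined in-interval budget forces at least $b-d$ skippers, which is absurd. Both arguments rest on the same two ingredients (the budget inside $I(\hat{P}_{2})$ and the jump edge of a skipper), but your two-window double count avoids the extremal selection, and the equalities it forces at the end additionally give $d=b$ (the edge of $\hat{P}_{2}$ is stretched) and that exactly $b-k-1$ positions of $I(\hat{P}_{2})$ are used by $X$, i.e.\ essentially the content of Corollary~\ref{corollary:passing-P2-requires-space}, which the paper proves separately; the paper's extremal-edge argument, on the other hand, is the template it reuses for Lemma~\ref{lemma:passing-paths-are-well-behaved}. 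Your closing caveat about degenerate ranges is not really needed: the bound $\sum_i a_i\le d-b+k$ only uses $|X|\ge b-k-1$ and holds regardless of the sign of $b-k-1$, and if the window $[r+1-b,\ \ell-1]$ is empty then no skipping path can exist at all, which is exactly your $S=\varnothing$ conclusion.
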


\begin{proof}
    Let $\path{2} = (u, v)$ and assume without loss of generality that $\alpha(u) <
    \alpha(v)$.  From $|I(\path{2})| \leq b+1$ and 
    \begin{align*}
        |I(\path{2})| &= |I(\path{2}) \cap \alpha(V(T))|\\
        &\geq |I(\path{2}) \cap \alpha(\bigcup P^i \cup X \cup \path{2})|\\
        &= |I(\path{2}) \cap \alpha(\bigcup P^i)| + |I(\path{2}) \cap
        \alpha(X)| + |I(\path{2})
        \cap \alpha(\path{2})|\\
        &\geq  |I(\path{2}) \cap \alpha(\bigcup P^i)| +b-k+1
    \end{align*}
    it follows that $|I(\path{2}) \cap \alpha(\bigcup P^i)| \leq k$. 
    
    Assume for a contradiction that there is a $j_1$ such that
    $|\alpha(P^{j_1}) \cap I(\path{2})| \neq 1$. Then, since $|I(\path{2}) \cap
    \alpha(\bigcup P^i)| \leq k$ it follows that there is a $j_2$ such that
    $|\alpha(P^{j_2}) \cap I(\path{2})| = 0$.  For a path $P^i$ let $(v^i_l, v^i_r)$
    maximize $\alpha(v^i_l)$ among the edges in $P^i$ with $\alpha(v^i_l) <
    \alpha(u)$ and $\alpha(v) < \alpha(v^i_r)$.  Let $P^j$ be the path
    minimizing $\alpha(v^j_l)$ among all paths $P^i$ such that $|\alpha(P^i)
    \cap I(\path{2})| = 0$. It follows that for every path $P^i$ either
    $|\alpha(P^i) \cap I(\path{2})| \geq 1$ or $|\alpha(P^i) \cap I(v^j_l, u)| \geq
    1$. Hence for each $i$ it holds that $|I(v^j_l, v^j_r) \cap \alpha(P^i)|
    \geq 1$. Furthermore, observe that $|I(v^j_l, v^j_r) \cap \alpha(P^j)| \geq
    2$. It follows that
    \begin{align*}
        |I(v^j_l, v^j_r)| &\geq |I(v^j_l, v^j_r) \cap \alpha(X)| + |I(v^j_l, v^j_r) \cap
        \alpha(\path{2})| + |I(v^j_l, v^j_r) \cap \alpha(\bigcup P^i)|\\
        &\geq (b-k-1) + 2 + (k+1) \\
        &\geq b+2 
    \end{align*} 
    Observe that $X$, $\path{2}$ and $\bigcup P^i$ are disjoint and hence the
    first line above is valid. Since $(v^j_l,v^j_r)$ is an
    edge in $T$ and $|I(v^j_l,v^j_r)| \geq b+2$ we have a contradiction to 
    $\alpha$ being a $b$-bandwidth ordering and hence our proof is complete.
\end{proof}

\begin{corollary}
    \label{corollary:passing-P2-requires-space}
    Let $(T,b)$ be an instance of \prob{$p$-Bandwidth} and $\path{2}, P^1, \dots,
    P^k$ be $k+1$ disjoint subpaths of $T$. Given a $b$-bandwidth ordering
    $\alpha$ such that $P^1, \dots, P^k$ pass through $\path{2}$ and there is a set
    of vertices $X$ disjoint from $\path{2}, P^1, \dots, P^k$ such that $|X| \geq
    b-k-1$ and $\alpha(X) \subseteq I(\path{2})$, then $|X| = b-k-1$.
\end{corollary}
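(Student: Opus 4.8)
The plan is to simply combine Lemma~\ref{lemma:no-jumping} with a counting argument inside $I(\path{2})$ and then invoke a pigeonhole (squeezing) step. First I would apply Lemma~\ref{lemma:no-jumping} verbatim to the given instance: its hypotheses are exactly the hypotheses of the corollary, so it yields $|\alpha(P^i) \cap I(\path{2})| = 1$ for every $i \in [k]$. Since the paths $P^1, \dots, P^k$ are pairwise disjoint and $\alpha$ is injective, this gives $|\alpha(\bigcup_i P^i) \cap I(\path{2})| = k$.

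Next I would bound $|I(\path{2})|$ from above. Writing $\path{2} = (u,v)$, the pair $uv$ is an edge of $T$ and $\alpha$ is a $b$-bandwidth ordering, so $|\alpha(u) - \alpha(v)| \leq b$ and hence $|I(\path{2})| \leq b+1$.

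For the lower bound I would count the vertices whose image lies in $I(\path{2})$. The three vertex sets $X$, $V(\path{2})$, and $\bigcup_i V(P^i)$ are pairwise disjoint by hypothesis, and each maps into $I(\path{2})$ (for $X$ this is assumed; for $\path{2}$ it is immediate; for $\bigcup_i P^i$ we use the paragraph above, noting only that the relevant portions land in $I(\path{2})$). Therefore
\begin{align*}
    b+1 &\geq |I(\path{2})| \\
    &\geq |\alpha(X) \cap I(\path{2})| + |\alpha(\path{2}) \cap I(\path{2})| + |\alpha(\textstyle\bigcup_i P^i) \cap I(\path{2})| \\
    &\geq |X| + 2 + k \\
    &\geq (b-k-1) + 2 + k = b+1.
\end{align*}
All inequalities are thus equalities; in particular $|X| = b-k-1$, which is what we wanted.

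I do not expect any real obstacle here: the corollary is essentially a restatement of the tight case of the inequality already established in the proof of Lemma~\ref{lemma:no-jumping}. The only points requiring a line of care are that the three sets are genuinely disjoint (so their cardinalities add) and that $\alpha$ being injective lets us pass freely between vertex counts and counts of occupied positions in $I(\path{2})$.
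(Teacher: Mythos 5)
Your proof is correct and follows essentially the same route as the paper: the paper also applies Lemma~\ref{lemma:no-jumping} and then counts the disjoint contributions of $X$, $\path{2}$, and the $P^i$ inside $I(\path{2})$ against the bound $|I(\path{2})| \leq b+1$, merely phrasing it as a contradiction from $|X| \geq b-k$ instead of your direct squeeze to equality.
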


\begin{proof}
    Assume for a contradiction that $|X| \geq b-k$. Apply
    Lemma~\ref{lemma:no-jumping} to obtain $|\alpha(P^i) \cap I(\path{2})| = 1$
    for every $i$. It follows that
    \begin{align*}
        |I(\path{2})|  &\geq |I(\path{2}) \cap \alpha(X \cup \path{2} \cup \bigcup
        P^i)|\\
        &\geq |I(\path{2}) \cap \alpha(X)| +
        |I(\path{2}) \cap \alpha(\path{2})| +
        |I(\path{2}) \cap \alpha(\bigcup P^i)| \\
        &\geq (b-k) + 2 + k\\
        &\geq b+2.
    \end{align*}
    which is a contradiction to $\alpha$ being a $b$-bandwidth ordering.
\end{proof}

\begin{lemma}
    \label{lemma:passing-paths-are-well-behaved}
    Let $(T,b)$ be an instance of \prob{$p$-Bandwidth} and $\path{l}, P^1,
    \dots, P^k$ be $k+1$ disjoint subpaths of $T$. Given a $b$-bandwidth
    ordering $\alpha$ such that $P^1, \dots, P^k$ pass through $\path{l}$ and
    there is a set of vertices $X$ disjoint from $\path{l}, P^1, \dots, P^k$
    such that $|X| \geq (l-1)(b-k-1)$ and $\alpha(X) \subseteq I(\path{l})$,
    then $\path{l}$ is stretched with respect to $\alpha$ and $|P^i \cap
    I(\path{2})| = 1$ for every $i$ and every $\path{2} \subseteq \path{l}$.
\end{lemma}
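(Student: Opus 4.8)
The plan is to avoid any induction on the length of $\path{l}$ and instead reduce the whole statement to $l-1$ separate invocations of Lemma~\ref{lemma:no-jumping}, one per edge of $\path{l}$. Write $\path{l} = (v_1, \dots, v_l)$, let $e_j = (v_j, v_{j+1})$ be its edges, and set $[A,Z] = I(\path{l})$ with $A = \min_j \alpha(v_j)$ and $Z = \max_j \alpha(v_j)$.

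First I would record a covering fact: $\bigcup_{j=1}^{l-1} I(e_j) = I(\path{l})$. The inclusion $\subseteq$ is immediate since every $v_j$ lies in $[A,Z]$. For $\supseteq$, take any integer $t \in [A,Z]$; because $\path{l}$ has a vertex of value $A \le t$ and a vertex of value $Z \ge t$, walking along $\path{l}$ one meets either a vertex of value exactly $t$ or an edge $e_j$ one of whose endpoints has value $\le t$ and the other value $\ge t$, and in both cases $t \in I(e_j)$. In particular $\alpha(X) \subseteq I(\path{l}) = \bigcup_j I(e_j)$, using the hypothesis $\alpha(X)\subseteq I(\path{l})$.

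The heart of the argument is then an iterative process that applies Lemma~\ref{lemma:no-jumping} to the edges $e_j$ one at a time, maintaining the invariant that for every already processed edge $e$, the interval $I(e)$ contains exactly $b+1$ vertices, namely the two endpoints of $e$, exactly one vertex of each $P^i$, and exactly $b-k-1$ vertices of $X$, and no other vertex of $T$. (This is what Lemma~\ref{lemma:no-jumping} forces: it gives $|\alpha(P^i)\cap I(e)| = 1$ for each $i$, and then $b+1 \ge |I(e)| \ge 2 + k + |\alpha(X)\cap I(e)| \ge 2 + k + (b-k-1) = b+1$, so every inequality is tight.) Suppose $t \le l-2$ edges, forming a set $S$, have been processed. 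By the invariant $X$ has at most $t(b-k-1)$ vertices mapped into $\bigcup_{e\in S} I(e)$, hence at least $(l-1-t)(b-k-1)$ vertices mapped into $[A,Z]\setminus\bigcup_{e\in S} I(e)$, which by the covering fact is contained in the union of the $l-1-t$ intervals $I(e)$ over the unprocessed edges. By pigeonhole some unprocessed edge $e^{*}$ has $|\alpha(X)\cap I(e^{*})| \ge b-k-1$. I then apply Lemma~\ref{lemma:no-jumping} with $\path{2} = e^{*}$, the same paths $P^1,\dots,P^k$ (which pass through $e^{*}$ since $I(e^{*}) \subseteq I(\path{l}) \subseteq I(P^i)$), and filler set $\{x \in X : \alpha(x) \in I(e^{*})\}$ (disjoint from $e^{*}, P^1,\dots,P^k$ because $X$ is disjoint from $\path{l}$ and from every $P^i$). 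This yields $|\alpha(P^i)\cap I(e^{*})| = 1$ for all $i$ and, as above, $|I(e^{*})| = b+1$, i.e. $|\alpha(v)-\alpha(w)| = b$ for the endpoints $v,w$ of $e^{*}$, so the invariant extends to $S \cup \{e^{*}\}$. After $l-1$ rounds every edge $e_j$ of $\path{l}$ satisfies $|\alpha(v_j)-\alpha(v_{j+1})| = b$ and $|\alpha(P^i)\cap I(e_j)| = 1$ for every $i$; the former is precisely the statement that $\path{l}$ is stretched with respect to $\alpha$ (and in particular the $v_j$ occur in $\alpha$-monotone order), and the latter is the remaining half of the conclusion, since the subpaths $\path{2}\subseteq\path{l}$ are exactly the edges $e_j$.

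The hard part, and the reason a naive induction fails, is that the bound $|X| \ge (l-1)(b-k-1)$ is exactly tight: it supplies precisely $b-k-1$ filler vertices per edge with no slack, so peeling a single vertex off $\path{l}$ and recursing would forfeit $\Theta(k)$ filler vertices near the removed end and break the hypothesis. The process above keeps the accounting exact by leaning on two points — the covering fact guarantees that every filler vertex not yet spent on a processed edge lies in some unprocessed edge-interval, so the pigeonhole step can be reapplied in every round, and the strong form of Lemma~\ref{lemma:no-jumping} guarantees that a processed edge-interval is completely saturated and holds no more than its fair share $b-k-1$ of $X$. (Only $b-k-1 \ge 0$ is needed, which holds since $b = 4k+16$.)
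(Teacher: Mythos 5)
Your proof is correct, and it takes a genuinely different route from the paper's. The paper spends most of its proof first establishing that the vertices of $\hat{P}_l$ appear monotonically in $\alpha$ (via the auxiliary construction of the sets $B_i$ and vertices $a_i$ and a contradiction through Corollary~\ref{corollary:passing-P2-requires-space}), and only then does the per-edge accounting: monotonicity makes the edge intervals internally disjoint, so each must contain exactly $b-k-1$ vertices of $X$, after which Lemma~\ref{lemma:no-jumping} and a counting argument give that every edge is stretched. You skip the monotonicity phase entirely: your covering fact $\bigcup_j I(e_j) = I(\hat{P}_l)$ together with the iterative pigeonhole-and-saturation argument works even while the edge intervals might still overlap, each application of Lemma~\ref{lemma:no-jumping} forcing $|I(e)| = b+1$ with exactly $b-k-1$ vertices of $X$, exactly one vertex per $P^i$, and the two endpoints at distance exactly $b$; since ``stretched'' is by definition exactly this distance condition, and monotonicity then follows for free from injectivity (as the paper itself remarks after the definition), nothing more is needed. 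What your route buys is the elimination of the most delicate part of the paper's proof (the ordering argument) and a single mechanism — tight counting inside a saturated edge interval — doing all the work; what the paper's route buys is that, once monotonicity is in hand, the remaining count is a one-line pigeonhole rather than a round-by-round invariant. As a side remark, your iteration can even be flattened into a one-shot count: Corollary~\ref{corollary:passing-P2-requires-space} bounds each $|\alpha(X) \cap I(e_j)|$ by $b-k-1$, and the covering fact plus $|X| \geq (l-1)(b-k-1)$ then forces equality on every edge simultaneously, after which Lemma~\ref{lemma:no-jumping} and your tightness computation finish as before; but the bookkeeping you give is sound as written.
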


\begin{proof}
    We start by proving $\alpha(v_1) < \alpha(v_2) < \dots < \alpha(v_l)$ or
    $\alpha(v_l) < \dots < \alpha(v_2) < \alpha(v_1)$. Assume otherwise for a
    contradiction. Then there exists three vertices $v_{j-1}, v_j$ and
    $v_{j+1}$ such that either $\max \{\alpha(v_{j-1}), \alpha(v_{j+1})\} <
    \alpha(v_j)$ or $\alpha(v_j) < \min \{\alpha(v_{j-1}), \alpha(v_{j+1})\}$. Since all
    properties of the lemma is preserved with respect to reversing $\alpha$ we
    can assume without loss of generality that $\min \{\alpha(v_{j-1}),
    \alpha(v_{j+1})\} < \alpha(v_j)$. We define a function $f : 2^{\path{l}}
    \setminus \{\path{l}\} \rightarrow \path{l}$ as $f(B) = v_j$ such that $j =
    \min\left\{ i \mid v_i \in \path{l} \setminus B \text{ and } \{v_{i-1},
    v_{i+1}\} \cap B \neq \emptyset \right\}$. In other words, $f$ gives you
    the smallest indexed vertex in the open neighbourhood of $B$. Notice that
    since $\path{l}$ is connected $f$ is a well-defined function.  We will now
    define $a_1, \dots, a_t$ and $B_1, \dots, B_t$.  First let $a_1 =
    \alpha^{-1}(\min \left\{ \alpha(\path{l}) \right\})$ and $B_1 = \left\{ a_1
    \right\}$.  Then we let $a_i = f(B_{i-1})$ and $B_i = I(a_1, a_i) \cap
    \path{l}$ as long as $B_{i-1} \neq \path{l}$. Observe that $B_{i-1} \subset
    B_i$.
        
    First we will prove that $t < l$.  Assume otherwise for a contradiction,
    clearly then $t = l$. It follows by the construction and our assumption
    that $\left\{ a_1, \dots, a_i \right\} = B_i$ for every $i$. And by a
    simple induction we get that $T[\left\{ a_1, \dots, a_i \right\}]$ is
    connected, since this clearly holds for $i = 1$ and for $i > 1$ observe
    that $a_i$ has a neighbour in $B_{i-1}$ by construction. Let $c$ be so
    that $a_c = v_j$. Since $v_j$ is separating $v_{j-1}$ and $v_{j+1}$ in
    $\path{l}$ and $v_j \notin B_{c-1}$ it follows that $\left\{ v_{j-1},
    v_{j+1} \right\} \not\subseteq B_{c-1}$.  Furthermore, since
    $\max \{\alpha(v_{j-1}), \alpha(v_{j+1})\} < \alpha(v_j)$ it holds that $\left\{
    v_{j-1}, v_j, v_{j+1} \right\} \subseteq B_c$.  But this contradicts
    $\left\{ a_1, \dots, a_i \right\} = B_i$ and hence we know that $t < l$.
    It follows, due to the pidgin hole principle, that there is a $d$ such that $|I(a_{d-1}, a_d) \cap \alpha(X)|
    > b-k-1$.  By construction there is a neighbour $a'$ of $a_d$ among $a_1,
    \dots, a_{d-1}$. Observe that $|I(a', a_d) \cap \alpha(X)| > b-k-1$ and
    apply Corollary~\ref{corollary:passing-P2-requires-space} with $\path{2} =
    (a', a_d)$ to obtain a contradiction. Hence we can conclude that
    $\alpha(v_1) < \alpha(v_2) < \dots < \alpha(v_l)$ or $\alpha(v_l) < \dots
    < \alpha(v_2) < \alpha(v_1)$.

    We will now prove $|P^i \cap I(\path{2})| = 1$ for every $i$ and every
    $\path{2} \subseteq \path{l}$. Observe that if there is a $\path{2}$ such
    that $|I(\path{2}) \cap \alpha(X)| \neq b-k-1$, then there is a $\path{2}'$
    such that $I(\path{2}') \cap \alpha(X)| > b-k-1$. But this contradicts
    Corollary~\ref{corollary:passing-P2-requires-space} and hence we get that
    $|I(\path{2}) \cap \alpha(X)| = b-k-1$ for every $\path{2} \subseteq
    \path{l}$ and then it follows directly from Lemma~\ref{lemma:no-jumping}
    that $|P^i \cap I(\path{2})| = 1$ for every $\path{2} \subseteq \path{l}$.
    Hence \begin{align*} |I(\path{2})| &\geq |I(\path{2}) \cap \alpha(X \cup
    \path{2} \cup \bigcup P^i)| \\ &\geq |I(\path{2}) \cap \alpha(X)| +
    |I(\path{2}) \cap \alpha(\path{2})| + |I(\path{2}) \cap \alpha(\bigcup
    P^i)|\\ &\geq b-k-1 + 2 + k\\ &\geq b+1 \end{align*} and it follows that
    $\path{l}$ is stretched with respect to $\alpha$.
\end{proof}

\begin{corollary}
    \label{corollary:passing-paths-requires-space}
    Let $(T,b)$ be an instance of \prob{$p$-Bandwidth} and $\path{l}, P^1, \dots,
    P^k$ be $k+1$ disjoint subpaths of $T$. Given a $k$-bandwidth ordering
    $\alpha$ such that $P^1, \dots, P^k$ passes through $\path{l}$ and there is a
    set of vertices $X$ disjoint from $\path{l}, P^1, \dots, P^k$ such that $|X|
    \geq (l-1)(b-k-1)$ and $\alpha(X) \subseteq I(\path{l})$, then $|X| =
    (l-1)(b-k-1)$. 
\end{corollary}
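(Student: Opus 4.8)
The plan is to assume for a contradiction that $|X| \geq (l-1)(b-k-1)+1$ and reuse the structure already extracted by Lemma~\ref{lemma:passing-paths-are-well-behaved}. Since $|X| \geq (l-1)(b-k-1)$, that lemma applies and tells us that $\path{l}$ is stretched with respect to $\alpha$; writing $\path{l} = (v_1, \dots, v_l)$ and assuming without loss of generality that $\alpha(v_1) < \dots < \alpha(v_l)$, consecutive images are exactly $b$ apart, so $I(\path{l})$ consists of precisely $(l-1)b+1$ integers. Deleting the $l$ points $\alpha(v_1), \dots, \alpha(v_l)$ splits the remaining $(l-1)(b-1)$ integers of $I(\path{l})$ into the $l-1$ pairwise disjoint blocks $\big(\alpha(v_j), \alpha(v_{j+1})\big)$, $j = 1, \dots, l-1$, each containing exactly $b-1$ integers.

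Next I would localise the contradiction to a single block and invoke Corollary~\ref{corollary:passing-P2-requires-space}. Because $X$ is disjoint from $\path{l}$, the set $\alpha(X)$ avoids the $l$ points $\alpha(v_j)$ and hence lies entirely in the union of these $l-1$ blocks; by the pigeonhole principle some block $\big(\alpha(v_j), \alpha(v_{j+1})\big)$ contains at least $b-k$ elements of $\alpha(X)$. Let $X'$ be the corresponding subset of $X$, so $|X'| \geq b-k$ and $\alpha(X') \subseteq I(\path{2})$ for $\path{2} = (v_j, v_{j+1}) \subseteq \path{l}$. Each $P^i$ passes through $\path{l}$, i.e. $I(\path{l}) \subseteq I(P^i)$, so $I(\path{2}) \subseteq I(P^i)$ and $P^i$ passes through $\path{2}$ as well; moreover $X'$ is disjoint from $\path{2}, P^1, \dots, P^k$ since $X$ is disjoint from $\path{l}$ and all the $P^i$. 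Thus Corollary~\ref{corollary:passing-P2-requires-space} applies to $\path{2}, P^1, \dots, P^k$ and $X'$, forcing $|X'| = b-k-1 < b-k$, a contradiction. Hence $|X| = (l-1)(b-k-1)$.

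I do not expect any real obstacle here beyond bookkeeping: the only things to verify carefully are that the stretched structure tiles $I(\path{l})$ into exactly $l-1$ interior blocks of size $b-1$ (so that pigeonhole yields a block carrying at least $b-k$ vertices of $X$, which is precisely the threshold needed to contradict Corollary~\ref{corollary:passing-P2-requires-space}), and that ``passing through $\path{l}$'' descends to every subpath $\path{2} \subseteq \path{l}$. As an alternative one can skip the pigeonhole step and argue by one global count inside $I(\path{l})$: its $(l-1)b+1$ integers must accommodate the $l$ vertices of $\path{l}$, at least $l-1$ vertices of each $P^i$ (one per block, by the ``$|P^i \cap I(\path{2})|=1$'' clause of Lemma~\ref{lemma:passing-paths-are-well-behaved}), and all of $X$; if $|X| \geq (l-1)(b-k-1)+1$ these are at least $l + k(l-1) + (l-1)(b-k-1) + 1 = (l-1)b + 2$ distinct positions, which is impossible since $\alpha$ is injective.
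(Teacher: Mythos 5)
Your proof is correct and takes essentially the same route as the paper: pigeonhole the images of $X$ among the $l-1$ edge intervals of $\hat{P}_l$ to find an edge $\hat{P}_2 \subseteq \hat{P}_l$ whose inclusion interval contains at least $b-k$ elements of $\alpha(X)$, contradicting Corollary~\ref{corollary:passing-P2-requires-space}. The only difference is your preliminary appeal to Lemma~\ref{lemma:passing-paths-are-well-behaved} to make the blocks disjoint; the paper's two-line proof skips this, since the intervals $I(\hat{P}_2)$ over consecutive edges already cover $I(\hat{P}_l)$ and overlaps only help the counting.
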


\begin{proof}
    Assume for a contradiction that $|X| > (l-1)(b-k-1)$. Then there is a
    $\path{2} \subseteq \path{l}$ such that $|X \cap I(\path{2})| \geq b-k$ which
    is a contradiction by Corollary~\ref{corollary:passing-P2-requires-space}.
\end{proof}

\subsection{Gadgets}
We will now introduce the gadgets used for the reduction. They will all be
defined on paths of various lengths. And later on when we say that a gadget is
embedded on some path, this means that the path referred to together with some
of its neighbours is an instantiation of the gadget.

\begin{definition}
    \label{definition:functioning-gadget}
    Let $(T,b)$ be an instance of $p$-Bandwidth and $H$ be a subgraph of $T$
    with a vertex labeled $\gin$ and another vertex labeled $\gout$. We say
    that $H$ is \emph{functioning} in $T$ if $T$ contains two walls $W_{\gin}$
    and $W_{\gout}$ such that 
    \begin{itemize}
        \item $W_{\gin}, W_{\gout}$ and $H$ are disjoint,
        \item there is a path $P_{\gin}$ from $\gin$ to $W_{\gin}$ avoiding $(H -
            \gin)$ and $W_{\gout}$ and
        \item there is a path $P_{\gout}$ from $\gout$ to $W_{\gout}$ avoiding $(H -
            \gout)$, $W_{\gin}$ and $P_{\gin}$.
    \end{itemize}
    If $H$ is functioning in $T$ let $W_{\gin}(H, T), W_{\gout}(H, T),
    P_{\gin}(H, T)$ and $P_{\gout}(H, T)$ denote a witness of this.
\end{definition}

\subsubsection*{Walls}
A \emph{wall} is a star with $2b$ leaves. The high degree vertex of a wall $W$
will be referred to as the \emph{center} of the wall. We will turn the
endpoints of the main path into walls to control the endpoints of all valid
$b$-bandwidth orderings. The next lemma gives us this behaviour.

\begin{lemma}
    \label{lemma:walls}
    Let $(T, b)$ be an instance of \prob{$p$-Bandwidth} such that $T$ contains
    two disjoint walls $W_1$ and $W_2$ with centers $c_1$ and $c_2$ as
    subgraphs.  Let $H$ be a connected component of $T-(W_1 \cup W_2)$
    connected by edges to both walls in $T$. Then, for any $b$-bandwidth
    ordering $\alpha$ of $T$ and any vertex $v \in H$ it
    follows that $\alpha(v) \in I(c_1, c_2)$.
\end{lemma}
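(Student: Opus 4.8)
The plan is to argue that each wall, together with its center, must occupy a block of $2b+1$ consecutive positions in any $b$-bandwidth ordering $\alpha$, and then to show that the two walls cannot ``overlap'' in the ordering, so one wall sits entirely to the left and the other entirely to the right of everything in $H$. First I would fix $\alpha$ and look at wall $W_1$ with center $c_1$: since each of the $2b$ leaves $\ell$ of $W_1$ satisfies $|\alpha(\ell)-\alpha(c_1)|\le b$, all $2b$ leaves are mapped into the interval $[\alpha(c_1)-b,\alpha(c_1)+b]$, which contains only $2b$ integers other than $\alpha(c_1)$ itself. Hence the leaves of $W_1$ occupy \emph{exactly} the $2b$ positions in $[\alpha(c_1)-b,\alpha(c_1)+b]\setminus\{\alpha(c_1)\}$, and nothing else is placed in that interval; in particular $I(W_1)=[\alpha(c_1)-b,\alpha(c_1)+b]$ and every vertex outside $W_1$ lies strictly outside this interval. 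The same holds for $W_2$, and since $W_1$ and $W_2$ are disjoint, their occupied intervals $I(W_1)$ and $I(W_2)$ are disjoint; without loss of generality assume $\max\alpha(W_1)<\min\alpha(W_2)$, so $\alpha(c_1)+b<\alpha(c_2)-b$.

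Next I would use connectivity of $H$ together with the edges from $H$ to each wall. Let $v\in H$ be arbitrary. Since $H$ is connected, there is a path inside $H$ from $v$ to a vertex $x_1\in H$ adjacent (in $T$) to some leaf or the center of $W_1$, and likewise a path inside $H$ from $v$ to a vertex $x_2\in H$ adjacent to $W_2$. The key point is that a vertex of $H$ adjacent to $W_1$ must lie at $\alpha$-distance at most $b$ from a vertex of $W_1$, hence within $[\alpha(c_1)-2b,\alpha(c_1)+2b]$; but more usefully, since all vertices of $H$ lie outside $I(W_1)=[\alpha(c_1)-b,\alpha(c_1)+b]$ and outside $I(W_2)=[\alpha(c_2)-b,\alpha(c_2)+b]$, every vertex of $H$ lies either left of $I(W_1)$, between $I(W_1)$ and $I(W_2)$, or right of $I(W_2)$. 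I claim no vertex of $H$ can lie left of $I(W_1)$: such a vertex $w$ would have $\alpha(w)<\alpha(c_1)-b$, yet $w$ is connected within $H$ to $x_2$ with $\alpha(x_2)>\alpha(c_2)-b$, and the $H$-path from $w$ to $x_2$ cannot jump over the wall $W_1$ — any edge of this path crossing from $<\alpha(c_1)-b$ to $>\alpha(c_1)+b$ would have length exceeding $b$ (indeed $I(W_1)$ is a solid block of $2b+1$ positions, so crossing it costs more than $b$). Symmetrically, no vertex of $H$ lies right of $I(W_2)$. Therefore every vertex of $H$ lies strictly between $I(W_1)$ and $I(W_2)$, i.e. $\alpha(v)\in(\alpha(c_1)+b,\alpha(c_2)-b)\subseteq I(c_1,c_2)$, as desired.

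The main obstacle is the jumping argument: making precise that an edge of a path in $H$ cannot cross the block occupied by a wall. The clean way to phrase it is: if $P$ is any path in $T-(W_1\cup W_2)$ and $u,u'$ are vertices of $P$ with $\alpha(u)<\alpha(c_1)-b$ and $\alpha(u')>\alpha(c_1)+b$, then $P$ has two consecutive vertices $p,p'$ with $\alpha(p)\le\alpha(c_1)-b$ and $\alpha(p')\ge\alpha(c_1)+b$ (take the last vertex of $P$ on the left side), whence $|\alpha(p)-\alpha(p')|\ge 2b>b$, contradicting that $\alpha$ is a $b$-bandwidth ordering — here we crucially use that $p,p'\notin W_1$ so their images avoid the interior of $I(W_1)$, which is impossible for them to ``land inside'' anyway since that interval is fully occupied by $W_1$. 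Once this no-jumping fact is in hand, applying it to the two $H$-internal paths from $v$ pins $v$ between the two walls, and the lemma follows.
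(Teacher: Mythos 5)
Your proposal is correct, and it proves the lemma by a more self-contained route than the paper. Both arguments pivot on the same key observation, which you establish cleanly: since the $2b$ leaves of a wall must land within distance $b$ of the center, $\alpha(W_1)$ is exactly the block $[\alpha(c_1)-b,\alpha(c_1)+b]$, so no vertex outside $W_1$ can be mapped into it (and similarly for $W_2$, with the two blocks disjoint). From there the paper does not argue the crossing directly: it takes the extreme leaves $u_l,u_r$ of $W_1$, notes that a path from $v$ to $c_2$ inside $T[V(H)\cup W_2]$ passes through the path $(u_l,c_1,u_r)$, and invokes Corollary~\ref{corollary:passing-paths-requires-space} with $X$ the remaining $2b-2$ leaves ($2b-2 > (3-1)(b-2)$) to get the contradiction. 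You instead give an elementary ``no edge can jump over a fully occupied block of $2b+1$ consecutive positions'' argument applied to a path inside $H$ from a hypothetical misplaced vertex to a vertex adjacent to the other wall; this avoids the passing-through machinery entirely and in fact yields the slightly stronger conclusion that all of $H$ sits strictly between the two wall blocks. Both are sound; the paper's version buys uniformity (the same corollary is reused throughout the reduction), yours buys a shorter, tool-free proof of this particular lemma.

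One small imprecision to fix: for the vertex $x_2\in H$ adjacent to $W_2$ you claim $\alpha(x_2)>\alpha(c_2)-b$, but adjacency only gives $\alpha(x_2)\geq\alpha(c_2)-2b$, and $x_2$ could well land just left of $W_2$'s block. This is harmless for your argument: since $x_2$ avoids both blocks and $\alpha(c_2)-2b>\alpha(c_1)-b$ (the blocks being disjoint forces $\alpha(c_2)\geq\alpha(c_1)+2b+1$), you still get $\alpha(x_2)>\alpha(c_1)+b$, which is all the crossing argument needs; just state that weaker bound instead.
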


\begin{proof}
    Assume without loss of generality that $\alpha(c_1) < \alpha(c_2)$. For a
    contradiction, assume that $\alpha(v) < \alpha(c_1)$. Let $u_l$ be the leaf
    in $W_1$ minimizing $\alpha$ and $u_r$ the leaf maximizing $\alpha$.
    Furthermore, let $P^1$ be a path from $v$ to $c_2$ in $T[V(H) \cup
    W_2]$ and $\path{3}$ the path $(u_l, c_1, u_r)$.  Observe that $P^1$ passes
    through $\path{3}$, since $\alpha(W_1) = [\alpha(c_1)-b, \alpha(c_1) + b]$. Let $X = V(W_1) - \path{3}$ and note that $|X| = 2b-2$. Apply
    Corollary~\ref{corollary:passing-paths-requires-space} on $\path{3}, P^1$ and
    $X$ to obtain a contradiction, since $(3-1)(b-1-1) = 2b-4 < 2b-2 =
    |X|$.
    For $\alpha(v) > \alpha(c_2)$ we apply a symmetric argument and hence our
    proof is complete.
\end{proof}

\subsubsection*{Gates}
For an integer $k \geq 0$ a \emph{$k$-gate}, denoted $\gate{k}$, is a star with
$2(b-k)$ leaves. The function of the $k$-gate will be to reduce the number of
paths passing this point to at most $k$. The high degree vertex of the star
will be referred to as the \emph{center} of the gate.  In
addition one leaf will be labeled $\gin$ and another labeled $\gout$. 

\begin{figure}[ht!]
  \centering
  \begin{tikzpicture}[baseline=(0.base)]
    \tikzset{scale=2}
    \tikzset{Vertex/.style={shape=circle,draw,scale=0.7,minimum size=20pt}}
    \tikzset{Edge/.style={}}

    \node[Vertex, label=below:$\gcenter$] (center) at (0,0) {};
    \node[Vertex, label=below:$\textit{in}$] (left) at (-1,0) {};
    \node[Vertex, label=below:$\textit{out}$] (right) at (1,0) {};
    \node[] () at (0,0.9) {$\dots$};
    
    \foreach \e/\x/\y [count=\k] in {
    1/0.86/0.5,
    2/0.5/0.86,
    3/-0.5/0.86,
    4/-0.86/0.5
    }
    \node[Vertex] (\e) at (\x,\y) {};

    \foreach \a/\b in 
    {{center/left},{center/right},{center/1},{center/2},{center/3},{center/4}}
    \draw[Edge](\a) to node {} (\b);
  \end{tikzpicture}
  \caption{A $k$-gate with the special vertices marked with tags below.}
  \label{figures:kgate}
\end{figure}
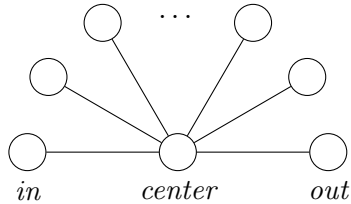

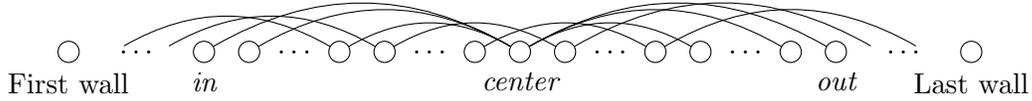
\begin{figure}[ht!]
  \centering
  \begin{tikzpicture}[baseline=(0.base)]
    \tikzset{scale=0.6}
    \tikzset{Vertex/.style={shape=circle,draw,scale=0.4,minimum size=20pt}}
    \tikzset{Edge/.style={}}

    \node[Vertex, label=below:$\gcenter$] (center) at (0,0) {};
    \node[Vertex, label=below:$\textit{in}$] (left) at (-7,0) {};
    \node[Vertex, label=below:$\textit{out}$] (right) at (7,0) {};
    \node[Vertex, label=below:$\text{First wall}$] (fw) at (-10,0) {};
    \node[Vertex, label=below:$\text{Last wall}$] (lw) at (10,0) {};

    \node[] () at (-8.5,0) {$\hdots$};
    \node[] () at (8.5,0) {$\hdots$};
    \node[] () at (-2,0) {$\hdots$};
    \node[] () at (-5,0) {$\hdots$};
    \node[] () at (5,0) {$\hdots$};
    \node[] () at (2,0) {$\hdots$};
    
    \foreach \e/\x/\y [count=\k] in {
    l1/-1/0,
    l2/-3/0,
    l3/-4/0,
    l4/-6/0,
    r1/1/0,
    r2/3/0,
    r3/4/0,
    r4/6/0
    }
    \node[Vertex] (\e) at (\x,\y) {};

    \node[] (l8) at (-8,0) {};
    \node[] (l9) at (-9,0) {};
    \node[] (r8) at (8,0) {};
    \node[] (r9) at (9,0) {};

    \draw[Edge, bend left](left) to node {} (center);
    \draw[Edge, bend left](l4) to node {} (center);
    \draw[Edge, bend left](l3) to node {} (center);
    \draw[Edge, bend right](right) to node {} (center);
    \draw[Edge, bend right](r4) to node {} (center);
    \draw[Edge, bend right](r3) to node {} (center);
    \draw[Edge, bend left](l2) to node {} (r1);
    \draw[Edge, bend left](l1) to node {} (r2);
    \draw[Edge, bend left](l8) to node {} (l2);
    \draw[Edge, bend left](l9) to node {} (l3);
    \draw[Edge, bend left](r1) to node {} (r8);
    \draw[Edge, bend left](r2) to node {} (r9);

  \end{tikzpicture}
  \caption{An illustration of Lemma~\ref{lemma:gates}.}
  \label{figures:kgate-function}
\end{figure}

\begin{lemma}
    \label{lemma:gates}
    Let $(T,b)$ be an instance of \prob{$p$-Bandwidth} such that $T$ contains a
    gate $\gate{k}$ and paths $P^1, \dots, P^k$ as disjoint subgraphs with
    $\gate{k}$ being functioning in $T - (\bigcup P^i)$. Given a $b$-bandwidth
    ordering $\alpha$ such that $\max\{\alpha(W_{\gin}(\gate{k}, T - \bigcup
    p^i))\} < \min\{\alpha(W_{\gout}(\gate{k}, T - \bigcup p^i))\}$ and every path
    $P^i$ passes through the gate it follows that:
    \begin{enumerate}[(I)]
        \item $\alpha(N[\gcenter]) \subseteq B \subseteq \alpha(\bigcup P^i \cup
            N[\gcenter])$,
        \item $\alpha(\gin) < \alpha(\gcenter) < \alpha(\gout)$ and
        \item $|\alpha(P^i) \cap B_l| = |\alpha(P^i) \cap B_r| = 1$ for every $i \in [1, k]$
    \end{enumerate}
    for $c = \alpha(\gcenter)$, $B = \left[c-b, c+b\right]$, $B_l = \{i \in B
    \mid i < c\}$ and $B_r = \{i \in B \mid c < i\}$.
\end{lemma}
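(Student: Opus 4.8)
The plan is to read items (I)--(III) straight off the three tool lemmas by feeding them carefully chosen paths and excess-vertex sets, and then to settle the orientation of $\gin$ and $\gout$ in item~(II) by a separate argument that reaches out to the walls. First some preliminaries. A wall is a star with $2b$ leaves, so under any $b$-bandwidth ordering its center and its $2b$ leaves must jointly occupy the whole interval of $2b+1$ integers centred at the center's position; thus each wall is a solid block with its center exactly in the middle. From $\max\alpha(W_{\gin})<\min\alpha(W_{\gout})$ we get $\alpha(c_{W_{\gin}})<\alpha(c_{W_{\gout}})$ for the two centers. The connected component $C$ of $T-(W_{\gin}\cup W_{\gout})$ containing $\gate{k}$ is joined to each wall by an edge (trace $P_{\gin}$, resp.\ $P_{\gout}$, out of $\gin$, resp.\ $\gout$: it must eventually leave $C$, and since it avoids the \emph{other} wall it can only leave into its own wall), so Lemma~\ref{lemma:walls} places every vertex of $C$ --- in particular every vertex of $\gate{k}$, $P_{\gin}$ and $P_{\gout}$ --- inside $I(c_{W_{\gin}},c_{W_{\gout}})$; combined with disjointness from the solid-block walls this yields $\max\alpha(W_{\gin})<c<\min\alpha(W_{\gout})$ for $c=\alpha(\gcenter)$. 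The first inclusion of (I), namely $\alpha(N[\gcenter])\subseteq B$, is immediate since every leaf of $\gate{k}$ is a neighbour of $\gcenter$.

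Next I would prove (III) together with the second inclusion of (I). Let $u_l,u_r$ be the leaves of $\gate{k}$ of smallest and largest $\alpha$-value and put $\path{3}=(u_l,\gcenter,u_r)$, which is a genuine path in $T$ since $\gate{k}$ is a star centred at $\gcenter$. Every $P^i$ passes through $\path{3}$ because $I(\path{3})\subseteq I(\gate{k})\subseteq I(P^i)$, and the remaining $2(b-k-1)$ leaves of $\gate{k}$ form a set $X$ disjoint from $\path{3}$ and from all $P^i$, with $\alpha(X)\subseteq[\alpha(u_l),\alpha(u_r)]\subseteq I(\path{3})$ and $|X|=2(b-k-1)=(3-1)(b-k-1)$. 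Hence Lemma~\ref{lemma:passing-paths-are-well-behaved} applies: $\path{3}$ is stretched (so $\alpha(u_l)=c-b$ and $\alpha(u_r)=c+b$, the reversed orientation being excluded by $\alpha(u_l)<\alpha(u_r)$) and $|\alpha(P^i)\cap I(\path{2})|=1$ for each of the two length-two subpaths $\path{2}\in\{(u_l,\gcenter),(\gcenter,u_r)\}$. These two inclusion intervals are exactly $[c-b,c]$ and $[c,c+b]$, and since $c=\alpha(\gcenter)\notin\alpha(P^i)$ this is precisely $|\alpha(P^i)\cap B_l|=|\alpha(P^i)\cap B_r|=1$, i.e.\ item~(III). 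For the second inclusion of (I) I would count inside $B$: the $2(b-k)+1$ vertices of $N[\gcenter]$ and the $2k$ vertices of $\bigcup P^i$ that fall in $B$ (one per $P^i$ in $B_l$, one in $B_r$) occupy pairwise distinct positions --- the gate is disjoint from the paths and the paths from one another --- and $(2(b-k)+1)+2k=2b+1=|B|$, so together they fill all of $B$; that is, $B\subseteq\alpha(\bigcup P^i\cup N[\gcenter])$.

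It remains to prove item~(II). Suppose $\alpha(\gin)>c$. Then $P_{\gin}$ runs from $\gin$ (position above $c$) to its endpoint in $W_{\gin}$ (position at most $\max\alpha(W_{\gin})<c$), and it uses no vertex at position $c$ because $\gcenter\notin P_{\gin}$; hence $P_{\gin}$ contains an edge $uv$ with $\alpha(u)>c>\alpha(v)$, and $\alpha$ having bandwidth $\le b$ forces $\alpha(u),\alpha(v)\in B$. But $v\neq\gin$ (its position is below $c$, that of $\gin$ is above it), and $P_{\gin}$ meets $\gate{k}$ only in $\gin$ and is disjoint from every $P^i$, so $v\notin N[\gcenter]\cup\bigcup P^i$ although $\alpha(v)\in B$ --- contradicting the second inclusion of (I). Thus $\alpha(\gin)<c$; the symmetric argument with $P_{\gout}$ and $W_{\gout}$ (whose endpoint sits at a position $\ge\min\alpha(W_{\gout})>c$) gives $\alpha(\gout)>c$, completing (II).

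I expect item~(II) to be the main obstacle. Items~(I) and~(III) come out of the tool lemmas once one picks the right path $\path{3}$ and excess set $X$, but (II) cannot be seen from inside the gate alone: it needs both the global placement of the walls via Lemma~\ref{lemma:walls} and the fact --- available only after (I) --- that the block $B$ is completely packed with gate- and thread-vertices, so that the connecting path $P_{\gin}$ has nowhere to hop across the position of $\gcenter$.
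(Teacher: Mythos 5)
Your proof is correct, and its overall architecture matches the paper's: first get (III), then deduce the second inclusion of (I) from the count $2k+2(b-k)+1=2b+1=|B|$, and finally obtain (II) by showing that a connecting path to a wall would have to place a vertex inside the fully packed block $B$. The genuinely different step is (III): the paper argues directly, following each $P^i$ from a vertex placed before the gate to one placed after it and noting that, by the bandwidth bound, the first vertex with position at least $c-b$ must land in $B_l$ (symmetrically for $B_r$ after reversing $\alpha$), with exactness then enforced by the counting in (I); you instead apply Lemma~\ref{lemma:passing-paths-are-well-behaved} to the path $(u_l,\gcenter,u_r)$ through the two extreme leaves, taking the remaining $2(b-k-1)$ leaves as the set $X$, which meets the hypothesis $(3-1)(b-k-1)$ with equality. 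Your route uses heavier machinery but yields exactness at once and additionally pins the extreme leaves at positions $c\pm b$; the paper's is more elementary and self-contained. For (II) you spell out the crossing-edge argument (an edge $uv$ of $P_{\gin}$ with $\alpha(u)>c>\alpha(v)$, both endpoints forced into $B$, with $v$ lying outside $N[\gcenter]\cup\bigcup P^i$) that the paper only states tersely, and your preliminary derivation of $\max\alpha(W_{\gin})<c<\min\alpha(W_{\gout})$ (walls as solid blocks of $2b+1$ consecutive positions, plus Lemma~\ref{lemma:walls} applied to the component of $T-(W_{\gin}\cup W_{\gout})$ containing the gate, which you correctly verify is joined by an edge to each wall via $P_{\gin}$ and $P_{\gout}$) is a sound substitute for the paper's direct appeal to Lemma~\ref{lemma:walls}. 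Like the paper, you implicitly assume that the neighbourhood of $\gcenter$ in $T$ consists exactly of the gate's leaves, so this is not a gap relative to the intended setting.
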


\begin{proof}
    We start by proving \textit{(III)}. For every path $P^i$ we know that there
    are $u,v \in P^i$ such that $\alpha(u) < \min \alpha(\gate{k})$ and
    $\max \alpha(\gate{k}) < \alpha(v)$. Assume that $u\notin B_l$ and follow
    the path from $u$ to $v$ until you reach the first vertex $u'$ such that
    $\alpha(u') \geq c-b$. Let $u''$ be the vertex we reached right before
    $u'$. From the definition of $\alpha$ it follows that $\alpha(u') -
    \alpha(u'') \leq b$ and hence $u' \in B_l$ and $|P^i \cap B_l| = 1$.
    Reverse $\alpha$ and apply the argument on the path from $v$ to $u$ to
    obtain $|P^i \cap B_r| = 1$. 

    We continue by proving \textit{(I)}. It follows directly from the fact that
    $\bw(T, \alpha) \leq b$ that $N[\gcenter] \subseteq B$. Since $|B \cap
    \left(\bigcup P^i \cup N\left[ \gcenter \right] \right)| = |B \cap \bigcup
    P^i| + |B \cap N\left[ \gcenter \right]| = 2k + 2(b-k)+1 = 2b+1$ and $|B| =
    2b+1$ it follows that $B \subseteq \bigcup P^i \cup N\left[ \gcenter
    \right]$. It remains to prove \textit{(II)}. Observe that $\max
    \alpha(W_{\gin}) < \min \{\alpha(\gin), \alpha(\gcenter)\}$ by
    Lemma~\ref{lemma:walls}. Assume for a contradiction
    that $\alpha(\gin) > \alpha(\gcenter)$.  Since $P_{\gin}(\gate{k},
    T-(\bigcup P^i))$ is a path from $\gin$ to $W_{\gin}(\gate{k}, T - (\bigcup
    P^i))$ and the bandwidth of $\alpha$ is $b$ it follows that $|B
    \cap W_{\gin}(\gate{k}, T-(\bigcup P^i)| \geq 2$, but this contradicts
    $\textit{(I)}$ and hence $\alpha(\gin) < \alpha(\gcenter)$. A symmetric
    argument gives us $\alpha(\gcenter) < \alpha(\gout)$ and our proof is
    complete.
\end{proof}

\subsubsection*{Knots and Holes}
Assuming $b \geq 2k+14$ and $b$ to be dividable by 4 we give the following two
definitions.  A \emph{$k$-knot} is a path $P = (\gfirst, \gcenter, \glast)$ with
$\frac{3}{2}b-k-1$ leaves attached to $\gcenter$. A \emph{$k$-hole} consists of
a path $P = (\gin, \icenter, \ocenter, \gout)$ with
$\frac{3}{4}b-k-1$ leaves attached to both $\icenter$ and $\ocenter$.

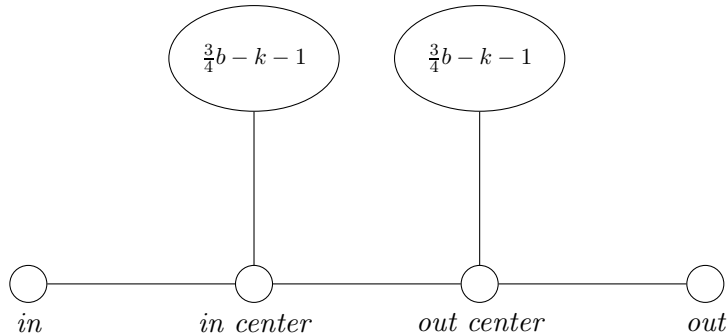
\begin{figure}[ht!]
    \centering
    \begin{tikzpicture}
        \tikzset{scale=3}
        \tikzset{Vertex/.style={shape=circle,draw,scale=0.7,minimum size=20pt}}
        \tikzset{LBag/.style={shape=ellipse,draw,scale=0.8,minimum
        size=50pt,minimum width=70pt}}
        \tikzset{Edge/.style={}}

        \node[Vertex, label=below:$\gin$] (left) at (0,0) {};
        \node[Vertex, label=below:$\icenter$] (lcenter) at (1,0) {};
        \node[Vertex, label=below:$\ocenter$] (rcenter) at (2,0) {};
        \node[Vertex, label=below:$\gout$] (right) at (3,0) {};

        \node[LBag] (lcenterl) at (1,1) {$\frac{3}{4}b-k-1$};
        \node[LBag] (rcenterl) at (2,1) {$\frac{3}{4}b-k-1$};

        \foreach \a/\b in {{left/lcenter},{lcenter/rcenter},{rcenter/right},
        {lcenter/lcenterl},{rcenter/rcenterl}} 
        \draw[Edge](\a) to node {} (\b);
    \end{tikzpicture}
    \caption{A hole. The ellipse shaped vertices represent some number of leafs.}
    \label{figures:hole}
\end{figure}

\begin{lemma}
    \label{lemma:r-hole}
    Let $(T,b)$ be an instance of \prob{$p$-Bandwidth} such that $T$ contains
    a $k$-hole $H$ and paths $P^1, \dots, P^k$ with a
    $k$-knot $K$ embedded on one of the paths as disjoint subgraphs with $H$
    being functioning in $T - (\bigcup P^i)$.  Given a $b$-bandwidth ordering
    $\alpha$ such that $P^1, \dots, P^k$ passes though $H$,
    $\max\{\alpha(W_{\gin}(\gate{k}, T - \bigcup p^i))\} <
    \min\{\alpha(W_{\gout}(\gate{k}, T - \bigcup p^i))\}$ and $I(K \cup H) \subset
    I(\gin, \gout)$ it holds that

    \begin{enumerate}[(I)]
        \item $\alpha(\gin) < \alpha(\icenter) < \alpha(\ocenter) <
            \alpha(\gout)$,
        \item $|I(\path{2}) \cap \alpha(P^i)| = 1$ 
            for every $i$ and every $\path{2} \subset \left(
            \gin, \icenter, \ocenter, \gout \right)$ and
        \item $\alpha(\icenter) < \alpha(\gcenter) < \alpha(\ocenter)$.
    \end{enumerate}
\end{lemma}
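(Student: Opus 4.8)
The plan is to get (I) and (II) out of Lemma~\ref{lemma:passing-paths-are-well-behaved} applied to the spine of the hole, then to pin down the orientation of the spine using the walls, and finally to locate the centre of the knot inside the middle cell of the now‑stretched spine. For the first part I would take $\path{l}$ to be the $4$-vertex spine $(\gin,\icenter,\ocenter,\gout)$ of $H$ (so $l=4$), the passing paths to be $P^1,\dots,P^k$, and $X$ to be the union of the $\frac34 b-k-1$ leaves at $\icenter$, the $\frac34 b-k-1$ leaves at $\ocenter$, and the $\frac32 b-k-1$ leaves at the centre of $K$. The numbers are arranged exactly so that $|X| = 2(\frac34 b-k-1)+(\frac32 b-k-1)=3(b-k-1)=(l-1)(b-k-1)$, and $X$ is disjoint from $\path{l}$ and from every $P^i$. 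Since $\gin,\gout\in H\subseteq K\cup H$ we have $I(\gin,\gout)\subseteq I(\path{l})\subseteq I(H)\subseteq I(K\cup H)$, and the hypothesis $I(K\cup H)\subseteq I(\gin,\gout)$ collapses this chain, so $I(\path{l})=I(H)=I(K\cup H)=I(\gin,\gout)$; hence $\alpha(X)\subseteq I(\path{l})$ (the hole leaves lie in $H$, the knot leaves in $K$) and every $P^i$, passing through $H$, passes through $\path{l}$. Lemma~\ref{lemma:passing-paths-are-well-behaved} then gives that $\path{l}$ is stretched and $|I(\path{2})\cap\alpha(P^i)|=1$ for every $i$ and every $\path{2}\subseteq\path{l}$; the latter is (II), and the former says that $\gin,\icenter,\ocenter,\gout$ occur in $\alpha$ with consecutive gaps exactly $b$, in spine order or in reverse.

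To fix the orientation, write $I(\path{l})=[p,p+3b]$, an interval of $3b+1$ positions. By the previous paragraph this interval is \emph{completely} filled: the four spine vertices, the $\frac34 b-k-1$ leaves at $\icenter$ and the $\frac34 b-k-1$ at $\ocenter$ (which lie in $I(H)$), the $\frac32 b-k-1$ leaves at the knot centre (which lie in $I(K)$), and the $3k$ vertices that (II) forces $\bigcup P^i$ to place in $I(\path{l})$, together account for exactly $4+2(\frac34 b-k-1)+(\frac32 b-k-1)+3k=3b+1$ distinct vertices, so $[p,p+3b]$ carries nothing else. On the other hand, after restricting and compressing $\alpha$ to $T-\bigcup P^i$, each wall occupies the full block of $2b+1$ consecutive positions centred at its centre, so Lemma~\ref{lemma:walls} forces the component of $(T-\bigcup P^i)-(W_{\gin}\cup W_{\gout})$ containing the spine (and the interiors of $P_{\gin},P_{\gout}$) to lie strictly between $\max\alpha(W_{\gin})$ and $\min\alpha(W_{\gout})$; in particular $\alpha(\gin),\alpha(\gout)>\max\alpha(W_{\gin})$. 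If the reversed order held, then $\alpha(\gout)=p$, $\alpha(\gin)=p+3b$, and $W_{\gin}$ lies entirely at positions $<p$; but then the path $P_{\gin}$, descending from $\gin$ at $p+3b$ into $W_{\gin}$ while avoiding $H-\gin$ and $\bigcup P^i$, must contain an interior vertex at a position in $[p,p+b)\subseteq I(\path{l})$, and this vertex lies outside $H\cup\bigcup P^i$ and (having degree at least $2$) is none of the listed leaves — contradicting the saturation of $[p,p+3b]$. Hence the spine order is the one in (I).

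For (III) I would use that the knot's path $(\gfirst,\gcenter,\glast)$ is a $3$-vertex subpath of some $P^j$, with $\gfirst,\gcenter,\glast\in I(K)\subseteq I(\path{l})=[p,p+3b]$. By (II) applied to $P^j$, and since $\gin,\icenter,\ocenter,\gout\notin P^j$, the path $P^j$ has exactly three vertices in $[p,p+3b]$, one strictly inside each of the cells $(\alpha(\gin),\alpha(\icenter))$, $(\alpha(\icenter),\alpha(\ocenter))$, $(\alpha(\ocenter),\alpha(\gout))$; these must be $\gfirst,\gcenter,\glast$, so $(\gfirst,\gcenter,\glast)$ is exactly the portion of $P^j$ inside $[p,p+3b]$. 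As $P^j$ passes through $H$ and omits $\gin,\gout$, it has a vertex at a position $<p$ and one at a position $>p+3b$; these lie on the two tails of $P^j$ flanking the block $(\gfirst,\gcenter,\glast)$, each tail is connected and contained in $(-\infty,p)\cup(p+3b,\infty)$, and since no edge can straddle the $3b+1$ positions of $[p,p+3b]$ each tail lies entirely below $p$ or entirely above $p+3b$, forcing the two tails onto opposite sides. A one-step estimate on the edges joining $\gfirst$ and $\glast$ to their tail-neighbours then places $\gfirst$ and $\glast$ in the two outer cells, so $\gcenter$ lands in the middle cell $(\alpha(\icenter),\alpha(\ocenter))$, which is (III).

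I expect the orientation step to be the main obstacle: one has to set up the exact counting identity that makes $I(\path{l})$ saturated and then play it against Lemma~\ref{lemma:walls} to see that the reversed order overfills $I(\path{l})$ via the connecting path $P_{\gin}$. The remaining work is essentially an invocation of Lemma~\ref{lemma:passing-paths-are-well-behaved} together with the observation that a path passing through a stretched subpath meets it in a single contiguous block.
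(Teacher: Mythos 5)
Your proof is correct. For (I) and (II) it is essentially the paper's argument: you apply Lemma~\ref{lemma:passing-paths-are-well-behaved} to the spine $(\gin,\icenter,\ocenter,\gout)$ with exactly the same set $X$ (the $3(b-k-1)$ leaves of the two hole centers and of the knot), and you rule out the reversed orientation by producing a vertex of $P_{\gin}$ inside the inclusion interval of the spine; the paper closes that case by invoking Corollary~\ref{corollary:passing-paths-requires-space} with $X\cup\{v\}$, while you recount the $3b+1$ saturated positions directly --- the same counting mechanism, and your wall-saturation step in fact supplies the justification for the existence of such a vertex $v$, which the paper only asserts. For (III) you genuinely diverge: the paper assumes (up to symmetry) $\alpha(\gcenter)\in I(\gin,\icenter)$ and gets a contradiction by packing the $\frac{3}{4}b-k-1$ leaves of $\icenter$ together with the $\frac{3}{2}b-k-1$ knot leaves into the inclusion interval of $(\gin,\icenter,\ocenter)$, violating Corollary~\ref{corollary:passing-paths-requires-space} since $\frac{9}{4}b-2k-2>2(b-k-1)$; you instead use (II) to identify $\gfirst,\gcenter,\glast$ as precisely the three vertices of $P^j$ inside $I(\gin,\gout)$, one per cell of the stretched spine, and then place $\gcenter$ in the middle cell via tail and edge-length considerations. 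Both routes are sound; the paper's is a one-line reuse of the earlier corollary, whereas yours exploits the cell structure and additionally locates $\gfirst$ and $\glast$, at the cost of a small case discussion: your ``two tails on opposite sides'' step tacitly assumes both tails of $P^j$ around the knot are nonempty, though the degenerate case is easily dismissed (a single tail would need vertices both below $p$ and above $p+3b$ while avoiding the saturated block), and in fact the simpler observation that $\gcenter$ is adjacent to both $\gfirst$ and $\glast$, which sit in distinct cells, already forbids $\gcenter$ from occupying an outer cell.
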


\begin{proof}
    First we prove the correctness of \textit{(I)} and \textit{(II)}. Let
    $\path{4}
    = (\gin, \icenter, \ocenter, \gout)$ and let $X_c, X_i$ and $X_o$
    be the set of leaves attached to $\gcenter, \icenter$ and $\ocenter$
    respectively. Apply Lemma~\ref{lemma:passing-paths-are-well-behaved} with
    $X = X_i \cup X_c \cup X_o$  to obtain \textit{(II)} and either
    $\alpha(\gin) < \alpha(\icenter) < \alpha(\ocenter) <
    \alpha(\gout)$ or $\alpha(\gout) < \alpha(\ocenter) <
    \alpha(\icenter) < \alpha(\gin)$ since $|X| = 2\left(\frac{3}{4}b-k-1
    \right) + \frac{3}{2}b -k-1 = (4-1)(b-k-1)$. Assume for a contradiction
    that $\alpha(\gout) < \alpha(\ocenter) <
    \alpha(\icenter) < \alpha(\gin)$. Then there is a vertex $v \in P_{\gin}(H,
    T - (\bigcup P^i))\cap \alpha^{
    -1}(I(H)) \setminus \left\{ \gin \right\}$. Apply
    Corollary~\ref{corollary:passing-paths-requires-space} with $X = X_i \cup
    X_c \cup X_o \cup \left\{ v \right\}$ to get a contradiction and hence
    \textit{(I)} holds. 

    It remains to prove \textit{(III)}. Assume for a contradiction that
    $\alpha(\gcenter) \notin I(\icenter, \ocenter)$.  Furthermore, assume
    without loss of generality that $\alpha(\gcenter) \in I(\gin,
    \icenter)$. It follows from
    Lemma~\ref{lemma:passing-paths-are-well-behaved} that $\path{4}$ is stretched
    and hence $X_l \cup X_m \subseteq I(\path{3})$ for $\path{3} = (\gin, \icenter,
    \ocenter)$. Apply Corollary~\ref{corollary:passing-paths-requires-space}
    with $X = X_i \cup X_c$ to obtain a contradiction since $|X| =
    \frac{3}{4}b -k - 1 + \frac{3}{2}b - k - 1 = \frac{9}{4}b - 2k - 2 >
    (3-1)(b-k-1)$ and hence our proof is complete. 
\end{proof}

\subsection{The Reduction}

We will now give a reduction from an instance $(G, k)$ of \prob{$p$-Even
Clique} to an instance $(T, b)$ of \prob{$p$-Bandwidth}. The correctness and
implications will be given in the two following sections. The resulting
instance $T$ can be divided into eleven parts. Nine of them lie on the main
path and will in the future be referred to as the sectors of the main path.
The nine sectors are the first wall, the first wasteland, the first gateland, the
selector, the middle gateland, the validator, the last gateland, the last
wasteland and the last wall. The two other components will be referred to as
threads and fillers. Each of the components have a specific purpose with
respect to how a $b$-bandwidth ordering can be. The walls will force
everything else to be positioned within them. The threads are $k$ paths
attached to the first wasteland and each of them represents a vertex in the
supposed clique in $G$. To encode how $G$ looks like we attach leaves to the
threads, which will be referred to as the dangelments of the threads. How much
of a thread that is in the inclusion interval of the first wasteland decides
which vertex in $G$ this thread represents. To propagate this information the
threads are made so long that they will have to enter the inclusion interval
of the last wasteland. The selectors job is to make sure that the decisions
made by the threads are unique and valid. The validator will verify that the
selected vertices in fact is a clique. And the fillers and the gatelands will
control how information propagates between the other components.  When
describing the components on the main path we will assume the vertices of the
path to be named $u_1, \dots$, with $u_1$ being the leftmost vertex in
Figure~\ref{figures:reduction-frame}.

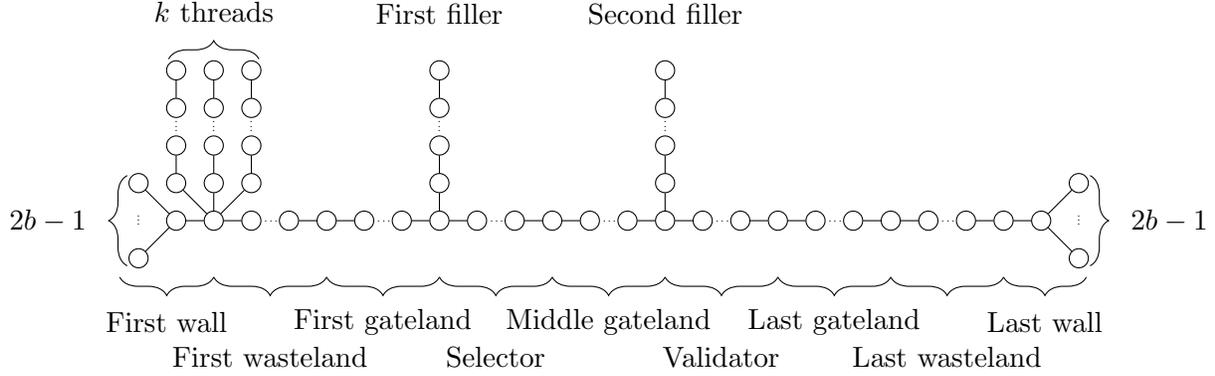
\begin{figure}[ht!]
  \centering
  \begin{tikzpicture}
    \tikzset{scale=0.5}
    \tikzset{Vertex/.style={shape=circle,draw,scale=0.7}}
    \tikzset{Edge/.style={}}
    \tikzset{Dots/.style={scale=0.4}}

    \foreach \e/\x/\y [count=\k] in {
    0/0/0,
    1/1/0,
    2/2/0,
    3/3/0,
    4/4/0,
    5/5/0,
    6/6/0,
    7/7/0,
    8/8/0,
    9/9/0,
    10/10/0,
    11/11/0,
    12/12/0,
    13/13/0,
    14/14/0,
    15/15/0,
    16/16/0,
    17/17/0,
    18/18/0,
    19/19/0,
    20/20/0,
    21/21/0,
    22/22/0,
    23/23/0,
    p11/0/1,
    p12/0/2,
    p1l1/0/3,
    p1l2/0/4,
    p21/1/1,
    p22/1/2,
    p2l1/1/3,
    p2l2/1/4,
    pk1/2/1,
    pk2/2/2,
    pkl1/2/3,
    pkl2/2/4,
    f11/7/1,
    f12/7/2,
    f13/7/3,
    f14/7/4,
    f21/13/1,
    f22/13/2,
    f23/13/3,
    f24/13/4,
    lw1/-1/1,
    lw2/-1/-1,
    rw1/24/1,
    rw2/24/-1
    }
    \node[Vertex] (\e) at (\x,\y) {};

    \foreach \a/\b in 
    {{0/1},{1/2},{3/4},{4/5},{6/7},{7/8},{9/10},{10/11},{12/13},{13/14},
    {15/16},{16/17},{18/19},{19/20},{21/22},{22/23},
    {1/p11},{p11/p12},{1/pk1},{pk1/pk2},{p1l1/p1l2},{pkl1/pkl2},
    {1/p21},{p21/p22},{p2l1/p2l2}, 
    {7/f11},{f11/f12},{f13/f14},{13/f21},{f21/f22},{f23/f24},
    {0/lw1},{0/lw2},{23/rw1},{23/rw2}}
    \draw[Edge](\a) to node {} (\b);

    \node[Dots] () at (0,2.6) {$\vdots$};
    \node[Dots] () at (2,2.6) {$\vdots$};
    \node[Dots] () at (1,2.6) {$\vdots$};
    \node[Dots] () at (7,2.6) {$\vdots$};
    \node[Dots] () at (13,2.6) {$\vdots$};
    \node[Dots] () at (2.55, 0) {$\dots$};
    \node[Dots] () at (5.55, 0) {$\dots$};
    \node[Dots] () at (8.55, 0) {$\dots$};
    \node[Dots] () at (11.55, 0) {$\dots$};
    \node[Dots] () at (14.55, 0) {$\dots$};
    \node[Dots] () at (17.55, 0) {$\dots$};
    \node[Dots] () at (20.55, 0) {$\dots$};
    \node[Dots] () at (-1, 0.1) {$\vdots$};
    \node[Dots] () at (24, 0.1) {$\vdots$};

    \draw [decorate,decoration={brace,amplitude=7pt},xshift=0pt,yshift=0pt]
    (-0.2,4.3) -- (2.2,4.3) node [black,midway,yshift=18pt] {$k$ threads};

    \draw [decorate,decoration={brace,amplitude=7pt},xshift=0pt,yshift=0pt]
    (-1.3,-1.2) -- (-1.3,1.2) node [black,midway,xshift=-30pt] {$2b-1$};
    
    \draw [decorate,decoration={brace,amplitude=7pt},xshift=0pt,yshift=0pt]
    (24.3,1.2) -- (24.3,-1.2) node [black,midway,xshift=30pt] {$2b-1$};
    
    \draw [decorate,decoration={brace,amplitude=7pt},xshift=0pt,yshift=0pt]
    (1,-1.5) -- (-1.5,-1.5) node [black,midway,yshift=-17pt] {First wall};
    
    \draw [decorate,decoration={brace,amplitude=7pt},xshift=0pt,yshift=0pt]
    (4,-1.5) -- (1,-1.5) node [black,midway,yshift=-30pt] {First wasteland};
    
    \draw [decorate,decoration={brace,amplitude=7pt},xshift=0pt,yshift=0pt]
    (7,-1.5) -- (4,-1.5) node [black,midway,yshift=-17pt] {First gateland};
    
    \draw [decorate,decoration={brace,amplitude=7pt},xshift=0pt,yshift=0pt]
    (10,-1.5) -- (7,-1.5) node [black,midway,yshift=-30pt] {Selector};
    
    \draw [decorate,decoration={brace,amplitude=7pt},xshift=0pt,yshift=0pt]
    (13,-1.5) -- (10,-1.5) node [black,midway,yshift=-17pt] {Middle gateland};
    
    \draw [decorate,decoration={brace,amplitude=7pt},xshift=0pt,yshift=0pt]
    (16,-1.5) -- (13,-1.5) node [black,midway,yshift=-30pt] {Validator};
    
    \draw [decorate,decoration={brace,amplitude=7pt},xshift=0pt,yshift=0pt]
    (19,-1.5) -- (16,-1.5) node [black,midway,yshift=-17pt] {Last gateland};
    
    \draw [decorate,decoration={brace,amplitude=7pt},xshift=0pt,yshift=0pt]
    (22,-1.5) -- (19,-1.5) node [black,midway,yshift=-30pt] {Last wasteland};
    
    \draw [decorate,decoration={brace,amplitude=7pt},xshift=0pt,yshift=0pt]
    (24.2,-1.5) -- (22,-1.5) node [black,midway,yshift=-17pt] {Last wall};

    \node (l1) at (7,5.5) {First filler}; 
    \node (l2) at (13,5.5) {Second filler}; 
  \end{tikzpicture}
  \caption{A subgraph of $T$ with the components marked.}
  \label{figures:main-path-sections}
\end{figure}

When discussing vertices and subgraphs of $T$ we will apply an ordering based
on the distance from the center of the first wall, the leftmost wall in
Figure~\ref{figures:main-path-sections}. We will say that a vertex $u$ comes
before a vertex $v$ if $u$ is closer to the center of the first wall than $v$.
For subgraphs, we will compare the minimized distance over all vertices in each
subgraph. To complete our construction we need an ordering of the vertices of
$G$, we therefore let $V(G) = \left\{ v_1, \dots, v_{|V(G)|} \right\}$.

\subsubsection*{The First Wall, Wasteland and Gateland}
To ensure enough space for the gadgets in the validator we introduce
the \emph{pull-factor} $p$, which will correspond to the distance from the
$\gin$ vertex of a hole in the selector to the $\gin$ vertex of the next hole.
The pull-factor is $4n+3$ in our reduction, but will for convenience mostly be
referred to as $p$.

The first sector we will embed is the first wall. This is done by turning 
$u_1$ into the center of a wall by attaching leafs to it. Second comes the
first wasteland. This is done by attaching nothing to the vertices $u_2$ until
$u_{m_1}$ for $m_1 = pnk+2$. Note that $u_2$ is the vertex for which the
threads are connected. After this we embed $bm_1$ consecutive $k$-gates from
$u_{m_1}$ to $u_{(2b+1)m_1}$ to create the first gateland. This is done in such
a way that the $\gin$ vertex of the $i$'th gate is the $\gout$ vertex of the
$i-1$'th gate.

\subsubsection*{The Selector}
The selector will control the choices done by the threads. The idea is to let
the selector have $|V(G)|$ sparse intervals, namely holes, and let each of the
threads have a big knot, which can only be placed within such an interval. The
vertex selected by a thread is then decided by which hole its knot is placed
within. 

The embedding of the selector starts where the first gateland ended, at vertex
$u_{(2b+1)m_1}$. Note that this is the vertex where the first filler is
attached in Figure~\ref{figures:reduction-frame}. We now embed $|V(G)|$ $(k+1)$
holes with $(p-3)/2$ consecutive $(k+1)$-gates in between every consecutive
pair of holes on the path $(u_{(2b+1)m_1}, \dots, u_{(2b+1)m_1+p(n-1)+3})$.
After this we embed $b(p(n-1)+3)$ consecutive $(k+1)$-gates. In total, the
selector is embedded on the vertices $(u_{(2b+1)m_1}, \dots, u_{m_2})$ for $m_2
= (2b+1)m_1 + (2b+1)(p(n-1)+3)$.

\begin{figure}[ht!]
  \centering
  \begin{tikzpicture}
    \tikzset{scale=0.5}
    \tikzset{Vertex/.style={shape=circle,draw,scale=0.7}}
    \tikzset{Edge/.style={}}
    \tikzset{Dots/.style={scale=0.4}}
    \tikzset{LBag/.style={shape=ellipse,draw,scale=0.5,minimum
    size=10pt,minimum width=25pt}}
    
    \foreach \e/\x/\y [count=\k] in {
    0/0/0,
    1/1/0,
    2/2/0,
    3/3/0,
    4/4/0,
    5/5/0,
    6/6/0,
    7/7/0,
    8/8/0,
    9/9/0,
    10/10/0,
    11/11/0,
    12/12/0,
    13/13/0,
    14/14/0,
    15/15/0,
    16/16/0,
    17/17/0,
    18/18/0,
    19/19/0,
    20/20/0,
    21/21/0,
    22/22/0,
    23/23/0,
    24/24/0,
    25/25/0,
    26/26/0, 
    27/27/0,
    28/28/0,
    29/29/0,
    30/30/0,
    p10/2/1,
    p11/2/2,
    p12/2/3,
    p13/2/4,
    p14/2/5,
    p15/3/6,
    p16/4/6,
    p17/5/6,
    p18/6/6,
    p19/7/6,
    p110/8/6,
    p111/9/6,
    p112/10/6,
    p113/11/6,
    p114/12/6,
    p115/13/6,
    p116/14/6,
    p117/15/6,
    p118/16/6,
    p119/17/6,
    p120/18/6,
    p121/19/6,
    p122/20/6,
    p123/21/6,
    p124/22/6,
    p125/23/6,
    p126/24/6,
    p127/25/6,
    p128/26/6,
    p129/27/6,
    p130/28/6,
    p20/1/1,
    p21/1/2,
    p22/1/3,
    p23/1/4,
    p24/1/5,
    p25/1/6,
    p26/1/7,
    p27/2/8,
    p28/3/8,
    p29/4/8,
    p210/5/8,
    p211/6/8,
    p212/7/8,
    p213/8/8,
    p214/9/8,
    p215/10/8,
    p216/11/8,
    p217/12/8,
    p218/13/8,
    p219/14/8,
    p220/15/8,
    p221/16/8,
    p222/17/8,
    p223/18/8,
    p224/19/8,
    p225/20/8,
    p226/21/8,
    p227/22/8,
    p228/23/8,
    p229/24/8,
    p230/25/8,
    p231/26/8,
    p232/27/8,
    p233/28/8,
    pk0/0/1,
    pk1/0/2,
    pk2/0/3,
    pk3/0/4,
    pk4/0/5,
    pk5/0/6,
    pk6/0/7,
    pk7/0/8,
    pk8/0/9,
    pk9/1/10,
    pk10/2/10,
    pk11/3/10,
    pk12/4/10,
    pk13/5/10,
    pk14/6/10,
    pk15/7/10,
    pk16/8/10,
    pk17/9/10,
    pk18/10/10,
    pk19/11/10,
    pk20/12/10,
    pk21/13/10,
    pk22/14/10,
    pk23/15/10,
    pk24/16/10,
    pk25/17/10,
    pk26/18/10,
    pk27/19/10,
    pk28/20/10,
    pk29/21/10,
    pk30/22/10,
    pk31/23/10,
    pk32/24/10,
    pk33/25/10,
    pk34/26/10,
    pk35/27/10,
    pk36/28/10,
    lw1/-1/1,
    lw2/-1/-1,
    f1/12/1,
    f2/12/2,
    f3/12/3,
    f4/13/4,
    f5/14/4,
    f6/15/4,
    f7/16/4,
    f8/17/4,
    f9/18/4,
    f10/19/4,
    f11/20/4,
    f12/21/4,
    f13/22/4,
    f14/23/4,
    f15/24/4,
    f16/25/4,
    f17/26/4,
    f18/27/4
    }
    \node[Vertex] (\e) at (\x,\y) {};

    \node[LBag] (lb1) at (6,1) {$\gamma$};
    \node[LBag] (lb2) at (8,1) {$\gamma$};
    \node[LBag] (lb3) at (11,1) {$\gamma$};
    \node[LBag] (lb4) at (13,1) {$\eta$};
    \node[LBag] (lb5) at (14,1) {$\eta$};
    \node[LBag] (lb6) at (16,1) {$\Gamma$};
    \node[LBag] (lb7) at (19,1) {$\Gamma$};
    \node[LBag] (lb8) at (21,1) {$\eta$};
    \node[LBag] (lb9) at (22,1) {$\eta$};
    \node[LBag] (lb12) at (29,1) {$\gamma$};
    \node[LBag] (lb10) at (24,1) {$\Gamma$};
    \node[LBag] (lb11) at (27,1) {$\Gamma$};
    \node[LBag] (lb14) at (21.5,5) {$\kappa$};
    \node[LBag] (lb15) at (21.5,7) {$k$};
    \node[LBag] (lb16) at (21.5,9) {$k$};
    
    \foreach \a/\b in 
    {{0/1},{1/2},{2/3},{4/5},{1/p10},
    {1/p20},{1/pk0},
    {5/6},{6/7},{7/8},{8/9},{0/lw1},{0/lw2},{lb1/6},{lb2/8},{10/11},{11/12},{11/lb3},{12/13},
    {13/14},{lb4/13},{lb5/14},{14/15},{15/16},{16/lb6},{16/17},{18/19},
    {19/20},{20/21},{21/22},{22/23},{23/24},{24/25},{26/27},{27/28},{28/29},{29/30},{lb7/19},{lb8/21},
    {lb9/22},{lb12/29},{lb10/24},{lb11/27},{12/f1},{f1/f2},{f2/f3},{f4/f5},{f5/f6},{f6/f7},{f7/f8},{f8/f9},
    {f9/f10},{f11/f12},{f12/f13},{f13/f14},{f14/f15},{f15/f16},{f16/f17},{f17/f18},
    {p10/p11},{p11/p12},{p12/p13},{p13/p14},{p15/p16},{p16/p17},{p17/p18},{p18/p19},{p19/p110},{p110/p111},
    {p112/p113},{p113/p114},{p114/p115},{p115/p116},{p116/p117},{p117/p118},{p118/p119},{p119/p120},
    {p120/p121},{p121/p122},{p122/p123},{p123/p124},{p124/p125},{p125/p126},{p126/p127},{p127/p128},{p128/p129},{p129/p130},
    {p20/p21},{p21/p22},{p22/p23},{p23/p24},{p24/p25},{p25/p26},{p27/p28},{p28/p29},{p29/p210},{p210/p211},{p211/p212},
    {p212/p213},{p213/p214},{p215/p216},{p216/p217},{p217/p218},{p218/p219},{p219/p220},
    {p220/p221},{p221/p222},{p222/p223},{p223/p224},{p224/p225},{p225/p226},{p226/p227},{p227/p228},{p228/p229},{p229/p230},
    {p230/p231},{p231/p232},{p232/p233},
    {pk0/pk1},{pk1/pk2},{pk2/pk3},{pk3/pk4},{pk4/pk5},{pk5/pk6},{pk6/pk7},{pk7/pk8},{pk9/pk10},{pk10/pk11},{pk11/pk12},
    {pk12/pk13},{pk13/pk14},{pk14/pk15},{pk15/pk16},{pk16/pk17},{pk18/pk19},{pk19/pk20},
    {pk20/pk21},{pk21/pk22},{pk22/pk23},{pk23/pk24},{pk24/pk25},{pk25/pk26},{pk26/pk27},{pk27/pk28},{pk28/pk29},{pk29/pk30},
    {pk30/pk31},{pk31/pk32},{pk32/pk33},{pk33/pk34},{pk34/pk35},{pk35/pk36}}
    \draw[Edge](\a) to node {} (\b);

    \draw[Edge, bend left](f3) to node {} (f4);
    \draw[Edge, bend left](p14) to node {} (p15);
    \draw[Edge, bend left](p26) to node {} (p27);
    \draw[Edge, bend left](pk8) to node {} (pk9);
    \draw[Edge, bend left](p123) to node {} (lb14);
    \draw[Edge, bend left](p226) to node {} (lb15);
    \draw[Edge, bend left](pk29) to node {} (lb16);

    \node[Dots] () at (-1, 0.1) {$\vdots$};
    \node[Dots] () at (3.55, 0) {$\dots$};
    \node[Dots] () at (9.55, 0) {$\dots$};
    \node[Dots] () at (17.55, 0) {$\dots$};
    \node[Dots] () at (25.55, 0) {$\dots$};
    \node[Dots] () at (30.55, 0) {$\dots$};
    \node[Dots] () at (19.55, 4) {$\dots$};
    \node[Dots] () at (28.55, 6) {$\dots$};
    \node[Dots] () at (28.55, 8) {$\dots$};
    \node[Dots] () at (28.55, 10) {$\dots$};
    \node[Dots] () at (9.55, 6) {$\dots$};
    \node[Dots] () at (9.55, 8) {$\dots$};
    \node[Dots] () at (9.55, 10) {$\dots$};

    \draw [decorate,decoration={brace,amplitude=10pt},xshift=0pt,yshift=0pt]
    (5,-0.5) -- (0,-0.5) node [black,midway,yshift=-19pt] {First wasteland};
    
    \draw [decorate,decoration={brace,amplitude=10pt},xshift=0pt,yshift=0pt]
    (12,-0.5) -- (5,-0.5) node [black,midway,yshift=-20pt] {First gateland};

    \draw [decorate,decoration={brace,amplitude=10pt},xshift=0pt,yshift=0pt]
    (28,-0.5) -- (12,-0.5) node [black,midway,yshift=-20pt] {Selector};
  \end{tikzpicture}

    \caption{Illustration of the selector where $\gamma = 2(b-k-1)$, $\Gamma = 2(b-k-2)$,
      $\eta = \frac{3}{4}b-k-1$ and $\kappa =
      \frac{3}{2}b-k-1$.}
  \label{figures:chooser}
\end{figure}
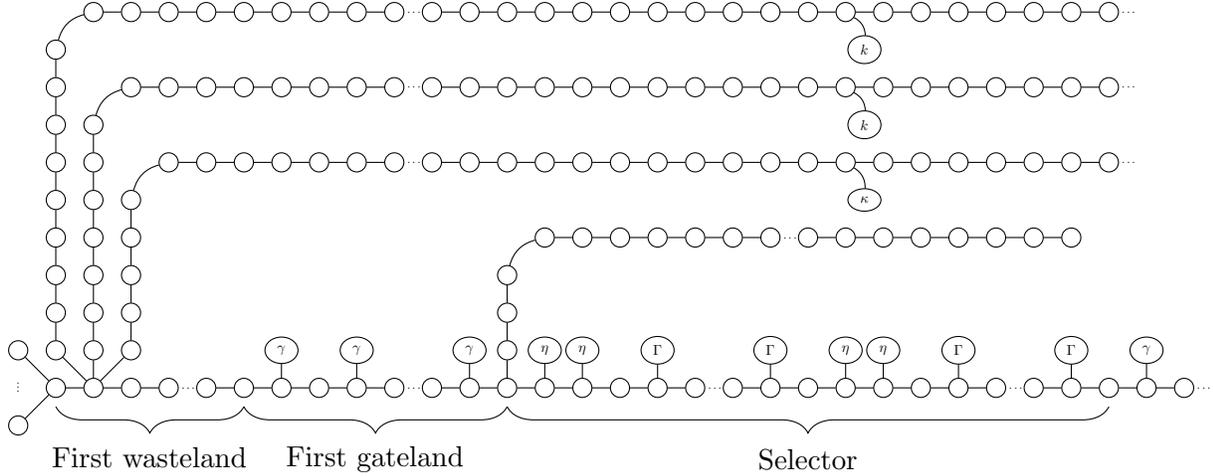

\subsubsection*{The Middle Gateland}
The middle gateland consist of $bm_2$ consecutive $k$-gates, embedded on the
main path from vertex $u_{m_2}$ to vertex $u_{(2b+1)m_2}$.

\subsubsection*{The Validator}
We will now give the validator. Its job is to verify that the selected vertices
of the threads in fact is a clique. The validator starts with $n-1$ neutral
zones, followed by a validation zone and another $n-1$ neutral zones. After
this there will be $b(2n-1)(4n+3)$ consecutive $(k+1)$-gates. A neutral zone is
a $P_{4n+4}$. The zones will be joined by sharing endpoints in the same style
as the gadgets in the selector. The validation zone consists of a $\path{4n+4}$
where there is $n$ $(k+1)$-holes sharing endpoints embedded on the last $3n+1$
vertices. The validator is hence embedded on the vertices $(u_{(2b+1)m_2},
\dots, u_{m_3})$ for $m_3 = (2b+1)m_2 + (2b+1)(2n-1)(4n+3)$.

\subsubsection*{The Last Gateland, Wasteland and Wall}
The last gateland consists of $bm_3$ consecutive $k$-gates embedded on the
vertices $(u_{m_3}, \dots, u_{(2b+1)m_3})$. After this we embed the last
wasteland, which means that we leave the vertices $(u_{(2b+1)m_3},
u_{b^2(2b+1)m_3})$ untouched. Finally we turn the vertex
$u_{b^2(2b+1)m_3+1}$ into the center of the last wall by attaching leafs to it.

\subsubsection*{The Threads and Their Danglements}

We will now describe the threads and their danglements. As they are all
isomorphic, it is sufficient to describe one of them. Let us name the vertices
on the thread by $t_2, \dots$ with $t_2 = u_2$. The leafs neighbouring to the
thread will be referred to as its danglements. First we turn $t_{(2b+1)m_1+1}$
into the center of a $(k+1)$-knot by attaching leafs to it. Starting at vertex
$t_{ (2b+1)m_2 + (n-1)(4n+3)}$ we consider $n$ consecutive, disjoint
$P_{4n+3}$. For $\path{4n+3}$ number $i$ we do the following. We divide the
$\path{4n+3}$ into disjoint subpaths, first a $\path{n+3}$ followed by $n$
$P_3$.  Consider the $j$'th $P_3$. If $i = j$ we turn the middle vertex of the
$P_3$ into the center of a $(k+1)$-knot.  Otherwise we attach a single leaf to
the middle vertex if $(v_i,v_j) \notin E(G)$. This leaf will be referred to as
a non-neighbouring leaf. After this we extend the thread with additional
$b(2b+1)m_3$ vertices. We would like to make the reader aware of the fact that
a thread is a path and the vertices connected to it, is its dangelments.

\subsubsection*{The Fillers}
A fillers job is to fill up all available room within a component of $T$ to
force this part of the main path to be stretched. To accomplish this we let the
filler connected to $u_{(2b+1)m_1}$ be of length $(n-k)(\frac{3}{2}b-k-2) +
(2b+1)(p(n-1)+3)$.  And the filler attached to $u_{(2b+1)m_2}$ to be
of length $(b-1)(4n+3)(2n-1) + 2b(4n+3)(2n-1) -(k(2n-1)(4n+3)  +
k(n(\frac{3}{2}b-k-2)+n^2-n-2m)+2n(\frac{3}{4}b-k-2))$.

\subsection{Correctness}
With the next lemmas we will prove the correctness of the reduction. After
this we will continue by giving the implications of this reduction, which are
the main results of this section. Recall that $b = 4k+16$ and $p = 4n+3$. 
\begin{lemma}
    \label{lemma:argangement}
    Given a yes-instance $(G, k)$ of \prob{$p$-Even Clique} the reduction
    instance $(T,b)$ is a yes-instance of \prob{$p$-Bandwidth}.
\end{lemma}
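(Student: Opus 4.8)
The plan is to build an explicit $b$-bandwidth layout of $T$ out of a $k$-clique of $G$. It is cleanest to first produce a \emph{sparse} ordering $\beta : V(T)\to\mathbb{Z}$ of bandwidth at most $b$ and then take a compression of it: a compression induces the same left-to-right order as $\beta$ while only pulling vertices closer together, so it can never raise the bandwidth, and hence yields the desired $b$-bandwidth ordering. Fix a clique $C=\{v_{i_1},\dots,v_{i_k}\}$ of $G$ with $i_1<i_2<\dots<i_k$. The backbone of $\beta$ is to \emph{stretch the main path}: place the walls' leaves into the $2b$ integer positions flanking their centers, place the main-path vertices $u_j$ roughly at multiples of $b$ (exactly $b$ apart on the gatelands, the selector and the validator), and drop every remaining vertex into the ``columns'' of $b-1$ consecutive free positions that sit strictly between two consecutive main-path vertices. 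All the content of the lemma is then a scheduling argument: which thread vertex, danglement, or filler vertex goes into which column, so that every column is exactly full and every edge stays within distance $b$.

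The central scheduling choice is the \emph{lag} of each thread: thread number $j$ keeps its first $(i_j-1)p$ vertices inside the inclusion interval of the first wasteland and only afterwards begins marching one column to the right per main-path step, contributing exactly one vertex to each subsequent column. Since the pull-factor is $p=4n+3$ and the selector's holes are $p$ columns apart, this makes the $(k+1)$-knot of thread $j$ land inside selector hole number $i_j$; because the $i_j$ are distinct each selector hole gets at most one knot, and a $(k+1)$-hole has exactly enough slack to hold one knot together with the single vertex each of the other $k-1$ threads deposits in each of its two halves, which is the local arithmetic behind Lemma~\ref{lemma:r-hole}. The very same lag shifts the adjacency-matrix block of the thread so that its row number $i_j$ falls inside the validation zone and its other $n-1$ rows fall inside neutral zones (the $n-1$ neutral zones on each side of the validation zone exist precisely to absorb selections close to $1$ or to $n$). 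The fillers, whose admittedly baroque lengths were chosen for exactly this purpose, take up every column of the first gateland together with the selector, and of the middle gateland together with the validator, that is not used by a thread vertex, so those stretches of the main path really are stretched; matching the stated filler lengths and the wasteland lengths $m_1=pnk+2$, $m_2$, $m_3$ against the number of free column-slots is the bulk of the routine computation.

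Finally one checks locally that $\beta$ has bandwidth at most $b$, and here the clique hypothesis is used. At a $k$-gate the $2b+1$ positions of its window are filled by the center, its $2(b-k)$ leaves, and one vertex from each of the $\le k$ passing threads, so all gate edges are short; the $(k+1)$-gate and hole counts are analogous. The one delicate point is the validation zone: its hole number $m$ receives from thread $j$ a knot when $i_j=m$, a single non-neighbouring leaf when $i_j\neq m$ and $(v_{i_j},v_m)\notin E(G)$, and nothing otherwise. When $m=i_{j'}$ is a selected vertex the hole gets the knot from thread $j'$ and, from every other thread $j$, a leaf only if $v_{i_j}$ and $v_{i_{j'}}$ are non-adjacent---impossible since $C$ is a clique---so the hole holds exactly one knot and fits; when $m$ is unselected the hole holds no knot and can absorb up to $k$ single-leaf vertices, which is all it can be asked to hold. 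The wall edges are short because each wall's $2b$ leaves occupy the $2b$ positions next to its center, and all remaining edges (on the wastelands, on the not-yet-stretched parts of the main path, and along the thread tails) lie in long stretched or near-stretched paths and are trivially within $b$. The real obstacle is not any single local check but the global bookkeeping: verifying that the wasteland and filler lengths leave every column exactly full, so that the stretched layout sketched above is in fact realizable as an injection into $\mathbb{Z}$.
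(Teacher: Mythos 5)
Your plan is the paper's plan: produce a sparse ordering in which the main path is stretched, encode the selection of each clique vertex by how much of the corresponding thread is held back in the first wasteland, place the knots inside selector holes, invoke the clique hypothesis only to rule out a knot and a non-neighbouring leaf meeting in a validation-zone hole, let the fillers absorb the remaining slots, and compress at the end. So the approach is not the issue; the issue is that the one explicit scheduling rule you commit to is wrong, and the part you defer as ``routine computation'' is exactly where the paper's proof does its work.

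Concretely: once the main path is stretched, two positions lying in non-adjacent gaps between consecutive main-path vertices differ by at least $b+2$, so consecutive thread vertices must occupy the same or adjacent gaps; a thread can never skip a gap, and holding vertices back in the first wasteland therefore only moves its knot to the \emph{left}. Since the knot is a fixed vertex of the thread (number $(2b+1)m_1+1$), the amount a thread stores in the wasteland equals that fixed prefix length minus the number of gaps up to the knot's gap; it thus \emph{decreases} as the index of the selected hole grows, and its magnitude is of order $m_1\approx pnk$, not $p(i_j-1)\le p(n-1)$. With your rule (store the first $(i_j-1)p$ vertices, then march one column per step) the knot of a thread with $i_j\ge 2$ lands roughly $(i_j-1)p$ gaps to the left of the selector, i.e.\ inside the first gateland, where every gap is already exhausted by gate leaves plus the one-vertex-per-thread quota; the layout is infeasible there, and in particular the knot does not land in hole $i_j$ as you claim. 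The follow-up assertion that ``the very same lag'' automatically places row $i_j$ of the adjacency block inside the validation zone is likewise not automatic: it is an exact identity relating the knot's index on the thread, the starting index of the row block, the pull-factor, and the lengths of the wastelands, gatelands and neutral zones, and verifying it --- together with the column-by-column count showing that threads, danglements and fillers exactly exhaust the free positions (which is what dictates the filler lengths) --- constitutes most of the paper's argument. As written, your proposal asserts the conclusion of that bookkeeping rather than establishing it.
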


\begin{proof}

    We will now give a sparse ordering $\alpha$ of bandwidth $b=4k+16$, meaning
    that the image of $\alpha$ might not be an interval. To obtain a proper
    bandwidth ordering one can just compress $\alpha$. During the description
    of $\alpha$ a \emph{position} is a number in $\mathcal{N}$ that will be in
    the image of $\alpha$ and a vertex $v$ is said to be \emph{positioned} if
    the value $\alpha(v)$ has been given. Furthermore, we will say that $v$ is
    \emph{positioned at} $c$ if $\alpha(v) = c$. By \emph{reserving} a position
    for a subgraph $H$ of $T$ we guarantee that if a vertex will be positioned
    at that specific position, it will be a vertex of $H$. And by a position
    being \emph{available} we will mean that no vertex has been positioned at
    that specific position so far. Let $C_k = \left\{ c_1, \dots, c_k \right\}$
    be a $k$-clique in $G$. 
    
    For a vertex $u_i$ on the main path let $\alpha(u_i) = bi+1$. We continue
    by positioning the remainders of the two walls. And let $c_f$ be the center
    of the first wall and $L_f$ be the neighbouring leaves of $c_f$.  Let
    $\alpha(L_f) = \left[ \alpha(c_f)-b, \alpha(c_f)+b-1 \right] \setminus
    \left\{ \alpha(c_f) \right\}$ in some arbitrary way. Similarly for the
    last wall, let $\alpha(L_l) = \left[ \alpha(c_l)-b-1, \alpha(c_l)+b
    \right] \setminus \left\{ \alpha(c_l) \right\}$. Observe that for every
    two vertices $u$ and $v$ of $T$ such that both $\alpha(u)$ and $\alpha(v)$
    has been described, it holds that $\alpha(u) \neq \alpha(v)$. Furthermore,
    if $uv$ is an edge in $T$ it is true that $|\alpha(u) - \alpha(v)| \leq
    b$.

    Order the threads of $T$ and name them $\tau_1, \dots, \tau_k$.  Let $u$
    and $v$ be two neighbours on the main path such that neither $u$ nor $v$
    is the center of a wall and so that $\alpha(u) < \alpha(v)$.  Observe that
    there is $b-1$ available positions within $I(u,v)$.  Reserve the $k$
    positions in the middle of $I(u,v)$, one for each of the $k$ threads. If
    there are two positions equally close to the middle, take the leftmost
    one.  The leftmost is reserved for the first thread, the second to
    leftmost for the second thread and so forth.
    
    For every $i$ let $j_i$ be such that $c_i = v_{j_i}$. Consider hole number
    $j_i$ on the main path starting at the first wall, with $h_1, h_2, h_3$
    and $h_4$ being the vertices on the main path for which the hole is
    embedded on such that $\alpha(h_1) < \alpha(h_2) < \alpha(h_3) <
    \alpha(h_4)$. Thus $h_1, h_2, h_3$ and $h_4$ are the $\gin, \icenter,
    \ocenter$ and $\gout$ vertices of the hole respectively. Let $c$ be the
    center of the first knot on $\tau_i$ and $r$ the reserved position for
    $\tau_i$ in $\alpha$ within $I(h_2, h_3)$.  We then set $\alpha(c) = r$
    and complete the following procedure in the left (and right) direction on
    the thread $\tau_i$. Let $P$ be the path from $c$ to $N(u_2) \cap \tau_i$
    (or to the end of the thread). If every vertex of $P$ is positioned we
    stop. Otherwise, let $u$ be the vertex closest to $c$ on $P$ not yet
    positioned. Furthermore, let $\hat{P}_2$ be the rightmost (leftmost) $P_2$
    on the main path to the left (right) of the hole such that the position
    reserved for $\tau_i$ is available in $I(\hat{P}_2)$.  If $\hat{P}_2$ is
    not part of any wasteland we set $\alpha(u)$ to this reserved position and
    continue.  Otherwise we consider two cases. If we are right of $r$ we
    position $u$ at the leftmost position within $I(\hat{P}_2)$ that is either
    not reserved yet, or reserved for $\tau_i$. If we are left of $r$ we again
    consider two cases. Either there are exactly as many positions to the left
    of $r$ reserved for $\tau_i$ as there are vertices before $c$ not yet
    positioned.  In that case we position $u$ at the reserved position for
    $\tau_i$ within $I(\hat{P}_2)$.  Otherwise, we position $u$ at the
    rightmost position in $I(\hat{P}_2)$ that is either not reserved yet, or
    reserved for $\tau_i$. Observe that if $uv$ is an edge of
    $\tau_i$ there are positions reserved for $\tau_i$, $x$ and $y$, such that
    $y > x$ and $y-x = b$ and $\alpha(u)$ and $\alpha(v)$ are contained in
    $[x, y]$. It follows that $|\alpha(u) - \alpha(v)| \leq b$.
    
    Note that the number of vertices on a thread that will be positioned to the
    left of $r$ is $(2b+1)m_1$ and that, by construction, $2bm_1$ of these will
    be within the inclusion interval of the first gateland. Hence it can be
    observed that there are at most $km_1$ vertices from the threads within the
    inclusion interval of the first wasteland. Recall that the distance from
    $u_2$ to the first vertex of the first gateland is $m_1 - 2$. Hence there
    are $(b-1)(m_1-2) > km_1$ available positions within the inclusion interval
    of the first wasteland, before we position the threads.    By the same kind
    of argument there are $(b-1)(b^2-1)(2b+1)m_3$ available positions in the
    inclusion interval of the last wasteland before positioning the thread.
    Recall that the length of a thread is bounded above by
    \begin{align*} 
        & (2b+1)m_2+(n-1)(4n+3) + (4n+3)n + b(2b+1)m_3 \\ 
        <& (2b+3)m_2 + b(2b+1)m_3 \\
        <& 2b(2b+3)m_3.
    \end{align*}
    It follows that for every pair of vertices $u$ and $v$ such that both
    $\alpha(u)$ and $\alpha(v)$ has been described if holds that $\alpha(u)
    \neq \alpha(v)$.

    Recall that every $k$-gate of $T$ is embedded on the main path. And hence
    for every $k$-gate in $T$ there are $k$ paths passing through it with
    respect to $\alpha$. Hence there are $2(b-k-1)$ positions available
    between the left and the right leaf and the rest of the leaves can be
    positioned in any way within this interval. Clearly, for every pair of
    vertices $u$ and $v$ of $T$, such that both $\alpha(u)$ and $\alpha(v)$
    are described it holds that $\alpha(u) \neq \alpha(v)$. And furthermore,
    if $uv$ is an edge of a $k$-gate it holds that $|\alpha(u) - \alpha(v)|
    \leq b$.  For every $\path{2}$ on the main path such that $\path{2}$ is
    not in a subgraph of a wasteland and there are available positions in
    $I(\path{2})$ we reserve the position to the right of the $k$ positions
    reserved for the threads, for the fillers. Observe that any $\path{2}$
    such that this position is not available either is a subgraph of a
    wasteland or a $k$-gate (which has no available positions).

    We will now position the leaves of the  knots. Let $K$ be a knot in $T$.
    The center $c$ of $k$ is a vertex of a thread and hence $\alpha(c)$ has
    already been described.  Let $\hat{P}_2$ be the $P_2$ of the main path
    such that $\alpha(c) \in I(\hat{P}_2)$.  Position the leaves attached to
    $c$ as close to the middle of $I(\hat{P}_2)$ as possible by only using
    available positions, that are not reserved. If there are two such
    positions equally close to the middle, we take the leftmost one. Let
    $\hat{P}_4$ be the $P_4$ of the main path such that $\hat{P}_2$ contains
    the internal vertices of $\hat{P}_4$. It can be observed, by where the
    knots are embedded on the thread and where the threads are positioned in
    $\alpha$, that $\hat{P}_4$ is either a subgraph of a hole or a neutral
    zone. Furthermore, if $c'$ is the center of some other knot and
    $\hat{P}'_2$ is the $P_2$ of the main main such that $\alpha(c')$ is
    contained in its inclusion interval, then it can be observed that
    $\hat{P}'_2$ and $\hat{P}_4$ are disjoint. Hence, we see that there are
    $2(b-k-2)$ positions available and non-reserved within the inclusion
    interval of $\hat{P}_ 4$. Recall that a knot consists of
    $\frac{3}{2}b-k-2$ leaves and that $b = 4k+16$, and hence $2(b-k-2) \geq
    \frac{3}{2}b-k-2$. It follows that for every two vertices $u$ and $v$ of
    $T$ such that both $\alpha(u)$ and $\alpha(v)$ have been described,
    $\alpha(u) \neq \alpha(v)$. Furthermore, it $uv$ is an edge of $T$ it
    holds that $|\alpha(u) - \alpha(v)| \leq b$.

    Let $\path{4} = (u_h, u_{h+1}, u_{h+2}, u_{h+3})$ be some subpath of the
    main path such that a hole is embedded on it. Position the leaves attached
    to $u_{h+1}$ to the leftmost non-reserved, available positions and the
    leaves attached to $u_{h+2}$ to the rightmost non-reserved, available
    positions, within $I(\path{4})$. Furthermore, for the leaves representing
    non-neighbours, position it at the position available and not reserved
    closest to its neighbour. If there are two such positions, any of the two
    will do. It can be observed, by where the knots are embedded on the
    threads and where the knots and positioned that no two knots are
    positioned within the inclusion interval of a hole. And furthermore, that
    at most $k$ non-adjacency leaves are positioned within the inclusion
    interval of a hole. At last, since $C_k$ is a clique it holds that no knot
    and non-neighbour leaf is positioned with the inclusion interval of a
    hole. Recall that $k$ is even and hence $\frac{3}{2}b-k-2 = 5k+22$ is
    even. It follows that the leaves of a knot is evenly distributed among the
    two sides of the center. Recall that the number of leaves in a hole is
    $\frac{3}{2}b-2k-4$. There are $3k$ vertices from the threads
    positioned within the inclusion interval of $\hat{P}_4$ and there are
    $3b-3k-6$ leaves attached to one hole and one knot. Since there are more
    than $k$ leaves attached to a knot, it can be observed that for any two
    vertices $u$ and $v$ such that at least $u$ or $v$ is positioned
    within the inclusion interval of $\hat{P}_4$ it holds that $\alpha(u) \neq
    \alpha(v)$. And furthermore, if $uv$ is an edge in $T$ it holds that
    $|\alpha(u) - \alpha(v)| \leq b$.

    Consider danglements positioned within the inclusion interval of a
    $\hat{P}_4$ that is a subgraph of a neutral zone. One can observe that
    there is at most $k$ non-neighbouring leaves and at most one clique
    positioned within the inclusion interval of $\hat{P}_4$. And hence the
    same argument as above can be applied to show that for every two vertices
    $u$ and $v$ of $T$ such that both $\alpha(u)$ and $\alpha(v)$ has been
    described, it holds that $\alpha(u) \neq \alpha(v)$. Furthermore, if $uv$
    is an edge of $T$ it is true that $|\alpha(u) - \alpha(v)| \leq b$.

    It remains to describe the positioning of each of the fillers. Let $u$ be
    the vertex on the filler closest to the main path not yet positioned and
    $r$ lowest value bigger than the $\alpha$-value of the intersection vertex
    between the filler and the main path that is not taken. Set $\alpha(u) = r$
    and continue. Recall that the length of the path where the selector is
    embedded is $(n-1)p+3+2b(p(n-1)+3)$, and hence there were $(b-1)(
    (n-1)p+3+2b(p(n-1)+3) )$ available positions within the inclusion interval
    of the selector after only the main path had been positioned.  Observe that
    the threads now occupies $k((n-1)p+3+2b(p(n-1)+3))$ of these positions, the
    $k+1$-gates $( (p-3)(n-1)/2+b(p(n-1)+3))2(b-k-2)$ of the positions, the
    knots $k(\frac{3}{2}b-k-2)$ positions, the holes $2n(\frac{3}{4}b-k-2)$
    positions and the filler $(n-k)(\frac{3}{2}b-k-2) + (2b+1)(p(n-1)+3)$. By
    substituting $p$ by $4n+3$ and $b$ by $4k+16$ one can verify that the
    vertices positioned equals the amount of positions available within the
    inclusion interval of the selector. The expression for the once available
    positions within the inclusion interval of the selector $S$ and the number of
    vertices now positioned within it, disregarding the main path, namely $X$, is given
    below.
    \begin{align*}
            S &=  k((n-1)p+3+2b(p(n-1)+3)) \\
            &\hspace{0.5cm}+ ( (p-3)(n-1)/2+b(p(n-1)+3))2(b-k-2)\\
            &\hspace{0.5cm}+ k(\frac{3}{2}b-k-2) + 2n(\frac{3}{4}b-k-2) \\
            &\hspace{0.5cm}+ (n-k)(\frac{3}{2}b-k-2) + (2b+1)(p(n-1)+3) \\
            &= (b-1)( (n-1)p+3+2b(p(n-1)+3) ) = X.
    \end{align*}
    It follows that for every two vertices $u$ and $v$ such that both
    $\alpha(u)$ and $\alpha(v)$ have been described, it holds that $\alpha(u)
    \neq \alpha(v)$. Recall that for every $\hat{P}_2$ that is a subgraph of
    the main path and the selector there was a position reserved for the
    fillers. And hence for every edge $uv$ of the first filler, there are
    positions reserved for the filler, $x$ and $y$ such that $y-x = b$ and
    $\alpha(u)$ and $\alpha(v)$ is contained within $[x,y]$. It follows
    directly that $|\alpha(u) - \alpha(v)| \leq b$.  For the second filler, we
    observe that there were $(b-1)(4n+2)(2n-1)$ available positions within the
    inclusion interval of the validator when only the main path had been
    positioned. And furthermore, now the $n$ holes occupies
    $2n(\frac{3}{4}b-k-2)$ of these positions, the threads $k(2n-1)(4n+2)$ of
    the positions, the knots $kn(\frac{3}{2}b-k-2)$ and the non-neighbouring
    leaves $k(n^2-n-2m)$. By a similar argument as for the first filler, one
    can prove that for every $u$ and $v$ of $T$ it holds that $\alpha(u) \neq
    \alpha(v)$ and if $uv$ is an edge of $T$ then $|\alpha(u) - \alpha(v)| \leq
    b$. This completes the description of $\alpha$ and the argument is
    complete.
\end{proof}

Given a reduced instance $(T,b)$ and a $b$-bandwidth ordering $\alpha$ we say
that a $k$-gate in $T$ is \emph{blocked} with respect to $\alpha$ if every
thread in $T$ pass through the gate.

\begin{lemma}
    \label{lemma:threads-passes}
    Let $(T, b)$ be the result of the reduction for some instance of
    \prob{$p$-Even Clique} and $\alpha$ a $b$-bandwidth ordering of $T$.
    Then every $k$-gate in $T$ is blocked with respect to $\alpha$.
\end{lemma}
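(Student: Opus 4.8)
The plan is to show that every $k$-gate is blocked by a ``chain'' argument, starting from the walls and propagating inward. Let $\alpha$ be a $b$-bandwidth ordering of $T$. By Lemma~\ref{lemma:walls} applied to the first and last walls, every vertex of the main path, of the threads, and of the fillers lies in $I(c_f, c_l)$, where $c_f, c_l$ are the two wall centers; assume $\alpha(c_f) < \alpha(c_l)$. The first step is to observe that each thread $\tau_i$ is a path that starts at $u_2$ (adjacent to the first wall) and is long enough that its total length exceeds roughly $2b(2b+3)m_3$, which is more than $b$ times the length of the main path; hence $\tau_i$ cannot be confined to the inclusion interval of any single gateland, wasteland, or other sector, and in particular must stretch across essentially the whole interval $I(c_f, c_l)$. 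Since the main path also spans $I(c_f, c_l)$, each thread passes through every $\path{2}$ on the main path that is ``far enough'' from its endpoints; the subtlety is only near $u_2$, where the threads branch off.

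Next I would make the blocking statement precise gate-by-gate, proceeding from the first gateland rightward. Consider the leftmost $k$-gate in the first gateland, embedded on a $\path{2}$ of the main path. Because $u_2$ is to the left of it and the threads are attached at $u_2$, and because each thread is so long it must reach past the last wall side, each thread $\tau_i$ contains a vertex left of this gate and a vertex right of it; i.e.\ $\tau_i$ passes through the gate. Thus the first gate is blocked, and Lemma~\ref{lemma:gates} applies: it forces $\alpha(\gin) < \alpha(\gcenter) < \alpha(\gout)$ for that gate, and forces exactly one thread-vertex on each side of the gate center inside the ball $B$. Crucially, the conclusion $\alpha(\gin) < \alpha(\gcenter) < \alpha(\gout)$ means the gate's $\gout$ (which is the $\gin$ of the next gate) lies to the right of its $\gin$; iterating this along the consecutive chain of $bm_1$ gates in the first gateland shows the whole gateland is laid out monotonically left-to-right, and every thread, having entered on the left, must exit on the right of the entire block — so every gate in the first gateland is blocked.

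The inductive engine is then: once we know a contiguous block of gates is traversed monotonically by $\alpha$, every thread that enters the block from the left must leave it on the right, hence passes through every gate in that block and through the sector immediately following it; combined with the fact that threads are long enough to reach the far wall, this lets us ``carry'' the monotone ordering through the selector, the middle gateland, the validator, the last gateland, and out to the last wall. Formally one chains the implications: first gateland blocked $\Rightarrow$ threads pass through the selector $\Rightarrow$ (via Lemma~\ref{lemma:r-hole} and the intervening $(k{+}1)$-gates, which are also blocked for the same reason) the selector is traversed monotonically $\Rightarrow$ threads enter the middle gateland on the left $\Rightarrow$ middle gateland blocked, and so on. Since every $k$-gate of $T$ lies in one of these gatelands, every $k$-gate is blocked.

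The main obstacle I expect is handling the region near $u_2$ cleanly: a priori $\alpha$ could place part of a thread to the \emph{left} of the first-wall side of the main path, or interleave threads in a way that makes ``enters on the left'' ambiguous. This is resolved by Lemma~\ref{lemma:walls} (nothing escapes $I(c_f,c_l)$) together with Corollary~\ref{corollary:passing-paths-requires-space} and a counting argument: the first gateland contains $bm_1$ gates on a path of length $bm_1$, with $2(b-k)bm_1$ gate-leaves packed into $I(c_f,c_l)$; since the threads cannot all be absorbed near $u_2$ (the first wasteland has only $(b-1)(m_1-2)$ positions while each thread contributes $(2b+1)m_1$ vertices that must pass the gateland, as computed in Lemma~\ref{lemma:argangement}), each thread is forced to traverse the gateland, which is exactly the hypothesis needed to start the chain. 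The rest is bookkeeping that the sector lengths and filler sizes chosen in the construction make every counting inequality go through.
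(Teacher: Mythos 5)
The essential idea that makes this lemma work does appear in your proposal, but only buried in the last paragraph, and the superstructure you build around it is both unnecessary and partly wrong. The paper's proof is a single counting step applied to each $k$-gate separately: up to reversing $\alpha$ (blocking is invariant under reversal) the first wall is leftmost, so by Lemma~\ref{lemma:walls} one gets $\alpha(u_2)=\min\alpha(\tau)$ for every thread $\tau$ and $\alpha(u_2)<\min\alpha(\Pi)$ for every $k$-gate $\Pi$. If $\tau$ did not pass through $\Pi$, then (since passing through is defined via inclusion intervals) $\max\alpha(\tau)\leq\max\alpha(\Pi)$, so all of $V(\tau)\setminus\{u_2\}$ is mapped into $I(P)$, where $P$ is the path from $u_2$ to the $\gout$ vertex of $\Pi$. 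Every $k$-gate lies before the last wasteland, so $|E(P)|\leq(2b+1)m_3-2$ and $|I(P)|\leq b\bigl((2b+1)m_3-2\bigr)+1$, while $|V(\tau)\setminus\{u_2\}|>b(2b+1)m_3$, a contradiction. No induction over sectors, no appeal to Lemma~\ref{lemma:gates}, and no monotonicity is needed.

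Two concrete problems with your version. First, your quantitative claims are false as stated: a thread is \emph{not} longer than $b$ times the main path (the last wasteland alone has about $b^2(2b+1)m_3$ vertices), so threads are not forced to ``stretch across essentially the whole interval $I(c_f,c_l)$'' or ``reach past the last wall side''; the correct comparison is between the thread and $b$ times the prefix of the main path ending at the gate in question, which is exactly why the construction places all $k$-gates before the huge last wasteland. Second, the chain argument is circular or unsupported at this stage: Lemma~\ref{lemma:gates} has as hypothesis precisely the passing-through property you are trying to prove, and your claim that the $(k+1)$-gates in the selector are ``blocked for the same reason'' cannot be justified here, since applying Lemma~\ref{lemma:gates} to a $(k+1)$-gate needs $k+1$ disjoint passing paths and the fact that the first filler provides the extra one is only established in later lemmas (via the tight-budget argument) that themselves rely on the present lemma. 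Propagating monotonicity through the sectors is the content of Lemma~\ref{lemma:order-of-the-parts}, which in the paper comes after and uses this lemma; importing it here would invert the logical order. Finally, note that ``a vertex left of the gate and a vertex right of it'' must mean left of $\min\alpha(\Pi)$ and right of $\max\alpha(\Pi)$, i.e.\ the whole image of the gate including its leaves, not just the center.
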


\begin{proof}
    By Lemma~\ref{lemma:walls} we know that the first wall is either the
    leftmost or the rightmost elements of $\alpha$. Observe that every $k$-gate
    in $T$ is blocked with respect to $\alpha$ if and only if every $k$-gate in
    $T$ is blocked with respect to $\alpha$ reversed. Hence it is
    sufficient to prove that every $k$-gate is blocked when the first wall is
    the leftmost elements of $\alpha$.
  
    Assume for a contradiction that there is a $k$-gate $\Pi$ and a thread
    $\tau$ such that $\tau$ is not passing through $\Pi$. Let $P$ be the path
    from $u_2$ to the $\gout$ vertex of $\Pi$ and let $X = V(\tau)-u_2$. By
    Lemma~\ref{lemma:walls} we know that $\alpha(u_2) = \min \alpha(\tau)$ and
    that $\alpha(u_2) < \min \alpha(\Pi)$. It follows by the
    definition of passing through that $\max \alpha(\tau) \leq \max \alpha(\Pi)$ and
    hence $\alpha(X) \subseteq I(P)$. Recall that $|E(P)| \leq (2b+1)m_3-2$ and
    $|X| > b(2b+1)m_3$. It follows directly that $|I(P)| \leq b( (2b+1)m_3-2)
    +1 <
    b(2b+1)m_3 < |X|$ which is a contradiction.
\end{proof}

Recall that the main path of the reduction instance consist of $9$ sectors,
namely the first wall, the first wasteland, the first gateland, the selector,
the middle gateland, the validator, the last gateland, the last wasteland and
the last wall. See Figure~\ref{figures:main-path-sections} for an
illustration. The lemma below shows that the sectors will appear in the same
order in $\alpha$ as they do in the instance, up to reversion.

\begin{lemma}
    \label{lemma:order-of-the-parts}
    Let $(T, b)$ be the result of the reduction for some instance of
    \prob{$p$-Even Clique} and $\alpha$ a $b$-bandwidth ordering of $T$ such
    that the first wall is mapped to the leftmost elements of $\alpha$. If $u$
    and $v$ are vertices from two different sectors such that $u$ comes before
    $v$ in $T$, then it holds that $\alpha(u) \leq \alpha(v)$.
\end{lemma}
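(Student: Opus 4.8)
The plan is to use Lemma~\ref{lemma:threads-passes} together with Lemma~\ref{lemma:passing-paths-are-well-behaved} to show that the whole portion of the main path between the two walls is stretched, and then read off the ordering of the sectors from the stretchedness. First I would recall from Lemma~\ref{lemma:threads-passes} that every $k$-gate on the main path is blocked, i.e. all $k$ threads pass through it; in particular every thread passes through the long initial $k$-gate sequence (the first gateland) and therefore, since a thread is a single path and the gatelands lie consecutively on the main path, every thread passes through the entire stretch of the main path $\path{l}$ running from (a leaf of) the first wall to (a leaf of) the last wall. I would take $\path{l}$ to be exactly this subpath of the main path, $P^1,\dots,P^k$ to be the $k$ threads (minus their shared endpoint $u_2$, so that they are genuinely disjoint from each other and from $\path{l}$), and $X$ to be the union of all the danglements --- the gate leaves, hole leaves, knot leaves, and the fillers --- which by Lemma~\ref{lemma:walls} all have $\alpha$-image inside $I(c_1,c_2)\subseteq I(\path{l})$.

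The key quantitative step is to check that $|X|\ge (l-1)(b-k-1)$, so that Lemma~\ref{lemma:passing-paths-are-well-behaved} applies and yields that $\path{l}$ is stretched and that each thread places exactly one vertex between every two consecutive main-path vertices. This inequality should follow from a counting argument essentially dual to the one used inside the proof of Lemma~\ref{lemma:argangement}: in that proof one verifies that the available positions strictly between consecutive main-path vertices are filled up by $k$ thread-vertices, one filler-vertex, and the remaining $b-k-2$ by danglements, so on each of the $l-1$ intervals there are at least $b-k-1$ vertices of $X$ (allowing for the wastelands, where the filler is absent but even more danglement-room is reserved; the walls' leaves, of which there are $2b$ per wall, more than compensate near the ends). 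Summing over the $l-1$ gaps gives the bound. Once $\path{l}$ is stretched, the vertices $v_1,\dots,v_l$ of the main path appear in $\alpha$ in exactly the order they appear in $T$ (up to reversal), and since the first wall is assumed leftmost, the non-reversed orientation is forced; hence if $u_i$ comes before $u_j$ on the main path then $\alpha(u_i)<\alpha(u_j)$, which handles the case where $u,v$ are both main-path vertices.

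It then remains to handle the case where $u$ or $v$ is an off-path vertex, i.e. a danglement of a sector. For this I would argue locally: a danglement attached to main-path vertex $u_i$ is at $\alpha$-distance at most $b$ from $\alpha(u_i)=\alpha(v_i)$, and by stretchedness $\alpha(v_{i-1})=\alpha(v_i)-b$ and $\alpha(v_{i+1})=\alpha(v_i)+b$, so the danglement lies in the closed interval $[\alpha(v_{i-1}),\alpha(v_{i+1})]$; moreover Lemma~\ref{lemma:passing-paths-are-well-behaved} (applied with $\path{2}\subseteq\path{l}$) tells us that the $b-1$ positions strictly between $v_{i-1}$ and $v_i$ accommodate exactly $k$ thread-vertices, so a danglement of $u_i$ cannot ``leak'' past $v_{i-1}$ to the left of all those thread vertices without creating a gate edge (or wall edge, or hole/knot edge) of length exceeding $b$ --- this is precisely the kind of local over-crowding argument already carried out in Lemmas~\ref{lemma:gates} and \ref{lemma:r-hole}. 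Consequently every danglement of a sector sits, in $\alpha$, inside the inclusion interval spanned by that sector's main-path vertices (together with the one bordering vertex on each side), and since by the main-path case consecutive sectors occupy disjoint, correctly-ordered blocks of positions, any $u$ before $v$ in $T$ from two different sectors satisfies $\alpha(u)\le\alpha(v)$. The main obstacle I anticipate is the bookkeeping in verifying $|X|\ge(l-1)(b-k-1)$ uniformly across all sector types (gatelands, wastelands, selector, validator) with the various gadget sizes $\tfrac34 b-k-1$, $\tfrac32 b-k-1$, $2(b-k-2)$, etc., and in particular making the wasteland gaps --- where no filler is present --- still meet the bound; this is where one must be careful rather than clever, and it is essentially a recapitulation of the position-count already established in the proof of Lemma~\ref{lemma:argangement}.
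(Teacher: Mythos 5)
Your plan hinges on showing that the \emph{entire} main path between the two walls is stretched, and that is where the proposal breaks down; in fact that statement is false for this construction. First, Lemma~\ref{lemma:passing-paths-are-well-behaved} requires each thread to pass through the chosen subpath $\hat{P}_l$, i.e.\ $I(\hat{P}_l)\subseteq I(\tau_i)$. A thread has at most roughly $2b(2b+3)m_3$ vertices, so its inclusion interval has length at most $b$ times that plus one, while Lemma~\ref{lemma:walls} forces all of $V(T)$ minus the outer halves of the walls into $I(c_1,c_2)\subseteq I(\hat{P}_l)$; since $|V(T)|$ exceeds this bound (the main path alone has about $b^2(2b+1)m_3$ vertices and the $k$ threads contribute another $\Theta(b^3 m_3)$), the threads end inside the last wasteland and do not pass through the full inter-wall path, so the lemma is not applicable with your choice of $\hat{P}_l$. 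Second, and more decisively, the quantitative hypothesis $|X|\ge (l-1)(b-k-1)$ cannot hold: with $l-1\approx b^2(2b+1)m_3$ and $b-k-1\approx \tfrac34 b$ you would need $\Theta(b^4 m_3)$ vertices in $X$, but $|V(T)|$ itself is only $\Theta(b^3 m_3)$. For the same reason full stretchedness is impossible in \emph{any} layout --- a stretched inter-wall main path would span $b\cdot b^2(2b+1)m_3+1$ positions, far more than there are vertices --- and indeed the wastelands are designed precisely as unstretched slack that absorbs the variation coming from which holes the knots occupy; in the layout of Lemma~\ref{lemma:argangement} the wasteland gaps are compressed well below $b$. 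The stretchedness that does hold (Lemma~\ref{lemma:stretched-reduction}) concerns only $P_R$, the segment between the first and last gate centers, and even that counting needs to know where the fillers sit, which the paper obtains \emph{from} the present lemma; so the order of the sectors has to be established without any global stretchedness, and your final step (confining each sector's danglements inside its block of positions) inherits the same gap.

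For contrast, the paper's proof is purely local. If some $k$-gate center $c$ lies in the interior of the tree path from $u$ to $v$, one adds edges from $c$ to every vertex of $\alpha^{-1}([\alpha(c)-b,\alpha(c)+b])$; the resulting graph still has $\alpha$ as a $b$-bandwidth ordering and $c$ is now the center of a wall, so Lemma~\ref{lemma:walls}, applied once to the first wall and this new wall and once to the new wall and the last wall, gives $\alpha(u)\le\alpha(c)\le\alpha(v)$. If no gate center separates $u$ from $v$, then by construction one of $u,v$ belongs to a $k$-gate, and Lemmata~\ref{lemma:threads-passes} and~\ref{lemma:gates} show that the gate's \emph{in}, \emph{center} and \emph{out} vertices appear in the correct order and that the path leaving the gate towards the other vertex meets $[\alpha(c)-b,\alpha(c)+b]$ only in the \emph{out} vertex, which forces $\alpha(u)\le\alpha(v)$. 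If you want to salvage your approach, you would have to restrict attention to $P_R$ and first settle the position of the fillers, at which point you are essentially reproducing Lemmata~\ref{lemma:order-of-the-parts} and~\ref{lemma:stretched-reduction} in the paper's order.
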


\begin{proof}

    If at least one of the vertices are in one of the walls, the lemma follows
    directly from Lemma~\ref{lemma:walls}. We will now consider two cases.
    First, we consider the case when there is a $k$-gate $\Pi$ with center $c$
    embedded on the inner vertices of the path from $u$ to $v$. We  
    make $c$ adjacent to $\alpha^{-1}([\alpha(c)-b, \alpha(c)+b)$ and observe
    that $c$ is now the center of a wall and $\alpha$ is still a $b$-bandwidth
    ordering of the graph. Apply Lemma~\ref{lemma:walls} on the first wall and
    the new wall to obtain $\alpha(u) \leq \alpha(c)$ and on the new wall and
    the last wall to obtain $\alpha(c) \leq \alpha(v)$. It follows immediately
    that $\alpha(u) \leq \alpha(v)$.
    
    It remains to consider the case when there is no $k$-gate embedded on the
    inner vertices of the path from $u$ to $v$.  It follows, by construction,
    that either $u$ or $v$ is a vertex of a $k$-gate. First, let us consider
    the case when $u$ is a vertex of a $k$-gate. Recall that the vertices the
    gate is embedded on is named $\gin, c = \gcenter$ and $\gout$ and let $P$ be
    the path from $\gout$ to $v$. It follows by
    Lemmata~\ref{lemma:gates}~and~\ref{lemma:threads-passes} that $\alpha(P)$
    and $[\alpha(c) - b, \alpha(c)+b]$ intersects in only one element, namely
    $\alpha(\gout)$, and that $\alpha(\gin) < \alpha(c) < \alpha(\gout)$.
    Since $\alpha$ is a $b$-bandwidth ordering it follows that $\alpha(\gout)
    = \min \alpha(P)$ and hence $\alpha(u) \leq \alpha(\gout) \leq \alpha(v)$.
    The case when $v$ is a vertex of a $k$-gate follows by a symmetrical
    argument.
\end{proof}

Let $P_F, P_M$ and $P_L$ be the paths from the center of the first gate to the
center of the last gate in the first gateland, the middle gateland and the last
gateland respectively. 

\begin{lemma}
    \label{lemma:gatelands-are-stretched}
    Let $(T, b)$ be the result of the reduction for some instance of
    \prob{$p$-Even Clique} and $\alpha$ a $b$-bandwidth ordering of $T$,
    then 
    \begin{itemize}
        \item $P_F$, $P_M$ and $P_L$ are stretched with respect to $\alpha$ and
        \item for the centers of two $k$-gates $c_1$ and $c_2$ such that $c_1$
            comes before $c_2$ in $T$ it holds that $\alpha(c_1) <
            \alpha(c_2)$.
    \end{itemize}
\end{lemma}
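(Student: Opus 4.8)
The plan is to treat the two bullet points separately and, as in Lemma~\ref{lemma:order-of-the-parts}, to assume that the first wall occupies the leftmost positions of $\alpha$; the ``stretched'' conclusion of the first bullet is invariant under reversing $\alpha$, and the inequality in the second bullet is to be read under this orientation. The first bullet will be a single application of Lemma~\ref{lemma:passing-paths-are-well-behaved}, with the $k$ threads as the passing paths and the leaves of the gates of a gateland as the auxiliary set~$X$. The second bullet will be obtained by telescoping Lemma~\ref{lemma:gates} along a single gateland and invoking Lemma~\ref{lemma:order-of-the-parts} between different gatelands.

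For the first bullet, fix the first gateland, with its $bm_1$ consecutive $k$-gates $\Pi_1,\dots,\Pi_{bm_1}$, centres $c_1,\dots,c_{bm_1}$, so that $P_F$ is the main-path subpath from $c_1$ to $c_{bm_1}$ and $l:=|V(P_F)|=2bm_1-1$. By Lemma~\ref{lemma:threads-passes} every thread passes through every $\Pi_i$, and since $V(P_F)\subseteq\bigcup_i V(\Pi_i)$ this forces $I(P_F)\subseteq I(\tau)$ for every thread $\tau$; the threads pairwise meet only in $u_2$, and replacing each by the subpath omitting $u_2$ yields $k$ pairwise disjoint paths still passing through $P_F$, since $u_2$ sits far to the left of the whole first gateland. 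Take $X$ to be the set of all $2bm_1(b-k)$ leaves of $\Pi_1,\dots,\Pi_{bm_1}$ minus those whose $\alpha$-image lies outside $I(P_F)$; any such leaf is adjacent to its centre $c_i\in P_F$, hence sits in $[\min\alpha(P_F)-b,\max\alpha(P_F)+b]$, so at most $2b$ of them are removed. Since $b=4k+16$ and $m_1\ge 2$, the surviving set still has at least $(l-1)(b-k-1)$ elements, is disjoint from $P_F$ and from the restricted threads, and maps into $I(P_F)$, so Lemma~\ref{lemma:passing-paths-are-well-behaved} gives that $P_F$ is stretched. The identical argument with $m_1$ replaced by $m_2$, respectively by $m_3$, handles $P_M$ and $P_L$.

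For the second bullet, suppose $c_1$ comes before $c_2$ in $T$. If both lie in the same gateland, then every gate of that gateland is blocked (Lemma~\ref{lemma:threads-passes}) and (a routine check) functioning in $T$ minus the threads, with its first-wall endpoint on the $\gin$ side and its last-wall endpoint on the $\gout$ side; under our orientation these satisfy the hypothesis $\max\alpha(W_{\gin})<\min\alpha(W_{\gout})$ of Lemma~\ref{lemma:gates} (Lemma~\ref{lemma:walls}), and part~(II) of that lemma yields $\alpha(\gin)<\alpha(\gcenter)<\alpha(\gout)$ for each gate. Since the $\gout$ vertex of one gate is the $\gin$ vertex of the next, these inequalities telescope along the gateland, whence $\alpha(c_1)<\alpha(c_2)$. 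If instead $c_1$ and $c_2$ lie in different gatelands, they belong to different sectors of the main path with $c_1$ before $c_2$, so Lemma~\ref{lemma:order-of-the-parts} gives $\alpha(c_1)\le\alpha(c_2)$, and this is strict by injectivity of $\alpha$. As every $k$-gate of $T$ sits in one of the three gatelands, these two cases are exhaustive.

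The substance of the argument is entirely carried by Lemma~\ref{lemma:passing-paths-are-well-behaved} (and Corollaries~\ref{corollary:passing-P2-requires-space} and~\ref{corollary:passing-paths-requires-space} beneath it); everything else is bookkeeping. I expect the one genuinely fiddly point to be the size estimate for $X$: after discarding the $O(b)$ gate-leaves that can be pushed just outside $I(P_F)$, at least $(l-1)(b-k-1)$ leaves must remain. This holds with room to spare because each gateland was built a factor of $b$ longer than the sectors preceding it, so $2bm_i(b-k)$ exceeds $(2bm_i-2)(b-k-1)$ by roughly $2bm_i$; the remaining incidental issues---the threads sharing the vertex $u_2$, and verifying that the gates are functioning---are routine.
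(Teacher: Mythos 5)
Your second bullet is fine and is essentially what the paper intends (telescoping part~(II) of Lemma~\ref{lemma:gates} along a gateland, using Lemma~\ref{lemma:threads-passes} to block every gate, and Lemma~\ref{lemma:order-of-the-parts} across gatelands); note the paper's own proof is a one-line appeal to Lemmata~\ref{lemma:gates},~\ref{lemma:threads-passes}~and~\ref{lemma:order-of-the-parts}, so your routing of the first bullet through Lemma~\ref{lemma:passing-paths-are-well-behaved} is a different path. The problem is that the key counting step of that first bullet, which you flag as fiddly but claim holds ``with room to spare,'' is wrong as written. Your candidate set $X$ consists of ``all $2bm_1(b-k)$ leaves'' of the gates, but the $\gin$/$\gout$ vertices are leaves of the gates and at the same time vertices of the main path lying on $P_F$ itself (and each shared one is counted twice), so they are not eligible for $X$, which must be disjoint from $\path{l}=P_F$. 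Each $k$-gate has only $2(b-k)-2$ leaves off the main path, so the eligible pool is $2bm_1(b-k-1)$ (plus possibly the two extreme in/out vertices $u_{m_1},u_{(2b+1)m_1}$, which however end up \emph{outside} $I(P_F)$). After discarding the ``at most $2b$'' leaves that may map outside $I(P_F)$ you are left with $2bm_1(b-k-1)-2b$, whereas Lemma~\ref{lemma:passing-paths-are-well-behaved} needs $(l-1)(b-k-1)=(2bm_1-2)(b-k-1)=2bm_1(b-k-1)-2(b-k-1)$; you are short by $2k+2$. The apparent factor-$b$ slack you invoke comes entirely from the ineligible in/out vertices.

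The argument can be repaired, but not by crude counting: the hypothesis of Lemma~\ref{lemma:passing-paths-are-well-behaved} is met with \emph{zero} slack. Prove the second bullet first, so that the vertices of $P_F$ appear monotonically and hence $\min I(P_F)=\alpha(c_1)$ and $\max I(P_F)=\alpha(c_{bm_1})$, where $c_1,c_{bm_1}$ are the first and last centers. Then every gate leaf lies in $[\alpha(c_1)-b,\alpha(c_{bm_1})+b]$, and parts~(I) and~(III) of Lemma~\ref{lemma:gates} applied to the first and last gates describe the two length-$b$ strips just outside $I(P_F)$ exactly: each consists of $k$ thread vertices and $b-k$ members of the corresponding gate's neighbourhood, one of which is the main-path $\gin$ (respectively $\gout$) vertex. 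Hence exactly $2(b-k-1)$ off-path leaves escape $I(P_F)$, and exactly $2bm_1(b-k-1)-2(b-k-1)=(l-1)(b-k-1)$ remain inside, which is precisely what Lemma~\ref{lemma:passing-paths-are-well-behaved} requires (together with your truncation of the threads at $u_2$ to make them disjoint). Without this reordering and the exact strip analysis, the first bullet does not go through.
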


\begin{proof}
    This follows directly from
    Lemmata~\ref{lemma:gates},~\ref{lemma:threads-passes}~and~\ref{lemma:order-of-the-parts}. 
\end{proof}

Let $\Pi_F$ and $\Pi_L$ be the first and last $k$-gate in $T$, and $c_F$ and
$c_L$ their centers respectively. Furthermore, let $P_R$ be the path from $c_F$ to $c_L$.

\begin{lemma}
    \label{lemma:positioning-of-high-degree-vertices-threads}
    Let $(T, b)$ be the result of the reduction for some instance of
    \prob{$p$-Even Clique} and $\alpha$ a $b$-bandwidth ordering of $T$.  If $u
    \neq u_2 $ is a vertex of a thread, such that the degree of $u$ is at least
    $3$, then $\alpha(u) \in I(P_R)$.
\end{lemma}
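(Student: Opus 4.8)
Throughout I would assume, as in the proof of Lemma~\ref{lemma:threads-passes}, that the first wall occupies the leftmost positions of $\alpha$; this loses no generality, since reversing $\alpha$ preserves both the hypothesis and the conclusion, and by Lemma~\ref{lemma:gatelands-are-stretched} it yields $\alpha(c_F) < \alpha(c_L)$. Since $c_F, c_L \in V(P_R)$ we then have $[\alpha(c_F), \alpha(c_L)] \subseteq I(P_R)$, so it suffices to prove $\alpha(c_F) < \alpha(u) < \alpha(c_L)$. Let $\tau = t_2, t_3, \ldots, t_e$ be the thread through $u$, with $t_2 = u_2$, and let $u = t_j$. From the construction of the threads no danglement is attached before the first knot and none after the trailing block of $b(2b{+}1)m_3$ plain vertices, so $(2b{+}1)m_1 + 1 \le j \le (2b{+}1)m_2 + (2n{-}1)(4n{+}3)$. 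The key preliminary observation is a \emph{no-return property}: every $k$-gate is blocked (Lemma~\ref{lemma:threads-passes}), so for every $k$-gate centre $c$, Lemma~\ref{lemma:gates}(III) gives exactly one vertex of $\tau$ in each of the two halves of the window $[\alpha(c)-b,\alpha(c)+b]$; and if $\alpha(t_a)<\alpha(c)<\alpha(t_{a+1})$ then the bandwidth bound forces $t_a$ to be that unique vertex in the left half and $t_{a+1}$ the unique one in the right half, so $\tau$ can never cross back. Hence for every $k$-gate centre $c$ the set $\{i : \alpha(t_i) < \alpha(c)\}$ is an initial segment of $\tau$; in particular $\{i:\alpha(t_i)>\alpha(c_L)\}$ is a final segment of $\tau$.

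\emph{Left bound.} Suppose $\alpha(u) < \alpha(c_F)$. By the initial-segment property the whole prefix $t_2, \ldots, t_j$ lies left of $\alpha(c_F)$, so the following pairwise disjoint sets are all mapped strictly left of $\alpha(c_F)$: the first wall ($2b{+}1$ vertices); the vertices $u_3, \ldots, u_{m_1}$, which lie left of $c_F$ by Lemma~\ref{lemma:order-of-the-parts} together with $\alpha(u_{m_1}) < \alpha(c_F)$ from Lemma~\ref{lemma:gates}(II); and the $j - 2 \ge (2b{+}1)m_1 - 1$ thread vertices $t_3, \ldots, t_j$. On the other hand, since the first wall is leftmost we have $\min\alpha = \alpha(c_f) - b$ (where $c_f = u_1$ is its centre), and $\alpha(c_F) - \alpha(c_f) \le b\,m_1$ because $c_f$ and $c_F = u_{m_1+1}$ are joined by a path of $m_1$ edges; hence there are at most $b(m_1{+}1)$ positions left of $\alpha(c_F)$. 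Comparing the counts, $(2b{+}1)+(m_1{-}2)+((2b{+}1)m_1-1)$ exceeds $b(m_1{+}1)$ for $b\ge 3$, a contradiction. Therefore $\alpha(u) > \alpha(c_F)$.

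\emph{Right bound.} Suppose $\alpha(u) > \alpha(c_L)$. By the final-segment property $j \ge j^{*}$, where $t_{j^{*}}$ is the first vertex of $\tau$ right of $\alpha(c_L)$, so $t_{j^{*}-1}t_{j^{*}}$ is the edge of $\tau$ that crosses $\alpha(c_L)$. I would bound $j^{*}$ from below by counting gate crossings. By no-return, $\tau$ crosses the centre of every blocked $k$-gate exactly once; all such centres other than $c_L$ have smaller $\alpha$-value (Lemmas~\ref{lemma:gatelands-are-stretched} and~\ref{lemma:order-of-the-parts}), and since $\tau$ starts left of each of them at $u_2$ and stays right of $c_L$ from index $j^{*}$ on, each is crossed at an edge strictly preceding $t_{j^{*}-1}t_{j^{*}}$. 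Moreover any two distinct $k$-gate centres are at least $2b$ apart in $\alpha$ (consecutive centres in a gateland by Lemma~\ref{lemma:gatelands-are-stretched}; centres in different gatelands because the intervening selector or validator is far wider), while an edge of $\tau$ spans at most $b$ positions, so no edge of $\tau$ realises two crossings. Hence the $bm_1 + bm_2 + bm_3 - 1$ centres preceding $\Pi_L$ occupy $bm_1+bm_2+bm_3-1$ distinct edges among $t_2t_3,\ldots,t_{j^{*}-2}t_{j^{*}-1}$, giving $j^{*} \ge bm_1+bm_2+bm_3+2 > bm_3$. But $j \le (2b{+}1)m_2 + (2n{-}1)(4n{+}3) \le m_3 < bm_3 < j^{*}$, contradicting $j \ge j^{*}$. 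Thus $\alpha(u) < \alpha(c_L)$, and combining the two bounds finishes the proof.

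\emph{Main obstacle.} The two bounds are not symmetric. The left bound is a short counting argument only because the first wasteland has just $m_1$ main-path vertices, so there is simply no room left of $c_F$ for a long piece of a thread. On the right this collapses: the last wasteland has about $b^2(2b{+}1)m_3$ vertices, so counting positions right of $c_L$ yields nothing and one must argue differently. The substantive step is the right bound, whose idea is that $\tau$ is forced to ``waste'' a separate edge crossing each of the $\approx bm_3$ blocked $k$-gate centres that precede $\Pi_L$, which already carries $\tau$ past the thread-index of every danglement before it can enter the last wasteland. Setting this up carefully — proving no-return, verifying that distinct $k$-gate centres are $\ge 2b$ apart so that crossings occupy distinct edges, and confirming that all of them precede the crossing of $\alpha(c_L)$ — is where the delicacy lies.
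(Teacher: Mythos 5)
Your proof is correct, but it takes a genuinely different route from the paper's, most notably for the right bound. The paper handles both cases with the same short counting argument: letting $P$ be the tree-path from $u_2$ to $u$ along the thread, it observes that $u$ carries a danglement, so $|E(P)| < m_3$ (respectively, the relevant prefix/suffix lengths in the left case), hence $|I(P)| \leq b\,|E(P)|+1$ is small; but since $\alpha(u) > \alpha(c_L)$ the entire main path from $u_2$ to $c_L$, with $(2b+1)m_3-3$ vertices, is forced into $I(P)$ by Lemmata~\ref{lemma:walls}, \ref{lemma:order-of-the-parts} and~\ref{lemma:gatelands-are-stretched} — a contradiction with no gate analysis at all. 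Your left bound is essentially the paper's count in different packaging ($\approx (2b+1)m_1$ thread vertices versus $\approx bm_1$ positions left of $c_F$; the paper reaches the same numbers via showing the thread suffix passes every gate and applying Lemma~\ref{lemma:gates}). Your right bound, by contrast, is new: you derive a no-return property from Lemma~\ref{lemma:gates}(III) (each blocked gate centre is crossed by exactly one thread edge), show distinct centres are at least $b+1$ apart in $\alpha$, and count crossings to force the index of the first thread vertex past $c_L$ above $b(m_1+m_2+m_3)$, exceeding the index of any high-degree thread vertex. This is sound — the separation of centres across gatelands does need the order-of-the-parts sandwich you gesture at, and applicability of Lemma~\ref{lemma:gates} to every $k$-gate follows from Lemma~\ref{lemma:threads-passes} exactly as the paper itself uses it — but it is heavier machinery than the paper needs here. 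What your route buys is the explicit no-return/monotone-progress statement for threads, which the paper only obtains later and in stronger form via Lemma~\ref{lemma:stretched-reduction}; what the paper's route buys is brevity and independence from the gate gadget analysis in this lemma.
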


\begin{proof}
    By Lemma~\ref{lemma:walls} we know that the first wall is either the
    leftmost or the rightmost elements of $\alpha$. Observe that $u$ is mapped
    within the inclusion interval of $P_R$ by $\alpha$ if and only if $u$ is
    mapped within the inclusion interval of $P_R$ by $\alpha$ reversed. Hence
    it is sufficient to prove that $\alpha(u) \in I(P_R)$ when the first wall
    is the leftmost elements of $\alpha$.

    Assume for a contradiction that there is a vertex $u \neq u_2$ of some
    thread, such that $u$ has degree at least $3$ and $\alpha(u) \notin I(P_R)$.
    It follows from
    Lemmata~\ref{lemma:gatelands-are-stretched}~and~\ref{lemma:order-of-the-parts}
    that either $\alpha(u) < \alpha(c_F)$ or $\alpha(c_L) < \alpha(u)$. First,
    we consider the case when $\alpha(u) < \alpha(c_F)$. Let $P'$ be the path
    from $u_2$ to $u$ except $u_2$ and let $P''$ be the path from $u$ to the
    last vertex of the thread. Furthermore, let $P$ be the path from $u_2$ to
    $c_F$. Assume for a contradiction that there is a
    $k$-gate $\Pi$ such that $P''$ is not passing through $\Pi$. Let $P'$ be
    the path from $u_2$ to the $\gout$ vertex of $\Pi$. Observe that
    $\alpha(P'') \subseteq I(P')$. Recall that $|V(P'')| > b(2b+1)m_3$ and
    that $|E(P')| \leq b( (2b+1)m_3 - 2)$. It follows that $|E(P')| \leq b(
    (2b+1)m_3 - 2) < b(2b+1)m_3 < |V(P'')|$ and hence we get our contradiction.
    Hence $P''$ is passing through every $k$-gate. By
    Lemmata~\ref{lemma:gates}~and~\ref{lemma:threads-passes} we get that
    $\alpha(P') \subseteq I(P)$. Recall that $|V(P')| \geq (2b+1)m_1-1$ and
    that $|E(P)| = m_1$. It follows immediately that $|I(P)| \leq bm_1 + 1 <
    (2b+1)m_1-1\leq |V(P')|$ and hence we obtain a contradiction.

    It remains to consider the case when $\alpha(c_L) < \alpha(u)$. Let $P$ be
    the path from $u_2$ to $u$ and $P'$ the path from $u_2$ to $c_L$ except
    $u_2$. By assumption $\alpha(u_2) < \min \alpha(P')$ and hence $\alpha(P')
    \subseteq I(P)$. Recall that $|E(P)| < m_3$ and that $|V(P')| =
    (2b+1)m_3-3$. It follows that $|I(P)| < bm_3+1 < (2b+1)m_3-3 = |V(P')|$,
    which is a contradiction. 
\end{proof}

\begin{lemma}
    \label{lemma:stretched-reduction}
    Let $(T, b)$ be the result of the reduction for some instance of
    \prob{$p$-Even Clique} and $\alpha$ a $b$-bandwidth ordering of $T$. Then
    \begin{itemize}
        \item $|\alpha(\tau_i) \cap I(\path{2})| = 1$ for every thread
            $\tau_i$ and every subpath $\path{2}$ of $P_R$ and
        \item $P_R$ is stretched with respect to $\alpha$.
    \end{itemize}
\end{lemma}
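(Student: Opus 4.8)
The statement is essentially a direct application of Lemma~\ref{lemma:passing-paths-are-well-behaved} to the path $P_R$ playing the role of $\path{l}$, with the $k$ threads playing the role of the passing paths $P^1,\dots,P^k$, and with an appropriate set $X$ of ``filler'' vertices whose image lies inside $I(P_R)$. So the whole game is to check that the hypotheses of that lemma are satisfied. Concretely I would proceed as follows.

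First, by Lemma~\ref{lemma:walls} we may assume the first wall is mapped to the leftmost elements of $\alpha$ (the statement is invariant under reversing $\alpha$). By Lemma~\ref{lemma:threads-passes} every $k$-gate is blocked, i.e. every thread passes through it; combined with Lemma~\ref{lemma:order-of-the-parts} and Lemma~\ref{lemma:gatelands-are-stretched} this gives $\alpha(c_F) < \alpha(c_L)$ and, more importantly, shows that each thread $\tau_i$ passes through $P_R$: each thread enters the interval $I(P_R)$ from the left of $c_F$ (its attachment point $u_2$ is to the left of the first gateland and is the leftmost vertex of the thread) and exits to the right of $c_L$ (its long tail forces it past the last gateland, by the same counting argument used in Lemmata~\ref{lemma:threads-passes} and~\ref{lemma:positioning-of-high-degree-vertices-threads}). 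Hence $c_F, c_L \in I(\tau_i)$ for every $i$, which is exactly ``$\tau_i$ passes through $P_R$'' — here one must also note $P_R$ and the $\tau_i$ are pairwise vertex-disjoint, which holds by construction since $P_R$ lies on the main path and the threads (minus $u_2$, which is not in $P_R$) are disjoint from it.

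Second, I need the set $X$: a set of $(l-1)(b-k-1)$ vertices, disjoint from $P_R$ and the threads, whose $\alpha$-image is contained in $I(P_R)$, where $l = |V(P_R)|$. The natural candidates are the leaves of all the gates (the gate centers and their ``in''/``out'' vertices lie on $P_R$ itself), together with the leaves/centers of holes, knots, and the filler vertices. By Lemma~\ref{lemma:order-of-the-parts} (sectors appear in order) every vertex lying on the main path strictly between $c_F$ and $c_L$ in $T$, and every danglement/filler attached to such a vertex, is mapped by $\alpha$ into $I(P_R)$ — one checks this using that $c_F$ and $c_L$ are the extreme gate centers and that $\alpha$ has bandwidth $b$, so a leaf adjacent to a main-path vertex $w$ with $\alpha(c_F)\le\alpha(w)\le\alpha(c_L)$ satisfies $\alpha(c_F)-b \le \alpha(\text{leaf})$… more carefully, one uses that the gate centers $c_F, c_L$ can be turned into wall centers (as in the proof of Lemma~\ref{lemma:order-of-the-parts}) so that Lemma~\ref{lemma:walls} forces everything between them, including all danglements hanging off interior main-path vertices, into $I(c_F,c_L) = I(P_R)$. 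Counting these vertices: between the first and last gate of $P_R$ the main path has $l$ vertices and, by the gadget construction (all the gates contribute $2(b-k)$ or $2(b-k-1)$ leaves, the sequences of gates vastly outnumber the holes/knots/fillers, and the fillers were sized precisely to top off the remaining room), there are at least $(l-1)(b-k-1)$ such auxiliary vertices; pick $X$ to be any $(l-1)(b-k-1)$ of them.

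Third, apply Lemma~\ref{lemma:passing-paths-are-well-behaved} with $\path{l} = P_R$, the passing paths $\tau_1,\dots,\tau_k$, and the set $X$ just constructed. Its conclusion is verbatim the two bullets we want: $P_R$ is stretched with respect to $\alpha$, and $|\alpha(\tau_i) \cap I(\path{2})| = 1$ for every thread $\tau_i$ and every length-2 subpath $\path{2} \subseteq P_R$.

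\textbf{Main obstacle.} The delicate point is the bookkeeping in the second step: verifying that there really are at least $(l-1)(b-k-1)$ vertices whose image is guaranteed to sit inside $I(P_R)$, and arguing cleanly that all of them are disjoint from $P_R$ and from all threads. This is where one must lean on the fact that the gatelands consist of enormously many gates (lengths $bm_1$, $bm_2$, $bm_3$), so even though the selector and validator contain holes, knots on threads, and fillers, the gate leaves alone — roughly $b$ times more main-path vertices than threads/holes can ``absorb'' — already furnish far more than $(l-1)(b-k-1)$ vertices with image forced into $I(P_R)$ by Lemma~\ref{lemma:walls} (via the center-to-wall trick). The remaining verifications (disjointness, and that each such leaf's $\alpha$-value lies in $[\alpha(c_F),\alpha(c_L)]$) are routine given Lemmata~\ref{lemma:walls},~\ref{lemma:threads-passes} and~\ref{lemma:order-of-the-parts}, so I would state them briefly and invoke those lemmas rather than redo the counting in full.
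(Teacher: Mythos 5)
Your overall skeleton is exactly the paper's: show the threads pass through $P_R$, exhibit a set $X$ of danglements/filler vertices with $\alpha(X)\subseteq I(P_R)$ of size at least $(l-1)(b-k-1)$, and invoke Lemma~\ref{lemma:passing-paths-are-well-behaved}. The problem is the step you yourself flag as the main obstacle and then dismiss: the counting. Your claim that ``the gate leaves alone \dots already furnish far more than $(l-1)(b-k-1)$ vertices'' is false. A $k$-gate is a star with $2(b-k)$ leaves, but two of those leaves are the $\gin/\gout$ vertices, which lie on the main path (i.e.\ on $P_R$) and are shared with the neighbouring gates; so each $k$-gate contributes only $2(b-k-1)$ off-path leaves over the $2$ backbone edges it spans --- exactly the quota $b-k-1$ per edge, with no surplus at all. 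A $(k+1)$-gate contributes $2(b-k-2)$, i.e.\ runs a deficit, each hole runs a deficit of roughly $\tfrac32 b$, and the neutral zones of the validator (about $2(n-1)(4n+3)$ edges) carry no danglements whatsoever. Hence the gate leaves do not cover $(l-1)(b-k-1)$; the deficit must be made up by the knots, the non-neighbouring leaves and, crucially, the two fillers, which are sized in the construction precisely to top off this count. This is not bookkeeping one can wave away: it is the reason the fillers exist, and it is exactly the content of the table in the paper's proof, which concludes $|X|\geq(b-k-1)((2b+1)m_3-m_1-2)=(b-k-1)|E(P_R)|$ only after summing all seven contributions.

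Including those extra vertices in $X$ forces additional arguments you omit: one must show the knot centers and the attachment points of the non-neighbouring leaves are mapped into $I(P_R)$ (the paper does this via Lemma~\ref{lemma:positioning-of-high-degree-vertices-threads}), and that both fillers are entirely mapped into $I(P_R)$ (the paper deduces this from Lemma~\ref{lemma:order-of-the-parts}, Lemma~\ref{lemma:threads-passes} and Lemma~\ref{lemma:gates}: the filler attachment points lie between the gatelands, and a filler path cannot cross the saturated radius-$b$ interval around a blocked gate center). Your containment mechanism is also overstated: turning $c_F,c_L$ into wall centers does not force \emph{every} danglement hanging off an interior main-path vertex into $I(c_F,c_L)$ --- leaves whose neighbour is mapped within distance $b$ of $\alpha(c_F)$ or $\alpha(c_L)$ can land outside, and Lemma~\ref{lemma:walls} only applies to components attached to both artificial walls. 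The paper's cleaner route is to set $Z=\alpha^{-1}(I(P_R))$, note that every leaf in question has its neighbour in $Z$ (its neighbour is on $P_R$ or is a thread/filler vertex already shown to lie in $I(P_R)$), and that $|N(Z)|\leq 2b$, so at most $2b-2$ such vertices leak outside; given how tight the count above is, this $O(b)$ correction must be accounted for rather than assumed away. So the approach is right, but the proof as written has a genuine gap at its central quantitative step.
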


\begin{proof}
    By Lemma~\ref{lemma:walls} we know that the first wall is either the
    leftmost or the rightmost elements of $\alpha$. Observe that $P_R$ is
    stretched with respect to $\alpha$ if and only if $P_R$ is stretched with
    respect to $\alpha$ reversed. It follows that it is sufficient to prove
    that the lemma holds when the first wall is the leftmost elements of
    $\alpha$.

    Let $Z = \alpha^{-1}(I(P_R))$ and observe that there are at most $2b$
    vertices in $N(Z)$. Furthermore, observe that every leaf of a gate or a
    hole is either within $I(P_R)$ or a neighbour of $Z$. It follows from
    Lemma~\ref{lemma:threads-passes} that Lemma~\ref{lemma:gates} applies to all
    $k$-gates of $T$. Furthermore, by Lemma~\ref{lemma:order-of-the-parts} it
    follows that the neighbours of the fillers are positioned after the first
    gateland and before the last gateland. And hence by
    Lemma~\ref{lemma:gates} and the fact that $\alpha$ is a $b$-bandwidth
    ordering, it follows that both fillers are positioned within $I(P_R)$. By
    Lemma~\ref{lemma:positioning-of-high-degree-vertices-threads} it holds
    that for every vertex $v$ that is a danglement, its neighbour is
    positioned within $I(P_R)$. And hence $v$ is either in $I(P_R)$ or a
    neighbour of $Z$.  Below you find a table giving an overview of how many
    vertices not on the main path, each type of gadget contributes with to $N[Z]$.

    \begin{center}
        \begin{tabular}[center]{|l|p{11cm}|}
            \hline
            Type of vertices & Amount \\
            \hline
            Knots & $k(n+1)(\frac{3}{2}b-k-2)$ \\
            \hline
            Holes & $4n(\frac{3}{4}b-k-2)$  \\
            \hline
            First filler &
            $(n-k)(\frac{3}{2}b-k-2)+(2b+1)(p(n-1)+3)$ \\
            \hline
            Second filler & $(b-1)(4n+3)(2n-1) + 2b(4n+3)(2n-1) -(k(2n-1)(4n+3)
            + k(n(\frac{3}{2}b-k-2)+n^2-n-2m)+2n(\frac{3}{4}b-k-2))$\\
            \hline
            $k$-gates & $2(b-k-1)b(m_1+m_2+m_3)$\\
            \hline
            $(k+1)$-gates & $2(b-k-2)((n-1)(p-3)+b(p(n-1)+3)+b(2n-1)(4n+3))$\\
            \hline
            non-neighbouring leafs & $k(n^2-n-2m)$\\
            \hline
        \end{tabular}
    \end{center}

    It follows from Lemma~\ref{lemma:walls} that there are two vertices of the
    main path within $N(Z)$. Let $X$ be all leaves in gates, holes and knots
    and non-neighbouring leaves and all the vertices in the fillers that are
    positioned within $I(P_R)$. We know that $|X|$ is at least the sum of the
    numbers in the table above, minus $2b-2$. And hence it can be verified
    that $|X| \geq (b-k-1)((2b+1)m_3-m_1-2)$. By construction it follows that
    $|E(P_R)| = (2b+1)m_3 - m_1-2$. And by
    Lemmata~\ref{lemma:threads-passes}~and~\ref{lemma:order-of-the-parts} it
    follows that all threads are passing through $P_R$ and hence we can apply
    Lemma~\ref{lemma:passing-paths-are-well-behaved} to complete the proof.
\end{proof}

Name the holes of the selector such that the first hole is called $H_1$ and the
last hole is $H_n$. Let $(T,b)$ be a resulting instance of the reduction and
$\alpha$ a $b$-bandwidth ordering of $T$. Furthermore, let $H_i$ be a hole of
$T$ embedded on the path $(v_1, v_2, v_3, v_4)$ such that $v_1$ comes before
$v_4$ in $T$. We say that a thread $\tau$ is \emph{selecting} $i$, if the
center $c$ of the first knot of the thread is positioned so that $\alpha(c) \in
I(v_2, v_3)$.

\begin{lemma}
    \label{lemma:selector-selects-uniquely}
    Let $(T, b)$ be the result of the reduction for the instance $(G,k)$ of
    \prob{$p$-Even Clique} and $\alpha$ a $b$-bandwidth ordering of $T$. Then
    every thread in $T$ selects a unique integer in $[n]$.
\end{lemma}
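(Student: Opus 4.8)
The plan is to show that each thread $\tau$ selects \emph{at least} one integer and that no two threads can select the same one; uniqueness then follows because the number of holes equals the number of available ``slots'' the threads need. First I would use the structural lemmas already proved to pin down the geometry of an arbitrary $b$-bandwidth ordering $\alpha$. By Lemma~\ref{lemma:walls} we may assume the first wall is leftmost; by Lemma~\ref{lemma:threads-passes} every $k$-gate is blocked, so Lemma~\ref{lemma:gates} applies to every $k$-gate; by Lemma~\ref{lemma:stretched-reduction} the path $P_R$ is stretched and each thread puts exactly one vertex between consecutive vertices of $P_R$; and by Lemma~\ref{lemma:positioning-of-high-degree-vertices-threads} the center of the first knot of $\tau$ (a degree-$\geq 3$ vertex distinct from $u_2$) is positioned inside $I(P_R)$. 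So every first-knot center lies somewhere inside the inclusion interval of the selector region, sandwiched between two consecutive vertices of $P_R$ that are spaced exactly $b$ apart.

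Next I would apply Lemma~\ref{lemma:r-hole} to each $(k+1)$-hole $H_i$ of the selector. Since the selector's holes and the surrounding $(k+1)$-gates are $(k+1)$-gadgets and there are $k$ threads plus the first filler passing through, the hypotheses of Lemma~\ref{lemma:r-hole} are met for any knot whose image lands in $I(\gin,\gout)$ of a hole; its conclusion~(III) then forces the knot center strictly between the $\icenter$ and $\ocenter$ vertices, i.e.\ the thread ``selects'' $i$ in the sense of the definition. The key quantitative point is that a knot has $\tfrac{3}{2}b-k-1$ leaves and a hole has only $2(\tfrac34 b-k-1)=\tfrac32 b-2k-2$ non-path positions available (plus the handful freed up by the $P_4$ of the hole itself), so at most one knot center can be accommodated strictly inside any single hole — two knots in one hole would overload the $b$-window around the shared region, contradicting $\bw(T,\alpha)\le b$. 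This gives the ``unique'' half: no hole receives two first-knot centers.

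For the ``at least one'' half I would argue that a first-knot center cannot land on a $(k+1)$-gate or in the dense gate-runs between holes. A $k$-knot is too big to sit within distance $b$ of any gate center: between two consecutive vertices of $P_R$ there are only $b-1$ non-path positions, of which $k$ are spoken for by the threads and essentially all the rest by the gate's own $2(b-k-2)$ leaves, leaving no room for the $\tfrac32 b-k-1$ leaves of the knot clustered around its center. More carefully, since $P_R$ is stretched, if the knot center sat over a gate region one could invoke Corollary~\ref{corollary:passing-paths-requires-space} (as in the proof of Lemma~\ref{lemma:r-hole}(III)) on the relevant short subpath of $P_R$ together with the gate leaves and the knot leaves to exceed the $b$-bandwidth bound. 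Hence the knot center must lie over one of the $n$ hole-subpaths; combined with Lemma~\ref{lemma:r-hole}(I) (the hole path is stretched in the natural order) and~(III), the center falls strictly inside $I(\icenter,\ocenter)$ of exactly one hole, so $\tau$ selects exactly one integer in $[n]$.

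The main obstacle I anticipate is the bookkeeping that rules out a knot center straddling the boundary between a hole and an adjacent gate-run, or sitting partly in a gate and partly in a neutral stretch — this is where one has to be careful that the ``$b-1$ free positions between consecutive $P_R$ vertices'' count is applied to the right subpath and that the filler's single reserved position per $P_2$ is correctly accounted for. I would handle this exactly as Lemma~\ref{lemma:r-hole} does: identify a short subpath $\path{2}$ or $\path{3}$ of $P_R$ whose inclusion interval must contain the knot's leaves plus the passing threads plus (part of) a gate's leaves, and derive $|I(\path{2})|\ge b+2$ via Corollary~\ref{corollary:passing-P2-requires-space} or Corollary~\ref{corollary:passing-paths-requires-space}. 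Once every knot center is confined to some hole and every hole holds at most one, the pigeonhole bijection between the $k$ threads and $k$ of the $n$ holes makes ``selects a unique integer'' immediate.
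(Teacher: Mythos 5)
Your overall strategy is the same as the paper's (confine every first-knot center to a selector hole, apply Lemma~\ref{lemma:r-hole} to get $\alpha(c)\in I(\icenter,\ocenter)$, then a counting argument to forbid two knots in one hole), but there is a genuine gap in the ``at least one'' half. Lemma~\ref{lemma:positioning-of-high-degree-vertices-threads} only places the first-knot center inside $I(P_R)$, which spans the first gateland, the selector, the middle gateland, the validator and the last gateland. Your argument excludes positions over gates (too crowded), but it does not exclude the neutral zones of the validator or the validation-zone holes: the neutral zones are bare paths $P_{4n+4}$ with nothing attached, and in the intended layout the threads' \emph{second} knots sit exactly there, so there is genuinely enough room for a knot over that region and no crowding contradiction is available. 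The paper closes this with a separate distance-counting step: if $\alpha(c)$ were right of $c_M$ (the last gate center of the middle gateland), then the thread prefix from $u_2$ to $c$, which has only $(2b+1)m_1-1$ edges, would have to span an interval containing the $(2b+1)m_2-3$ main-path vertices up to $c_M$, and $b\bigl((2b+1)m_1-1\bigr)+1 < (2b+1)m_2-3$ gives a contradiction. Without this step (or an equivalent), ``the knot center must lie over one of the $n$ hole-subpaths of the selector'' does not follow, and a thread might select nothing.

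A second, smaller gap: you assert that the first filler passes through the $(k+1)$-gates and holes of the selector, which is what lets you apply Lemma~\ref{lemma:gates} and Lemma~\ref{lemma:r-hole} with $k+1$ passing paths and conclude that the whole knot $K$ (not just its center) is mapped into $I(v_1,v_4)$. This needs justification; the paper derives it from the tight position budget inside $I(P_R)$ established in the proof of Lemma~\ref{lemma:stretched-reduction} (there is a position in the last $(k+1)$-gate of the selector that only the first filler can occupy, forcing the filler past every selector gate). Your uniqueness count is in the right spirit and matches the paper's final step, so once these two points are repaired the argument goes through.
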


\begin{proof}
    By Lemma~\ref{lemma:walls} we know that the first wall is either the
    leftmost or the rightmost elements of $\alpha$. Observe that every thread
    in $T$ selects an unique integer with respect to $\alpha$ if and only if
    every thread in $T$ selects an unique integer with respect to $\alpha$
    reversed. It follows that it is sufficient to prove that the lemma holds
    when the first wall is the leftmost elements of $\alpha$.
    
    Let us consider a thread $\tau$ with vertices $(u_2 = t_2, t_3, \dots)$,
    where $c$ is the center of the first knot $K$ of $\tau$. Furthermore, let
    $c_F$ be the center of the first gate in the first gateland, $c_M$ the
    center of the last gate in the middle gateland and $c_L$ the center of the
    last gate in the last gateland. We will now prove that $\alpha(c) \in
    I(c_F, c_M)$. We know that $\alpha(c)
    \in I(P_R)$ by
    Lemma~\ref{lemma:positioning-of-high-degree-vertices-threads} and hence in
    $I(c_F, c_L)$ by Lemma~\ref{lemma:stretched-reduction}. Assume for a
    contradiction that $\alpha(c) \notin I(c_F, c_M)$, it follows that
    $\alpha(c) \in I(c_M, c_L)$. Let $P$ be the path from $u_2$ to $c$ and $P'$
    the path from $u_3$ to $c_M$. Observe that $\alpha(P') \subseteq I(P)$.
    Recall that $|E(P)| = (2b+1)m_1 - 1$ and that $V(P') = (2b+1)m_2-3$. A
    contradiction follows immediately, since $I(P) \leq b( (2b+1)m_1 - 1) + 1 <
    (2b+1)m_2-3 \leq V(P')$. And hence we can assume $\alpha(c) \in I(c_F,
    c_M)$.

    We will now prove that there is a hole $H_i$ such that $\alpha(c) \in
    I(H_i)$.  Assume for a contradiction that $\alpha(c) \notin
    I(H_i)$ for every $i$. Let $\hat{P}_2 = (p_1, p_2)$
    be the $P_2$ of the main path such that $\alpha(c) \in I(\hat{P}_2)$. It
    follows by construction, that either $p_1$ or $p_2$ is the center of a
    gate. Observe that the leaves attached to $c, p_1$ and $p_2$ must be
    positioned within a $\hat{P}_4$. And due to
    Lemma~\ref{lemma:stretched-reduction} there are $4+3k$ vertices from the
    main path and the threads within $I(\hat{P}_4)$.  Recall that there are
    $\frac{3}{2}b-k-2$ leaves attached to $c$ and at least $2(b-k-2)$ leaves
    attached to $\hat{P}_2$. This adds up to $4+3k+\frac{3}{2}b-k-2+2b-2k-4 =
    \frac{7}{2}b-2 > 3b+1$ and hence we get a contradiction.

    Let $H_i$ be embedded on the path $(v_1, v_2, v_3, v_4)$ such that $v_1$
    comes before $v_4$ in $T$.  Observe that due to
    Lemma~\ref{lemma:stretched-reduction} there is a position within the
    inclusion interval of the last $(k+1)$-gate of the selector that only the
    first filler can take. Due to our tight budget when it comes to positions
    within $I(P_R)$ (see the proof of Lemma~\ref{lemma:stretched-reduction})
    it follows that the first filler must take this position. And hence for
    every hole in the selector, the $(k+1)$-gate immediately before and after
    will be passed by the first filler. It follows that
    Lemma~\ref{lemma:gates} is applicable on the $(k+1)$-gates in the selector
    and hence $\alpha(K) \subseteq I(v_1, v_4)$.  Furthermore, due to
    Lemma~\ref{lemma:stretched-reduction} we know that $I(H_i) \subseteq
    I(v_1, v_4)$. And hence we can apply Lemma~\ref{lemma:r-hole} to obtain
    that $\alpha(c) \in I(v_2, v_3)$.

    It remains to prove that the threads selects unique integers. Assume
    otherwise for a contradiction and let $\tau$ and $\tau'$ be two threads
    selecting the same integer $i$. Hence there are two knots $K$ and $K'$
    such that $\alpha(K)\cup\alpha(K') \subseteq I(H_i)$. Observe that $I(H_i)
    = 3b+1 \geq 2(\frac{3}{2}b-k-2) + 2(\frac{3}{4}b-k-2) = 6b-4k-8 > 5b$ (since
    there are $\frac{3}{2}b-k-2$ leaves attached to a knot and
    $2(\frac{3}{4}b-k-2)$ leaves attached to a hole) and hence we get our
    contradiction and the proof is complete.
\end{proof}

\begin{lemma}
    \label{lemma:ordering-gives-clique}
    Let $(T,b)$ be the result of the reduction for the instance $(G,k)$ of
    \prob{$p$-Even Clique} and $\alpha$ a $b$-bandwidth ordering of $T$. Then
    the set $\left\{ v_i \mid \text{there is a thread selecting } i\right\}$
    is a clique in $G$.
\end{lemma}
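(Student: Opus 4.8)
The plan is to derive a contradiction from the existence of two selected, non-adjacent vertices. By Lemma~\ref{lemma:selector-selects-uniquely} every thread of $T$ selects a unique integer of $[n]$, so the set $\{v_i \mid \text{some thread selects } i\}$ has exactly $k$ elements, one per thread; assume for contradiction that it is not a clique, so there are threads $\tau$ and $\tau'$ selecting integers $i\neq j$ with $v_iv_j\notin E(G)$. As in Lemma~\ref{lemma:selector-selects-uniquely}, the notion of a thread selecting a fixed integer is invariant under reversing $\alpha$, and hence so is the property of being a clique, so it suffices to treat the case in which the first wall is mapped to the leftmost elements of $\alpha$.

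The crux is an \emph{alignment claim}: if a thread $\sigma$ selects $\ell$, then under $\alpha$ the $\ell$-th of the $n$ consecutive $P_{4n+3}$'s of $\sigma$ (the subpath encoding row $\ell$ of the adjacency matrix) occupies exactly the inclusion interval of the validation zone, the centre of the $m$-th $P_3$ of that subpath lies inside the inclusion interval of the $m$-th hole of the validation zone for every $m\in[n]$, and each of the other $n-1$ encoding subpaths of $\sigma$ lies inside a neutral zone. I would prove this by propagating rigidity. As in the proof of Lemma~\ref{lemma:selector-selects-uniquely}, the first knot centre of $\sigma$ is pinned into the $P_2$ of $P_R$ lying between the $\icenter$ and $\ocenter$ of the $\ell$-th selector hole. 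By Lemma~\ref{lemma:stretched-reduction}, $P_R$ is stretched and every thread meets each $P_2$ of $P_R$ in exactly one vertex; since $\sigma$ starts at its $\alpha$-minimum $u_2$ (Lemma~\ref{lemma:walls}) and, having more than $b(2b+1)m_3$ vertices beyond its encoding region, must reach a vertex of the last wasteland, the part of $\sigma$ from the first knot centre onward runs through the $P_2$'s of $P_R$ monotonically; hence the $d$-th vertex of $\sigma$ past that centre sits in the $d$-th $P_2$ of $P_R$ past the one holding it. Counting these $P_2$'s with the help of $m_2$, $m_3$, the pull-factor $p=4n+3$, the fact that the validator begins with $n-1$ neutral zones each a $P_{4n+4}$ and then a validation zone $P_{4n+4}$ whose last $3n+1$ vertices carry the $n$ holes, and that each encoding subpath opens with a $P_{n+3}$ prefix, one checks the stated landing positions. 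Finally, exactly as in the hole analysis of Lemma~\ref{lemma:selector-selects-uniquely}, the tight position budget inside $I(P_R)$ (see the proof of Lemma~\ref{lemma:stretched-reduction}) forces the second filler to occupy the only free slot in the last $(k+1)$-gate of the validator, so Lemma~\ref{lemma:gates} applies to every $(k+1)$-gate of the validator and Lemma~\ref{lemma:r-hole} to each of its holes, with the $k$ threads and the filler as the $k+1$ passing paths; this yields the hole-to-$P_3$ correspondence and also that each such hole is stretched.

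Now apply the alignment claim. For $\tau'$ (which selects $j$): its $j$-th encoding subpath lies on the validation zone and, since the construction places a $(k+1)$-knot at the $m$-th $P_3$ of the $m$-th encoding subpath, the $j$-th $P_3$ carries a $(k+1)$-knot $K'$ with $\alpha(K')\subseteq I(H)$, where $H$ is the $j$-th hole of the validation zone. For $\tau$ (which selects $i$): its $i$-th encoding subpath lies on the validation zone and, since $i\neq j$ and $v_iv_j\notin E(G)$, the construction attaches a non-neighbouring leaf $x$ to the centre of the $j$-th $P_3$ of that subpath, whose thread-neighbour lies in $I(H)$; hence $\alpha(x)\in I(H)$ because $\bw(T,\alpha)\leq b$. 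But $H$ is stretched, so $|I(H)|=3b+1$, while $I(H)$ is already forced to contain the $4$ main-path vertices of $H$, its $2(\tfrac34 b-k-2)$ leaves, the $3(k+1)$ vertices contributed by the $k$ threads and the filler (three per passing path, among them the three path-vertices of $K'$), and the $\tfrac32 b-k-2$ leaves of $K'$; a direct computation shows these already number $3b+1$. Thus $x$ has nowhere to go, contradicting that $\alpha$ is a $b$-bandwidth ordering, and the lemma follows. I expect the alignment claim to be the main obstacle: the invocations of Lemmas~\ref{lemma:gates},~\ref{lemma:r-hole} and~\ref{lemma:stretched-reduction} and the closing count are routine, but transporting the rigidity of the stretched $P_R$ from the selector, through the middle gateland, into the validator and then checking the index bookkeeping --- so that the $i$-th encoding subpath lands \emph{exactly} on the validation zone and its $m$-th $P_3$ \emph{exactly} on the $m$-th validation hole, which is where $p=4n+3$ must match the sizes of the neutral zones, the $P_{n+3}$ prefixes and the $3n+1$-vertex hole strip with no slack --- is the delicate step, since any slack would loosen the tight budget on which the contradiction relies.
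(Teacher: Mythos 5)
Your proposal is correct and follows essentially the same route as the paper's proof: the same reduction to a non-adjacent selected pair, the same use of Lemmata~\ref{lemma:selector-selects-uniquely} and~\ref{lemma:stretched-reduction}, the same filler argument to make Lemmata~\ref{lemma:gates} and~\ref{lemma:r-hole} applicable to the validator's $(k+1)$-gates and hole, and the identical final counting contradiction ($3b+2$ forced positions in an interval of size $3b+1$). The ``alignment claim'' you isolate is exactly what the paper disposes of with ``one can observe by construction and Lemma~\ref{lemma:stretched-reduction}'', so making it explicit is a presentational refinement rather than a different argument.
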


\begin{proof}
    By Lemma~\ref{lemma:walls} we know that the first wall is either the
    leftmost or the rightmost elements of $\alpha$. Observe that the set of
    integers selected by the threads with respect to $\alpha$ is the same as
    the one selected with respect to $\alpha$ reversed. It follows that it is
    sufficient to prove that the lemma holds when the first wall is the
    leftmost elements of $\alpha$.
    
    Let $A$ be the set of selected integers and $C = \left\{ v_i \mid i \in A
    \right\}$.  From Lemma~\ref{lemma:selector-selects-uniquely} we know that
    the size of both $A$ and $C$ is $k$. Assume for a contradiction that there
    are two vertices $v_a$ and $v_b$ in $C$ such that $v_a$ and $v_b$ are not
    neighbours in $G$. Let $\tau_a$ be the thread selecting $a$ and $\tau_b$
    the thread selecting $b$. One can observe that by construction and
    Lemma~\ref{lemma:stretched-reduction} there is a hole $H$ in the
    validation zone and a knot $K_a$ with center $c_a$ embedded on $\tau_a$
    such that $\alpha(c_a) \in I(H)$.

    Let $(v_1, v_2,v_3,v_4)$ be the path that $H$ is embedded on, such that
    $v_1$ comes before $v_4$ in $T$. From
    Lemma~\ref{lemma:stretched-reduction} one can observe that there is a
    position within the inclusion interval of the last $(k+1)$-gate in the
    validator that only the second filler can take. Due to our tight budget
    when it comes to positions within $I(P_R)$ (see the proof of
    Lemma~\ref{lemma:stretched-reduction}) it follows that the second filler
    must take this position. It follows that Lemma~\ref{lemma:gates} is
    applicable on the $(k+1)$-gates immediately before and after $H$. Hence it
    follows by Lemma~\ref{lemma:r-hole} that $\alpha(K) \cup \alpha(H)
    \subseteq I(v_1, v_4)$. 

    From the construction of $T$ and Lemma~\ref{lemma:stretched-reduction} one
    can observe that the vertex of $\tau_b$ positioned within $I(v_2, v_3)$ has
    a non-neighbouring leaf attached. It follows that there are $3(k+1)+4$ vertices
    from the threads, the filler and the main path positioned within $I(v_1, v_4)$.
    Furthermore, the knot contributes with $\frac{3}{2}b-k-2$ leaves to $I(v_1,
    v_4)$ and the hole with $2(\frac{3}{3}b-k-2)$. And in addition the
    non-neighbouring leaf must be positioned within $I(v_1, v_4)$. It follows
    that $3b+1 = |I(v_1, v_4)| \leq 3(k+1)+4 + \frac{3}{2}b-k-2 +
    2(\frac{3}{4}b-k-2) + 1 = 3b + 7 - 2 - 4 + 1 = 3b+2$ which is a
    contradiction and the proof is complete.
\end{proof}

\begin{lemma}
    \label{lemma:correctnes}
    Given an instance $(G, k)$ of \prob{$p$-Clique} the reduction instance
    $(T,b)$ of \prob{$p$-Bandwidth} has a $b$-bandwidth ordering if and only if
    there is a clique of size $k$ in $G$.
\end{lemma}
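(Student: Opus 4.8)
The plan is to assemble this statement directly from the lemmas already proved in this section; no new combinatorial argument is needed, so the proof will be short.

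For the ``only if'' direction, suppose that $(T,b)$ admits a $b$-bandwidth ordering $\alpha$. I would first invoke Lemma~\ref{lemma:selector-selects-uniquely} to conclude that each of the $k$ threads selects an integer in $[n]$ and that these integers are pairwise distinct; since there are exactly $k$ threads, the set $S=\{\,v_i \mid \text{some thread selects } i\,\}$ therefore has exactly $k$ elements. Lemma~\ref{lemma:ordering-gives-clique} then gives that $S$ is a clique of $G$, so $G$ contains a clique of size $k$, as required.

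For the ``if'' direction, suppose $G$ has a clique of size $k$. Since the reduction is set up for instances of \prob{$p$-Even Clique}, $k$ is even, and the existence of a $k$-clique makes $(G,k)$ a yes-instance of \prob{$p$-Even Clique}. Lemma~\ref{lemma:argangement} then yields a sparse ordering of $T$ of bandwidth $b$, which after compression is a genuine $b$-bandwidth ordering of $T$; hence $(T,b)$ is a yes-instance of \prob{$p$-Bandwidth}.

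The only points requiring a word of care are (i) that the number of threads is exactly $k$, so that ``$k$ distinct selected integers'' really produces a clique of the right size rather than a smaller one, which is immediate from the construction, and (ii) the parity of $k$, which is where the \prob{Even Clique} promise enters (it is used inside Lemma~\ref{lemma:argangement} to split the leaves of each knot evenly around its center, and it guarantees $b=4k+16$ is divisible by $4$ and at least $2k+14$, so the knot and hole gadgets are well defined). Neither point is a genuine obstacle: all of the real difficulty has already been absorbed into Lemmas~\ref{lemma:argangement}, \ref{lemma:selector-selects-uniquely}, and~\ref{lemma:ordering-gives-clique}, and what remains here is purely their combination.
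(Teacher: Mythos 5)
Your proposal is correct and follows exactly the paper's route: the paper's own proof is a one-line combination of Lemmata~\ref{lemma:argangement}, \ref{lemma:selector-selects-uniquely} and~\ref{lemma:ordering-gives-clique}, which is precisely what you do (with the added, accurate remarks about the thread count and the evenness of $k$).
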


\begin{proof}
    This follows immediately by
    Lemmata~\ref{lemma:argangement},~\ref{lemma:selector-selects-uniquely}~and~\ref{lemma:ordering-gives-clique}.
\end{proof}

\subsection{Consequences}

We will now present the immediate consequences of our reduction. But first we
need to prove that the problem we have been reducing from, namely
\prob{$p$-Even Clique} is up to the task.

\begin{lemma}
    \label{lemma:even-clique-w1hard}
    \prob{$p$-Even Clique} is \cW{1}-hard.
\end{lemma}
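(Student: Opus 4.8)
The plan is to give a straightforward reduction from \prob{$p$-Clique}, which is the canonical \cW{1}-hard problem, to \prob{$p$-Even Clique}. Given an instance $(G,k)$ of \prob{$p$-Clique}, I would produce an instance $(G',k')$ of \prob{$p$-Even Clique} where $k'$ is guaranteed to be even and $k' = O(k)$, so that the parameter blows up only by a constant factor (indeed, we want $k' \le k+2$ or so, certainly $k' \le 2k$), and such that $G$ has a clique of size $k$ if and only if $G'$ has a clique of size $k'$.

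\smallskip
\noindent\textbf{The construction.} If $k$ is already even, set $(G',k') = (G,k)$ and there is nothing to do. If $k$ is odd, set $k' = k+1$ and obtain $G'$ from $G$ by adding one new \emph{universal} vertex $w$, i.e.\ $V(G') = V(G) \cup \{w\}$ and $E(G') = E(G) \cup \{wv \mid v \in V(G)\}$. Then $k'$ is even and $|V(G')| = |V(G)| + 1$, so the reduction runs in polynomial time and the parameter increases by at most one.

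\smallskip
\noindent\textbf{Correctness.} I would argue the equivalence in the case $k$ odd (the even case being trivial). First, if $G$ has a clique $C$ of size $k$, then $C \cup \{w\}$ is a clique of size $k+1 = k'$ in $G'$, since $w$ is adjacent to every vertex of $G$ and hence to every vertex of $C$. Conversely, suppose $G'$ has a clique $C'$ of size $k'=k+1$. If $w \notin C'$, then $C' \subseteq V(G)$ is a clique of size $k+1 > k$ in $G$, and any $k$-subset of it is a $k$-clique in $G$. If $w \in C'$, then $C' \setminus \{w\} \subseteq V(G)$ is a clique of size $k$ in $G$. In either case $G$ has a clique of size $k$. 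This establishes that $(G,k)$ is a yes-instance of \prob{$p$-Clique} iff $(G',k')$ is a yes-instance of \prob{$p$-Even Clique}. Since \prob{$p$-Clique} is \cW{1}-hard and this is a parameterized reduction (polynomial time, $k' \le k+1$), it follows that \prob{$p$-Even Clique} is \cW{1}-hard.

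\smallskip
\noindent\textbf{Main obstacle.} There is essentially no obstacle here; the only point requiring any care is making sure the new parameter $k'$ remains a valid parameter of the target problem (it must be even, which it is by construction) and that the parameter blow-up is bounded by a function of $k$ — in fact it is linear, which is more than enough to preserve \cW{1}-hardness and also, combined with the main reduction of this section, to preserve the ETH lower bound $f(b)n^{o(b)}$ claimed in Theorem~\ref{thm:mainHardness}. One should also note in passing that \prob{$p$-Clique} under the Exponential Time Hypothesis has no $f(k)n^{o(k)}$ algorithm, and the above reduction transfers this lower bound verbatim to \prob{$p$-Even Clique}, which is what is actually needed downstream.
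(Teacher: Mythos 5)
Your reduction is exactly the one the paper uses: pass the instance through unchanged when $k$ is even, and otherwise add a universal vertex and set $k' = k+1$, with the same case analysis for the converse direction. The proposal is correct and matches the paper's proof, so nothing further is needed.
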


\begin{proof}
    We give a simple reduction from \prob{$p$-Clique}, which was proven to be
    \cW{1}-hard by Downey~\&~Fellows~\cite{downey1995fixed}. Given an
    instance $(G,k)$ of \prob{$p$-Clique}, if $k$ is even the instance is
    already a valid instance of \prob{$p$-Even Clique} and the correctness is
    trivial. Otherwise, let $G'$ be $G$ with a universal vertex added and $k' =
    k+1$. Clearly, $k'$ is even. So this is a valid instance. If there is a
    clique of size $k$ in $G$, then the same clique together with the universal
    vertex forms a clique of size $k'$ in $G'$. And the other way around, if
    there is a clique of size $k'$ in $G'$. Then there is a subset of this
    clique of size $k$ not containing the added universal vertex. This is a
    clique in $G$ of size $k$ and hence our reduction is sound.

    Since the reduction is parameter preserving it follows immediately that
    \prob{$p$-Even Clique} is \cW{1}-hard.
\end{proof}

\begin{lemma}
    \label{lemma:even-clique-eth}
    Assuming the Exponential Time Hypothesis \prob{$p$-Even Clique} does not
    admit an $O(f(b)n^{o(b)})$ time algorithm.
\end{lemma}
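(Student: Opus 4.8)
The plan is to transport the standard ETH-based lower bound for \prob{$p$-Clique} through the very reduction already established in Lemma~\ref{lemma:even-clique-w1hard}. Write $k$ for the clique-size parameter of a \prob{Clique} or \prob{Even Clique} instance; since the construction of this section takes $b = 4k+16 = \Theta(k)$, a running-time lower bound phrased in terms of $k$ is equivalent to the one phrased in terms of $b$ in the statement, and I will work with $k$.

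First I would invoke the well-known fact that, assuming the Exponential Time Hypothesis, there is no computable function $f$ for which \prob{Clique} on an $n$-vertex graph can be decided in time $f(k)\,n^{o(k)}$; this is standard and can be found, for instance, in \cite{FlumGroheBook}, where it is derived from ETH via the sparsification lemma of Impagliazzo, Paturi and Zane~\cite{ImpagliazzoPZ01}.

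Next I would observe that the reduction used to prove Lemma~\ref{lemma:even-clique-w1hard} has the two features needed for a tight transfer: it runs in polynomial time, and it maps an instance $(G,k)$ of \prob{$p$-Clique} to an instance $(G',k')$ of \prob{$p$-Even Clique} with $|V(G')| \le |V(G)|+1$ and $k' \le k+1$, so both the instance size and the parameter grow by at most an additive constant. Then, assuming for contradiction that \prob{$p$-Even Clique} admitted an algorithm of running time $f(k')\,(n')^{g(k')}$ for some computable $f$ and some $g$ with $g(k') = o(k')$, composing it with this reduction would decide \prob{$p$-Clique} on $(G,k)$ in time $\mathrm{poly}(|V(G)|) + f(k+1)\,(|V(G)|+1)^{g(k+1)}$; since $g$ is sublinear, $k \mapsto g(k+1)$ is again $o(k)$, and hence $(|V(G)|+1)^{g(k+1)} = |V(G)|^{o(k)}$, which together with the previous paragraph gives the desired contradiction (and, via $b = 4k+16$, the statement of the lemma). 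The only step that calls for explicit — though entirely routine — attention is this last manipulation of the $o(\cdot)$ in the exponent: shifting the parameter and the vertex count each by a constant must not destroy a sublinear exponent. There is no genuine obstacle beyond this bookkeeping, since all the substance is already supplied by Lemma~\ref{lemma:even-clique-w1hard} and the off-the-shelf hardness of \prob{Clique}.
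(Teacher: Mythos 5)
Your proposal is correct and follows essentially the same route as the paper: cite the standard ETH-based $f(k)n^{o(k)}$ lower bound for \prob{$p$-Clique} (the paper uses Chen et al.\ rather than Flum--Grohe, but this is the same off-the-shelf result) and transfer it through the reduction of Lemma~\ref{lemma:even-clique-w1hard}, noting that the parameter grows only linearly. The extra bookkeeping you spell out about the additive shifts in $k$ and $n$ is exactly what the paper compresses into ``$k' = O(k)$ \dots the result follows immediately.''
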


\begin{proof}
    Observe that for the reduction in the proof of
    Lemma~\ref{lemma:even-clique-w1hard} is so that $k' = O(k)$.
    \prob{$p$-Clique} is known to not admit an $O(f(b)n^{o(b)})$ time algorithm
    by Chen~et.~al.~\cite{chen2006strong}. The result follows immediately.
\end{proof}

\begin{theorem}
    \label{theorem:bandwidth-w1hard}
    \prob{$p$-Bandwidth} is \cW{1}-hard, even when the input
    graph is restricted to trees of pathwidth at most $2$.
\end{theorem}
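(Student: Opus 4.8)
The plan is to obtain the theorem by combining the correctness of the reduction built in this section with the hardness of its source problem. Concretely, the construction turns an instance $(G,k)$ of \prob{$p$-Even Clique} into an instance $(T,b)$ of \prob{$p$-Bandwidth}; by Lemma~\ref{lemma:correctnes} the former is a yes-instance if and only if the latter is, and by Lemma~\ref{lemma:even-clique-w1hard} the problem \prob{$p$-Even Clique} is \cW{1}-hard. So it remains only to check that $(G,k)\mapsto(T,b)$ is a legitimate parameterized reduction and that $T$ is always a tree of pathwidth at most $2$.

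The reduction is parameter-linear and runs in polynomial time: the new parameter is $b=4k+16=O(k)$ (this choice also makes $b$ divisible by $4$ and at least $2k+14$, as the knot and hole gadgets require), and $|V(T)|$ is polynomial in $n:=|V(G)|$ and $k$. Indeed, with $p=4n+3$ the quantities $m_1=pnk+2$, $m_2=(2b+1)m_1+(2b+1)(p(n-1)+3)$ and $m_3=(2b+1)m_2+(2b+1)(2n-1)(4n+3)$ are all $\mathrm{poly}(n,k)$, the main path together with each of the $k$ threads and the two fillers consist of $\mathrm{poly}(n,k)$ vertices, and every vertex of $T$ carries at most $O(b)=O(k)$ pendant leaves, so $|V(T)|=\mathrm{poly}(n,k)$ and $(T,b)$ can be written down in polynomial time. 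That $T$ is a tree is immediate from the construction: starting from the main path, every step either attaches a pendant path (the $k$ threads at $u_2$, the two fillers) or attaches pendant vertices (the wall-, gate-, hole- and knot-leaves and the danglements), and each such step preserves connectedness and acyclicity.

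The step that needs the most care is proving $\pw(T)\le 2$, which I would do by exhibiting a path decomposition of width $2$. Delete from $T$ all of its leaves to obtain a tree $T'$; then $T'$ is exactly the main path with $k$ pendant paths glued at $u_2$ and two pendant paths glued at $u_{(2b+1)m_1}$ and $u_{(2b+1)m_2}$, and every vertex of $T'$ outside those three has degree at most $2$. Build the decomposition by a single sweep of the main path, using consecutive bags $\{u_{i-1},u_i\}$ and $\{u_i,u_{i+1}\}$; whenever the current main-path vertex $u_i$ carries pendant leaves $w_1,\dots,w_t$ insert the bags $\{u_i,w_1\},\dots,\{u_i,w_t\}$ in between; and whenever the current vertex $v$ has a pendant path $z_1,z_2,\dots,z_\ell$ attached (so $v$ is $u_2$ or one of the two filler-anchors), detour through $\{v,z_1,z_2\},\{v,z_2,z_3\},\dots,\{v,z_{\ell-1},z_\ell\}$, keeping $v$ in every bag of the detour and inserting $\{v,z_j,w\}$ whenever a leaf $w$ hangs off $z_j$, before continuing along the main path. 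The $k$ threads at $u_2$ are processed one after the other with $u_2$ present in all of those bags, so $u_2$ (like every vertex) occupies a contiguous block of bags; every edge of $T$ appears in some bag; and every bag has size at most $3$. Hence the decomposition has width $2$.

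Putting these together, the construction of this section is a parameterized reduction from the \cW{1}-hard problem \prob{$p$-Even Clique} to \prob{$p$-Bandwidth} whose output is always a tree of pathwidth at most $2$, which is exactly the claim. (One can additionally observe that $\pw(T)=2$: for $k\ge 2$ the vertex $u_2$ still has degree at least $3$ after all leaves are removed, so $T$ is not a caterpillar and $\pw(T)\ge 2$; only the upper bound, however, is needed here.)
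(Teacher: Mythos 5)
Your proposal is correct and follows the same route as the paper: the paper's own proof simply combines Lemma~\ref{lemma:correctnes} with Lemma~\ref{lemma:even-clique-w1hard} and the observations that $b$ depends only on $k$ and that the constructed $T$ is a tree of pathwidth at most $2$. You merely spell out those observations in more detail (polynomial size, linear parameter blow-up, and an explicit width-$2$ path decomposition sweeping the main path with detours along the pendant threads and fillers), which is a faithful elaboration rather than a different argument.
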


\begin{proof}
    The result follows directly from
    Lemmata~\ref{lemma:correctnes}~and~\ref{lemma:even-clique-w1hard} and the
    observations that the graph constructed by the reduction is a tree of
    pathwidth at most 2 and that $b = f(k)$.
\end{proof}

\begin{theorem}
    \label{theorem:bandwidth-eth}
    Assuming the Exponential Time Hypothesis \prob{$p$-Bandwidth} does not
    admit an $O(f(b)n^{o(b)})$ time algorithm, even when the input graph is
    restricted to trees of pathwidth at most $2$.
\end{theorem}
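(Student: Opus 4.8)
The plan is to obtain the statement directly from the correctness of the reduction (Lemma~\ref{lemma:correctnes}) together with the conditional lower bound for the source problem (Lemma~\ref{lemma:even-clique-eth}), by verifying that the reduction is a polynomial-time parameterized reduction with a \emph{linear} blow-up of the parameter whose output always lies in the promised graph class. So the whole theorem is really a bookkeeping exercise on top of the machinery already developed in this section.

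Concretely, I would argue by contradiction. Suppose there is an algorithm $\mathcal{A}$ that solves \prob{$p$-Bandwidth} on trees of pathwidth at most $2$ in time $f(b)\,N^{o(b)}$, where $N$ is the number of vertices of the input tree. Given an instance $(G,k)$ of \prob{$p$-Even Clique} with $|V(G)|=n$, I may assume $k\le n$ (otherwise $G$ trivially has no $k$-clique), apply the reduction of this section to obtain $(T,b)$ with $b=4k+16$, and feed $(T,b)$ to $\mathcal{A}$. By construction $T$ is a tree, and a path decomposition of width $2$ is obtained by sweeping along the main path while detouring, one at a time, through each pendant thread and each filler path, and while absorbing the leaves of the walls, gates, holes and knots one leaf at a time; hence $T$ has pathwidth at most $2$ and $\mathcal{A}$ is applicable to $(T,b)$. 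By Lemma~\ref{lemma:correctnes}, $\mathcal{A}$ accepts $(T,b)$ if and only if $(G,k)$ is a yes-instance of \prob{$p$-Even Clique}.

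It then remains to control the running time. Unwinding the nested definitions $m_1=pnk+2$, $m_2=(2b+1)m_1+(2b+1)(p(n-1)+3)$, $m_3=(2b+1)m_2+(2b+1)(2n-1)(4n+3)$, the pull-factor $p=4n+3$, the last-wasteland factor $b^2(2b+1)m_3$, the thread length $\Theta(b(2b+1)m_3)$, and the two filler lengths, one sees that every quantity occurring in the construction is a product of a constant number of factors each of which is polynomial in $n$ and $k$; using $k\le n$ this gives $|V(T)|\le n^{c}$ for an absolute constant $c$, and $T$ is clearly computable in time polynomial in $n$. Plugging this into the assumed running time and using $b=4k+16=\Theta(k)$, algorithm $\mathcal{A}$ decides $(G,k)$ in time $f(4k+16)\,(n^{c})^{o(4k+16)}=f'(k)\,n^{o(k)}$ for some function $f'$, since $c\cdot o(4k+16)=o(k)$. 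This is an $f'(k)n^{o(k)}$-time algorithm for \prob{$p$-Even Clique}, contradicting Lemma~\ref{lemma:even-clique-eth}.

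The only step that needs genuine care is the polynomial size bound for $T$: one must check that the successive substitutions of $m_1$ into $m_2$ into $m_3$, and the extra multiplicative factors $b^2(2b+1)$ in the last wasteland and $b(2b+1)$ in the threads, together with the two explicit filler lengths, never compound into a super-polynomial size. As all of $b,p,n,k$ are polynomially bounded and only a bounded number of such factors are multiplied, the bound $|V(T)|=\mathrm{poly}(n)$ holds, and the linear relation $b=\Theta(k)$ is exactly what makes $n^{o(b)}$ collapse to $n^{o(k)}$. Finally, Theorem~\ref{thm:mainHardness} is the same statement without the ``$p$-'' prefix and follows verbatim from the argument above.
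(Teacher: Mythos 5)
Your proposal is correct and follows essentially the same route as the paper: the paper's proof of this theorem is exactly the combination of Lemma~\ref{lemma:correctnes}, Lemma~\ref{lemma:even-clique-eth}, and the observations that the reduction outputs a tree of pathwidth at most $2$ of polynomial size with $b=O(k)$, which is what you spell out in detail. Your additional bookkeeping (the explicit size bound via $m_1,m_2,m_3$ and the sweep argument for pathwidth $2$) is a faithful expansion of what the paper leaves as an observation.
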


\begin{proof}
The result follows directly from
    Lemmata~\ref{lemma:correctnes}~and~\ref{lemma:even-clique-eth} and the
    observations that the graph constructed by the reduction is a tree of
    pathwidth at most 2 and that $b = O(k)$.
\end{proof}

\newcommand{\alg}{\texttt{TreeAlg}}
\newcommand{\catalg}{\texttt{CatAlg}}
\newcommand{\cantoralg}{\texttt{FindSCC}}
\newcommand{\catar}{48b^3}
\newcommand{\catrun}{\textit{poly}}
\newcommand{\algar}{(768b^3)^b}
\newcommand{\algrun}{O(pbn^3)}
\newcommand{\false}{\bot}
\newcommand{\dist}{\mathrm{dist}}
\newcommand{\low}[1]{#1_{\mathrm{low}}}
\newcommand{\mbw}{\mathrm{mbw}}
\newcommand{\dir}{\mathrm{dir}}
\newcommand{\ldir}{\mathrm{W}}
\newcommand{\rdir}{\mathrm{E}}
\newcommand{\slen}{\mathrm{slen}}

\section{Approximation Algorithms}

In this section we will provide \cFPT-approximation algorithms for
\prob{$p$-Bandwidth} on trees and caterpillars. Given a caterpillar $T$ and a
positive integer $b$, $\catalg$ either returns a $\catar$-bandwidth ordering
of $T$ or correctly concludes that $bw(T) > b$. To obtain this we define an
obstruction for bandwidth on caterpillars inspired by
Chung~\&~Seymour~\cite{chung1989graphs} and search for these objects. Based on
the appearance of these objects in $T$ we construct an interval graph such
that either the interval graph has low chromatic number or the bandwidth of
$T$ is large.  If the interval graph has low chromatic number we use a
coloring of this graph to give a low bandwidth layout of $T$.

Given a tree $T$ and positive integers $b$ and $p$ such that $\pw(T) \leq p$,
$\alg$ either returns a $(768b^3)^p$-bandwidth ordering of $T$ or correctly
concludes that $bw(T) > b$. The high level outline of the algorithm is as
follows. The algorithm first decomposes the tree into several connected
components of smaller pathwidth and recurses on these. Then it builds a host
graph for $T$ that is a caterpillar, applies $\catalg$ on the host graph.
Finally it combines the result of $\catalg$ with the results from the
recursive calls, to give a $(768b^3)^p$-bandwidth ordering of $T$. Since the
pathwidth of a graph is known to be bounded above by its bandwidth, it follows
that $\alg$ is an \cFPT-approximation.

\subsection{An \cFPT-Approximation for the Bandwidth of Trees}
The aim of this section is to give a \cFPT-approximation for
\prob{$p$-Bandwidth} on trees, namely an $\algar$-approximation. This
algorithm crucially uses a $\catar$-approximation of \prob{$p$-Bandwidth} on
caterpillars as a subroutine. We provide such an algorithm, namely the
algorithm $\catalg$, in Section~\ref{sec:caterpillars}. In the remainder of
this section we give a $\algar$-approximation for trees under the assumption
that $\catalg$ is a $\catar$-approximation of \prob{$p$-Bandwidth} on
caterpillars with running time $O(bn^3)$.

\subsubsection*{Recursive Path Decompositions and Other Simplifications}

In this section we will present some decomposition results crucial for our
algorithm. First we define \emph{recursive path decompositions}, which will
allow us to partition our graph into several components of slightly lower
complexity.  The recursive decomposition is used to call the algorithm
recursively on easier instances, and then combine the layouts of these
instances to a low bandwidth layout of the input tree.

\begin{definition}
    Let $T$ be a tree and $P, T^1, \dots, T^t$ induced subgraphs of $T$ such
    that $V(T) = V(P) \cup \bigcup V(T^i)$. Then we say that $P, T^1,\dots,T^t$
    is a \emph{$p$-recursive path decomposition} of $T$ if $P$ is a path in $T$
    and for every $i$ it holds that $T^i$ is a connected component of $T - P$,
    $\deg(V(T^i)) = 1$ and $\pw(T^i) < p$.   
\end{definition}

\begin{lemma}
    \label{lemma:computing-recursive-path-decompositions}
    Given a tree $T$ of pathwidth at most $p$, a $p$-recursive path
    decomposition $P, T^1,\dots,T^t$ of $T$ can be found in $O(n)$ time.
\end{lemma}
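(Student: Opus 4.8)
The plan is to exploit the well-known structure of optimal path decompositions of trees together with a clean recursive/greedy extraction of the path $P$. First I would recall the classical fact that the pathwidth of a tree $T$ can be characterized recursively: $\pw(T) \le p$ if and only if there is a path $P$ in $T$ such that every connected component of $T - P$ has pathwidth at most $p-1$ (this is essentially Scheffler's theorem / the standard tree-pathwidth recursion). Since $\pw(T) \le p$ by hypothesis, such a path exists, and this is exactly the path we want to output. The additional condition $\deg(V(T^i)) = 1$ — meaning each component $T^i$ of $T - P$ attaches to $P$ by exactly one edge — is automatic because $T$ is a tree: a connected component of $T - P$ cannot have two edges back to the path $P$ without creating a cycle. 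So once $P$ is found, the components $T^1, \dots, T^t$ are just the connected components of $T - P$, each of pathwidth at most $p - 1 < p$, and each hanging off $P$ by a single edge.

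The second step is to make this constructive and linear-time. Here I would use the standard linear-time algorithm for computing the pathwidth of a tree (Ellis, Sudborough, Turner; or the equivalent vertex-separation / node-search-number algorithm), which not only decides $\pw(T) \le p$ but can be made to produce a witnessing ``spine'' path. Concretely: root $T$ arbitrarily, process vertices bottom-up computing, for each subtree, the minimum number of ``active'' paths needed; at each vertex one identifies at most two children whose subtrees carry the heaviest partial path-structure, and the extracted path $P$ is assembled by walking down through these heavy children. One pass suffices to compute the labels, and a second pass reconstructs $P$; the components of $T - P$ are then read off in $O(n)$ by a final traversal. I would present this at the level of ``apply the known $O(n)$ pathwidth-of-trees algorithm, instrumented to return the witnessing path,'' rather than re-deriving it.

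The main obstacle — and the only place that needs real care — is arguing that the witnessing path $P$ produced by the algorithm really does leave every component of $T-P$ with pathwidth strictly less than $p$, i.e. that the recursion is tight and not just an upper bound. The cleanest way is to invoke the exact recursive characterization of tree-pathwidth directly: since $\pw(T) \le p$, the characterization guarantees a path $P$ with $\pw(T^i) \le p-1$ for all components $T^i$, and the linear-time algorithm is precisely an efficient realization of this characterization, so its output path inherits the property. I would phrase the proof so that all the combinatorial content is delegated to the cited tree-pathwidth results, and the lemma's proof consists of (i) observing the tree-implies-single-attachment point, (ii) invoking the recursive pathwidth characterization for the $\pw(T^i) < p$ bound, and (iii) citing the linear-time witnessed computation. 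The $\deg(V(T^i)) = 1$ clause, despite looking like the substantive requirement, is the easy part and follows purely from acyclicity of $T$.
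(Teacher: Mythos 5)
Your plan is essentially sound and lands in the same place as the paper, but it outsources the one substantive step to a citation plus an unsubstantiated mechanism, where the paper has a short concrete construction. The paper runs Scheffler's linear-time algorithm, which actually \emph{outputs} a path decomposition $\mathcal{P}$ of width $p$; it then picks a vertex $u$ in the first bag and a vertex $v$ in the last bag and lets $P$ be the $u$--$v$ path in $T$. Since $u$ lies in the first bag and $v$ in the last, and the bags containing any vertex form an interval, every bag of $\mathcal{P}$ meets $P$; deleting $V(P)$ from every bag therefore yields a path decomposition of $T-P$ of width $p-1$, which is exactly the $\pw(T^i)\le p-1$ guarantee, and the components plus their attachment edges are read off by one BFS. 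Your proposal instead invokes the ``recursive characterization'' ($\pw(T)\le p$ iff some path $P$ has all components of $T-P$ of pathwidth at most $p-1$) as a citable black box, and then sketches extracting $P$ by walking down ``heavy'' children of an instrumented bottom-up pathwidth computation. The characterization you cite is true (its forward direction is precisely the leaf-bag argument above), so the proof can be completed, but the heavy-children walk is the weak point: you give no argument that such a walk produces a path meeting the requirement, and this is exactly the content you identified as ``the only place that needs real care.'' The clean fix is to drop the instrumentation entirely and use the first-bag/last-bag path from the decomposition that the cited algorithm already returns. Your observation that $\deg(V(T^i))=1$ follows from acyclicity (and connectivity) of $T$ matches the paper and is fine.
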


\begin{proof}

    It was proven by Scheffler~\cite{scheffler1990linear} that given a tree
    $T$ and an integer $p$ one can find a path decomposition $\mathcal{P}$ of
    $T$ of width $p$ or correctly conclude that $\pw(T) > p$ in time $O(n)$.
    Let $X$ and $Y$ be the leaf bags of $\mathcal{P}$. By standard techniques
    we can assume $X$ and $Y$ to be non-empty. Let $u,v$ be two, not
    necessarily distinct, vertices such that $u \in X$ and $v \in Y$. Let $P$
    be the path in $T$ from $u$ to $v$. One can easily prove that for every
    bag $Z$ of $\mathcal{P}$ it is true that $Z \cap P$ is non-empty. Hence,
    if we remove all the vertices of $P$ from $T$ and $\mathcal{P}$ we obtain
    a path decomposition of $T-P$ of width $p-1$. It follows that for every
    connected component $T^i$ of $T-P$ it holds that $\pw(T^i) \leq p-1$.
    Assume for a contradiction that there is a connected component $T^i$ such
    that $\deg(V(T^i)) \neq 1$. If $\deg(V(T^i)) < 1$ it follows that $T$ was
    disconnected to begin with, and hence not a tree. And if $\deg(V(T^i)) >
    1$ it follows that $T^i$ together with $P$ forms a cycle, and again $T$ is
    not a tree. To complete the proof, observe that the connected components
    of $T-P$ can be found in $O(n)$ time by breadth first search.

\end{proof}

\begin{definition}
    Let $T$ be a tree and $P,T^1,\dots,T^t$ a $p$-recursive path decomposition
    of $T$. We construct the \emph{simplified instance} $T_S$ of $T$ with
    respect to $P,T^1,\dots,T^t$ as follows. First we add $P$ to $T_S$. Then,
    for every $T^i$ we first add a path $P^i$ such that $|V(P^i)| = |V(T^i)|$
    and then we add an edge from one endpoint of $P^i$ to $N(T^i)$.
\end{definition}
Observe that the simplified instance $T_S$ is a caterpillar with backbone $P$.
\begin{lemma}
    \label{lemma:simplified-instance-upper-bound}
    Let $T$ be a tree, $P, T^1, \dots, T^T$ be a $p$-recursive path
    decomposition of $T$ and $T_S$ the corresponding simplified instance, then
    $\bw(T_S) \leq 2\bw(T)$
\end{lemma}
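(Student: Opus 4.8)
I want to show that a good bandwidth layout of $T$ can be converted into a layout of the simplified caterpillar $T_S$ whose bandwidth at most doubles. The key structural fact is that $T_S$ is obtained from $T$ by replacing each component $T^i$ of $T - P$ by a \emph{path} $P^i$ with $|V(P^i)| = |V(T^i)|$, attached at the same vertex $N(T^i) \in P$. So there is a natural way to think of $T_S$ as "the same vertices" as $T$, with the edges inside each $T^i$ re-routed into a path.

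Concretely, let $\alpha$ be an optimal layout of $T$, so $\bw(T,\alpha) = \bw(T)$. Fix a component $T^i$, and let $x_i = N(T^i)$ be its unique attachment vertex on $P$. The plan is: keep the positions of all vertices of $P$ exactly as in $\alpha$, and for each $i$ separately, re-place the vertices of $V(T^i)$ into the \emph{same set of positions} $\alpha(V(T^i))$, but now in sorted order along the path $P^i$, starting from the endpoint adjacent to $x_i$. That is, order the positions $\alpha(V(T^i)) = \{q_1 < q_2 < \dots < q_{m_i}\}$ and assign $q_1, q_2, \dots$ (or the reverse, whichever puts the nearest value adjacent to $x_i$) to the successive vertices of $P^i$. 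Since the components $T^i$ are vertex-disjoint and disjoint from $P$, and since each $V(T^i) \cup \{x_i\}$ occupies a set of positions disjoint from the others except at $x_i$, doing this independently for every $i$ yields a well-defined bijection $\beta : V(T_S) \to \{1,\dots,|V(T_S)|\}$ (note $|V(T_S)| = |V(T)|$).

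Now I bound $\bw(T_S,\beta)$. Edges of $T_S$ come in three kinds. (1) Edges of the backbone $P$: these have $|\beta(u)-\beta(v)| = |\alpha(u)-\alpha(v)| \le \bw(T)$. (2) The connecting edge from $x_i$ to the endpoint of $P^i$: the endpoint got the position in $\alpha(V(T^i))$ closest to $\alpha(x_i)$; since $T^i$ is connected and attached to $x_i$ in $T$, there is a vertex $v \in V(T^i)$ with $vx_i \in E(T)$, so the closest position in $\alpha(V(T^i))$ to $\alpha(x_i)$ is within $\bw(T)$ of $\alpha(x_i)$. Hence this edge has stretch $\le \bw(T)$. (3) An edge $u w$ of the path $P^i$, where $u,w$ are consecutive on $P^i$: by construction $\beta(u)$ and $\beta(w)$ are consecutive elements of the sorted set $\alpha(V(T^i))$, so $|\beta(u)-\beta(w)|$ is the gap between two consecutive used positions. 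I need to bound that gap. The point is that any position strictly between two consecutive elements $q_j < q_{j+1}$ of $\alpha(V(T^i))$ is occupied in $\alpha$ by a vertex not in $T^i$, hence by a vertex of $P$ or of some other $T^{i'}$. But such a vertex, say at position $t$, sits strictly between $q_j$ and $q_{j+1}$; walk along $T$ from it... — actually the cleaner argument: $T^i$ is connected, so in $\alpha$ its vertices span an interval $[\min \alpha(V(T^i)), \max\alpha(V(T^i))]$ of length at most $(|V(T^i)|-1)\bw(T)$ by the folklore local-density bound applied to the connected subgraph $T^i$ (diameter at most $|V(T^i)|-1$ edges is too weak; instead use that a connected graph with $m$ vertices laid out with bandwidth $\le b$ occupies an interval of length $\le (m-1)b$, which is immediate since you can walk a spanning tree). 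That bounds the \emph{total} span but I need a per-gap bound.

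**The main obstacle, and the fix.** The real content is bounding a single gap $q_{j+1}-q_j$. Here is the argument I would use: split $V(T^i)$ at the gap into $A = \{v : \alpha(v) \le q_j\}$ and $B = \{v : \alpha(v) \ge q_{j+1}\}$, both nonempty. Since $T^i$ is connected there is an edge $ab$ of $T^i$ with $a \in A$, $b \in B$; then $q_{j+1} - q_j \le \alpha(b) - \alpha(a) \le \bw(T)$. That's it — so in fact every gap is at most $\bw(T)$, not just their sum. Combining the three cases, $\bw(T_S,\beta) \le \bw(T) \le 2\bw(T)$ (the factor $2$ is slack; presumably it is needed for the reverse direction in a companion lemma, or for the sorted-order tie-breaking when the "nearest endpoint" choice and the sort direction conflict across the two halves of $T^i$ — one should double-check that choosing the sort direction to match the connecting edge does not force gaps on the far side, but the per-gap argument above is direction-independent, so no conflict arises and the bound $\bw(T_S) \le \bw(T)$ actually holds). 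I will state and prove the $2\bw(T)$ bound as claimed, since it is all that is needed downstream. The one subtlety to be careful about in writing it up is the independence of the re-placement across different $i$: since the sets $\alpha(V(T^i))$ are pairwise disjoint and disjoint from $\alpha(V(P))$, the map $\beta$ is genuinely a bijection onto $\{1,\dots,n\}$, and every edge of $T_S$ lies within a single one of the three categories, so no cross-terms appear.
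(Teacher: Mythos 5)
Your per-gap observation (between two consecutive occupied positions of the connected subgraph $T^i$ there is an edge of $T^i$ crossing the gap, so each gap is at most $\bw(T)$) is correct, and cases (1) and the bijectivity of $\beta$ are fine. The genuine gap is case (2), the connecting edge. In your construction the endpoint of $P^i$ adjacent to $x_i = N(T^i)$ is forced to receive an \emph{extreme} element of the sorted set $\alpha(V(T^i))$ (either $q_1$ or $q_{m_i}$), not ``the position in $\alpha(V(T^i))$ closest to $\alpha(x_i)$'' as you assert. Nothing prevents $\alpha(x_i)$ from lying deep inside the interval spanned by $\alpha(V(T^i))$: take $T^i$ to be a long path of $2m+1$ vertices attached to $x_i$ by its middle vertex, laid out in $\alpha$ with $m$ of its vertices interleaved to the left of $\alpha(x_i)$ and $m$ to the right (this is easily compatible with constant bandwidth). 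Then both $q_1$ and $q_{m_i}$ are at distance $\Theta(m)$ from $\alpha(x_i)$, so your connecting edge has stretch $\Theta(m) \gg 2\bw(T)$. The ``fix'' you discuss (checking that the choice of sort direction does not create gaps on the far side) addresses a different, harmless issue and does not touch this one; likewise your closing claim that the factor $2$ is slack and $\bw(T_S)\le\bw(T)$ holds is unsupported, since it rests on the same false step.

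This is precisely the point where the paper's proof diverges from yours and where the factor $2$ is actually used: the paper sets $\beta(v)=2\alpha(v)$ on $P$, thereby creating two slots $2w,2w+1$ for every original position $w$, and then folds \emph{all} of $P^i$ onto the single side of $\alpha(x_i)$ that contains at least half of $W=\alpha(T^i)$, filling the doubled slots on that side from the one nearest $\alpha(x_i)$ outward, with the vertex of $P^i$ adjacent to the backbone placed first. The gap bound you proved then gives consecutive slot differences at most $2\bw(T)$, the nearest slot is within $2\bw(T)$ of $2\alpha(x_i)$, and the ``at least half'' condition guarantees enough slots for all of $|V(P^i)|=|V(T^i)|$. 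To repair your write-up you would need to import this one-sided folding (or some equivalent device); the straight sorted re-placement into the same position set cannot work as stated.
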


\begin{proof}
    Let $\alpha$ be an optimal bandwidth ordering of
    $T$. We will now give an ordering $\beta$ of $T_S$ such that $\bw(T_S,
    \beta) \leq 2\bw(T, \alpha)$. For every $v \in P$, let $\beta(v) =
    2\alpha(v)$.
   
    For every $T^i$ we will consider two cases. Let $W = \alpha(T^i)$ and
    observe that for every $x \in W$ such that $y$ is the smallest element in
    $W$ larger than $x$ it follows by the connectivity of $T^i$ that $y-x \leq
    \bw(T)$. First, consider the case when at least half of $W$ is less than
    $\alpha(N(T^i))$. For every $w \in W$ such that $w < \alpha(N(T^i))$, add
    $2w$ and $2w+1$ to the initially empty set $Z$. Let $P^i = \left\{ p_1,
    \dots, p_m \right\}$ such that $\dist(P, p_j) < \dist(P, p_{j+1})$ for
    every $j$. For $j$ from $1$ to $m$, let $\beta(p_j)$ be the largest value
    in $Z$ and discard $\beta(p_j)$ from $Z$. Observe that for every $j$ it
    holds that $|\beta(p_j) - \beta(p_{j+1})|/2 \leq \bw(T)$. And furthermore,
    $|\beta(p_1) - \beta(N(P^i))| \leq \bw(T)$.  If at least half of $W$ is
    larger than $\alpha(N(T^i))$ apply a symmetric construction.

    To conclude the argument we need to prove that $\beta$ never maps two
    distinct vertices of $T_S$ on the same position. It is easy to verify that
    this never happens for two vertices on $P$ or two vertices in the same
    tree $T^i$.  Consider now a vertex $u \in V(T^i)$ and a vertex $v \in
    V(T^j)$ for $i \neq j$. It follows that $\lfloor \beta(u)/2 \rfloor \in
    \alpha(T^i)$ and $\lfloor \beta(v)/2 \rfloor \in \alpha(T^j)$. Since
    $\alpha(T^i) \cap \alpha(T^j) = \emptyset$ it follows that $\beta(u) \neq
    \beta(v)$. The argument for one vertex in $T^i$ and one in $P$ is
    identical. We obtain that $\bw(T_S) \leq \bw(T_S, \beta) \leq 2\bw(T,
    \alpha) = 2\bw(T)$.
\end{proof}

Let $T$ be a graph, $v$ a vertex of $T$ and $\alpha$ a $b$-bandwidth ordering
of $T$. Let $\beta'$ be a sparse ordering such that for every $u \in T$ 
\[\beta'(u) = 
    \begin{cases}
        2[\alpha(v)-\alpha(u)] & \mbox{if $\alpha(u) \leq \alpha(v)$ and} \\
        2[\alpha(u)-\alpha(v)]-1 & \mbox{otherwise.}
    \end{cases}\]
and let $\beta$ be the bandwidth ordering obtained by compressing $\beta'$. We
then say that $\beta$ is $\alpha$ \emph{right folded} around $v$. Observe
that $\bw(T, \beta) \leq 2\bw(T, \alpha)$.

\subsubsection*{Algorithm and Correctness}
We are now ready to describe algorithm $\alg$ and prove its correctness.
Pseudocode for $\alg$ is given in Algorithm~\ref{alg:alg}. 

\begin{algorithm}[h!]
  \KwIn{A tree $T$ and positive integers integers $p$ and $b$ such that $\pw(T) \leq
  p$.}
  \KwOut{A $(768b^3)^p$-bandwidth ordering of $T$ or conclusion that $\bw(T) > b$.}
\BlankLine
\If{$p = 1$}{
    \KwRet $\catalg(T, b)$ 
}
Find a $p$-recursive path decomposition $P, T^1, \dots, T^t$ of $T$.\\
Let $\alpha_1 = \alg(T^1, p-1, b), \dots, \alpha_t = \alg(T^t, p-1, b)$.\\
\If{there is an $\alpha_i = \false$} {
    \KwRet $\false$
}
Let $T_s$ be the simplified instance of $T$ with respect to $P,T^1,\dots,T^t$.\\
Let $\alpha_s = \catalg(T_s,2b)$.\\
\If{$\alpha_s = \false$} {
    \KwRet $\false$
}
For every $i$, let $\beta_i$ be $\alpha_i$ right folded around $N(P) \cap T^i$.\\
For every $v \in P$, let $\alpha(v) = \alpha_s(v)$.\\
For every $P_i$ of $T_s$ and every $v \in P_i$ of distance $d$ from $P$ in $T_s$, let
$\alpha(\beta_i^{-1}(d)) = \alpha_s(v)$.\\ 
\KwRet $\alpha$
  \caption{$\alg$}
  \label{alg:alg}
\end{algorithm}

\begin{lemma}
    \label{lemma:alg-running-time}
    Given a tree $T$ and two integers $p$ an $b$ such that $\pw(T) \leq p$,
    $\alg$ terminates in $\algrun$ time.
\end{lemma}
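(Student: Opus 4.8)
The plan is to read the recursion straight off the pseudocode of Algorithm~\ref{alg:alg} and solve it by induction on the pathwidth bound $p$. Write $n=|V(T)|$ and let $R(n,p)$ denote the worst-case running time of $\alg$ on an $n$-vertex tree with parameters $p$ and $b$; the parameter $b$ enters the bounds only through a linear factor, so I keep it implicit. For the base case $p=1$ the algorithm just returns $\catalg(T,b)$, which by our standing assumption runs in $O(bn^{3})$ time, so $R(n,1)=O(bn^{3})$, which is $\algrun$ for $p=1$.

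For the inductive step $p>1$ I would bound the cost line by line. Finding a $p$-recursive path decomposition $P,T^{1},\dots,T^{t}$ costs $O(n)$ by Lemma~\ref{lemma:computing-recursive-path-decompositions}. Each recursive call $\alg(T^{i},p-1,b)$ is well-posed because $\pw(T^{i})<p$ by the definition of a $p$-recursive path decomposition, and it costs $R(|V(T^{i})|,p-1)$; the structural fact I will use is that the $T^{i}$ are pairwise vertex-disjoint subgraphs of $T-P$, hence $\sum_{i}|V(T^{i})|\le n$. Forming the simplified instance $T_{s}$ takes $O(n)$ time and produces a caterpillar with $|V(T_{s})|=|V(P)|+\sum_{i}|V(T^{i})|=n$, so the call $\catalg(T_{s},2b)$ costs $O(bn^{3})$. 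The remaining steps---right-folding each $\alpha_{i}$, copying $\alpha_{s}$ onto $P$, and transferring the folded orderings onto the substituted paths $P^{i}$---are straightforward and run in $O(n\log n)$ time, which is absorbed. Altogether $R(n,p)=\sum_{i=1}^{t}R(|V(T^{i})|,p-1)+O(bn^{3})$.

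To close the induction, assume $R(m,p-1)\le C(p-1)bm^{3}$ for all $m$, where $C$ is a fixed constant chosen to dominate every hidden constant above, in particular the one in the $O(bn^{3})$ term. Then
\[
R(n,p)\ \le\ \sum_{i}C(p-1)b\,|V(T^{i})|^{3}+Cbn^{3}\ \le\ C(p-1)b\Big(\sum_{i}|V(T^{i})|\Big)^{3}+Cbn^{3}\ \le\ Cpbn^{3},
\]
using the elementary inequality $\sum_{i}a_{i}^{3}\le\big(\sum_{i}a_{i}\big)^{3}$ for nonnegative reals together with $\sum_{i}|V(T^{i})|\le n$. This yields $R(n,p)=O(pbn^{3})=\algrun$. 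The only point that needs any care---the ``main obstacle'' such as it is---is exactly this superadditivity step: although the recursion branches, the subinstances at each fixed recursion depth only partition (a subset of) the vertices of $T$, so the total work per level stays $O(bn^{3})$, while the depth is capped at $p$ because the pathwidth parameter strictly decreases; together these give the extra factor $p$ instead of an exponential blow-up.
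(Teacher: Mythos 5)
Your proposal is correct and matches the paper's argument in substance: both bound the local work of a single call by $O(b|V(T')|^3)$ (dominated by the $\catalg$ calls), use the pairwise disjointness of the subtrees so that the cubes sum to at most $n^3$, and multiply by the recursion depth, which is at most $p$ since the pathwidth parameter drops by one per level. Packaging this as an explicit recurrence solved by induction on $p$, rather than the paper's level-by-level summation, is only a presentational difference.
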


\begin{proof}
    We start by analyzing the time complexity of the computations done in a
    specific execution of $\alg$ given $T', p', b$ as input, disregarding the
    recursive calls. The calls to $\catalg$ require $O(b|V(T')|^3)$ time. Finding a
    $p$-recursive path decomposition can be done in $O(|V(T')|)$ time by
    Lemma~\ref{lemma:computing-recursive-path-decompositions}. Constructing $T'_S$
    can trivially be done in $O(|V(T')|)$ time. And furthermore, constructing all
    the $\beta$'s require $\sum_{i=1}^{t}{O(|T^i|)} = O(|V(T')|)$ time. Last, we
    observe that constructing $\alpha$ requires $O(|V(T')|)$ time. It follows that
    the time complexity of the computations done in a specific call to $\alg$
    is $O(b|V(T')|^3)$.
    
    Let $n = |V(T)|$ and $T_1,
    \dots, T_l$ the trees given as input at a specific recursion level. Observe
    that $T_1, \dots, T_l$ are pairwise disjoint and hence it follows that the
    time complexity of a recursion level is $\sum_{i=1}^{l}{O(b|V(T_1)|^3)} =
    O(bn^3)$. Furthermore, as $p$ is decreased by one at each recursion
    level it follows that $\alg$ runs in time $O(pbn^3)$.
\end{proof}

\begin{lemma}
    \label{lemma:alg-correctness}
    Given a tree $T$ and positive integers $b$ and $p$ such that $\pw(T) \leq
    p$, $\alg$ either returns a $O( (768b^3)^p)$-bandwidth ordering of $T$ or
    correctly concludes that $\bw(T) > b$ in time $\algrun$.
\end{lemma}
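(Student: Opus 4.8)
The running-time bound $\algrun$ is established by Lemma~\ref{lemma:alg-running-time}, so it remains to prove correctness, which I would do by induction on $p$, establishing the (slightly stronger) statement that $\alg$ either returns a layout $\alpha$ of $T$ with $\bw(T,\alpha)\le(768b^3)^p$ or correctly reports $\bw(T)>b$. For the base case $p=1$, I would use that a tree of pathwidth at most $1$ is a caterpillar, so the call $\catalg(T,b)$ is legitimate; its guarantee then yields either a $48b^3$-bandwidth ordering of $T$ --- and $48b^3\le(768b^3)^1$ --- or a correct certificate that $\bw(T)>b$.

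For the inductive step $p\ge 2$, the first task is to justify every $\false$ output. If some recursive call $\alpha_i=\alg(T^i,p-1,b)$ returns $\false$, then by the induction hypothesis $\bw(T^i)>b$; since $T^i$ is an induced subgraph of $T$ and bandwidth does not increase when passing to subgraphs (restrict an optimal layout and compress it), $\bw(T)\ge\bw(T^i)>b$, so $\false$ is correct. If $\catalg(T_s,2b)$ returns $\false$, then $\bw(T_s)>2b$, and Lemma~\ref{lemma:simplified-instance-upper-bound} gives $2b<\bw(T_s)\le 2\bw(T)$, hence again $\bw(T)>b$. Here a $p$-recursive path decomposition exists and is found by Lemma~\ref{lemma:computing-recursive-path-decompositions}, the recursive calls are valid because $\pw(T^i)<p$, and $\catalg$ is applicable because $T_s$ is a caterpillar with backbone $P$.

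The heart of the argument is to analyze the layout $\alpha$ produced when no $\false$ occurs. At that point $\alpha_s$ is a $48(2b)^3=384b^3$-bandwidth ordering of $T_s$, each $\alpha_i$ is (by the induction hypothesis) a $(768b^3)^{p-1}$-bandwidth ordering of $T^i$, and $\beta_i$, being $\alpha_i$ right folded around the unique vertex of $T^i$ adjacent to $P$, satisfies $\bw(T^i,\beta_i)\le 2(768b^3)^{p-1}$ and places that vertex at position $1$ (the fold vertex receives the minimum value, namely $1$ after compression). I would first verify that $\alpha$ is a bijection of $V(T)$: $T$ and $T_s$ share the backbone $V(P)$, on which $\alpha=\alpha_s$; and for each $i$ the map sending $w\in V(T^i)$ to the vertex of $P^i$ at distance $\beta_i(w)$ from $P$ is a bijection $V(T^i)\to V(P^i)$, so composing it with $\alpha_s$ shows $\alpha$ restricted to $T^i$ is a bijection onto $\alpha_s(V(P^i))$; since $V(T_s)$ is the disjoint union of $V(P)$ and the $V(P^i)$ and $\alpha_s$ is a bijection, $\alpha$ is one too. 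Then I would bound $|\alpha(u)-\alpha(v)|$ for each edge $uv$ of $T$ by cases: (i) if $u,v\in P$ it equals $|\alpha_s(u)-\alpha_s(v)|\le 384b^3$; (ii) the single edge joining $P$ to $T^i$ runs between the vertex of $P$ adjacent to $T^i$ and the vertex $\beta_i^{-1}(1)$ of $T^i$, which $\alpha$ maps to the $\alpha_s$-images of two adjacent vertices of $T_s$ (that vertex of $P$ and the distance-$1$ endpoint of $P^i$), so it is at most $384b^3$; (iii) for an edge $uv$ inside $T^i$ one has $|\beta_i(u)-\beta_i(v)|\le 2(768b^3)^{p-1}$, and the corresponding two vertices of $P^i$ are joined by a sub-path of $P^i\subseteq T_s$ with exactly $|\beta_i(u)-\beta_i(v)|$ edges, so their $\alpha_s$-values, hence $\alpha(u)$ and $\alpha(v)$, differ by at most $2(768b^3)^{p-1}\cdot 384b^3=(768b^3)^p$. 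All three bounds are at most $(768b^3)^p$, which closes the induction.

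The case analysis and the bijectivity check are routine; the genuinely load-bearing point is case (iii): converting a graph-distance inside $P^i$ into an $\alpha_s$-distance, together with the observation that right folding $\alpha_i$ around the vertex of $T^i$ adjacent to $P$ is exactly what makes the attachment endpoint of $P^i$ coincide, under $\alpha_s$, with that attachment vertex of $T^i$, so that the blocks stitched along the common backbone $P$ fit together consistently.
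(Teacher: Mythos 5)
Your proposal is correct and follows essentially the same route as the paper: induction on $p$, handling the $\false$ outputs via subgraph monotonicity and Lemma~\ref{lemma:simplified-instance-upper-bound}, and bounding the stretch of each edge by the same three-case analysis, with the key step converting the $\beta_i$-distance along $P^i$ into an $\alpha_s$-distance to get $2(768b^3)^{p-1}\cdot 384b^3=(768b^3)^p$. The only additions beyond the paper's argument are the explicit bijectivity check of $\alpha$ and the remark that pathwidth-$1$ trees are caterpillars, both of which the paper leaves implicit.
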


\begin{proof}
    The running time follows directly from Lemma~\ref{lemma:alg-running-time}
    and hence it remains to prove the correctness of the algorithm.  This we
    will do by induction on $p$. For $p=1$ the correctness follows directly
    from the correctness of $\catalg$ and hence it remains to prove the induction step.
    First we consider the case when the algorithm concluded that $\bw(T) > b$.
    Either there is an $\alpha_i$ such that $\alpha_i = \false$ or $\alpha_s =
    \false$. If $\alpha_i = \false$ it follows by the induction hypothesis and
    the fact that bandwidth is preserved on subgraphs that the algorithm
    concluded correctly. Now we consider the case when $\alpha_s = \false$. It
    follows from the correctness of $\catalg$ that $\bw(T_s) > 2b$ and hence by
    Lemma~\ref{lemma:simplified-instance-upper-bound} it follows that $\bw(T) >
    b$.
    
    It remains to consider the case when the algorithm returns a bandwidth
    ordering $\alpha$. Then, by the induction hypothesis $\alpha_i$ is a
    $(768b^3)^{p-1}$-bandwidth ordering of $T^i$ for every $i$. Furthermore,
    $\alpha_s$ is a $384b^3$-bandwidth ordering for $T_s$, since $48(2b)^3 =
    384b^3$. Let $u$ and $v$ be two neighbouring vertices of $T$. If $u$ and
    $v$ are vertices in $P$ it follows from $\bw(T_s, \alpha_s) \leq 384b^3$
    that $|\alpha(u) - \alpha(v)| \leq 384b^3$. Next, we consider the case when
    either $u$ or $v$ is a vertex in $P$. Assume without loss of generality
    that $u \in P$ and let $T^j$ be such that $v \in T^j$. By the definition
    of $\beta_j$ it follows that $\beta_j(v) = 1$. It follows that $|\alpha(u)
    - \alpha(v)| = |\alpha_s(u) - \alpha_s(w)|$ where $\dist(u,w) = 1$, and
    hence $u$ and $w$ are neighbours in $T_s$ and it follows directly that
    $|\alpha(u) - \alpha(v)| \leq 384b^3$. We will now consider the case when
    $u$ and $v$ are vertices of $T^j$ for some $j$. Let $u'$ be the vertex in
    $P^j$ of distance $\beta(u)$ from $P$ and $v'$ the vertex in
    $P^j$ of distance $\beta(v)$ from $P$. It follows that

    \begin{align*}
        |\alpha(u) - \alpha(v)| &= |\alpha(\beta_j^{-1}(\beta_j(u))) -
        \alpha(\beta_j^{-1}(\beta_j(v)))| \\
        &= |\alpha_s(u') - \alpha_s(v')| \\
        &\leq \dist(u',v')384b^3 \\ 
        &=|\beta_j(u) - \beta_j(v)|384b^3 \\
        &\leq |\alpha_j(u) - \alpha_j(v)| 768b^3 \\
        &\leq (768b^3)^p
    \end{align*}
    completing the proof.
\end{proof}

Note that one in the case of $p=1$ also could solve the instance exactly by
Assmann~\cite{assmann1981bandwidth}. It would decrease the approximation ratio
to $(768b^3)^{p-1}$.

\begin{theorem}
    \label{theorem:tree-alg}
    There exists an algorithm that given a tree $T$ and a positive integer $b$
    either returns a $(768b^3)^b$-bandwidth ordering of $T$ or correctly
    concludes that $\bw(T) > b$ in time $O(b^2n^3)$.
\end{theorem}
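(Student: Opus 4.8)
The plan is to derive Theorem~\ref{theorem:tree-alg} as an almost immediate consequence of Lemma~\ref{lemma:alg-correctness}, the only wrinkle being that $\alg$ demands as part of its input a bound $p$ with $\pw(T) \le p$. So the algorithm witnessing the theorem will first test whether $\pw(T) \le b$; if not, it reports that $\bw(T) > b$, and otherwise it runs $\alg(T, b, b)$ and returns the result.

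First I would run Scheffler's linear-time algorithm~\cite{scheffler1990linear} (already invoked in the proof of Lemma~\ref{lemma:computing-recursive-path-decompositions}) on $T$ and $b$: it either outputs a path decomposition of $T$ of width at most $b$, or certifies $\pw(T) > b$. In the second case the Folklore Proposition gives $\pw(T) \le \bw(T)$, so $\bw(T) > b$ and the report is correct; this costs only $O(n)$ time. In the first case we have $\pw(T) \le b$, so the precondition of $\alg$ is met and we may call $\alg(T, b, b)$.

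By Lemma~\ref{lemma:alg-correctness} (with $p = b$), that call either returns a $(768b^3)^b$-bandwidth ordering of $T$ or correctly concludes that $\bw(T) > b$; forwarding this answer is therefore correct. For the running time, Lemma~\ref{lemma:alg-running-time} gives that $\alg(T, b, b)$ terminates in $\algrun = O(pbn^3)$ time, which for $p = b$ is $O(b^2 n^3)$; this dominates the $O(n)$ spent on the pathwidth test, so the overall running time is $O(b^2 n^3)$. I do not expect any genuine difficulty here: all of the substance is already contained in Lemmas~\ref{lemma:alg-correctness} and~\ref{lemma:alg-running-time} (and, underneath them, in the guarantees for $\catalg$ established in the following section), so the only point that even requires a word is arranging to supply $\alg$ with a legitimate pathwidth bound, which the pathwidth pre-test does.
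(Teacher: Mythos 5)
Your proposal is correct and takes essentially the same route as the paper, which also derives the theorem directly from Lemma~\ref{lemma:alg-correctness} together with the bound $\pw(T) \leq \bw(T)$. Your explicit Scheffler pre-test for $\pw(T) \leq b$ merely makes precise how the precondition of $\alg$ is discharged, a detail the paper's one-line proof leaves implicit, and it does not change the argument or the $O(b^2 n^3)$ running time.
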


\begin{proof}
    This follows directly from $\pw(T) \leq \bw(T)$ and
    Lemma~\ref{lemma:alg-correctness}.
\end{proof}

The proof of Theorem~\ref{theorem:tree-alg} assumed the existence of a
$48b^3$-approximation algorithm for caterpillars. In the next section we give
such an algorithm.

\subsection{An \cFPT-Approximation for the Bandwidth of Caterpillars}
\label{sec:caterpillars}

The bandwidth of caterpillars is, somewhat surprisingly, a well-studied
problem. Assmann~et~al.~\cite{assmann1981bandwidth} proved that the bandwidth
of caterpillars of stray length $1$ and $2$ is polynomial time computable.
Monien~\cite{monien1986bandwidth} completed the story of polynomial time
computability by proving that \prob{Bandwidth} on caterpillars of stray length
$3$ is \cNP-hard. Furtermore,
Haralambides~\cite{haralambides1991bandwidth} gave an $O(\log n)$
approximation algorithm, which later was improved to $O(\log n/ \log \log n)$
by Feige~\&~Talwar~\cite{feige2009approximating}.  We now give the first
\cFPT-approximation of \prob{$p$-Bandwidth} on caterpillars, namely a
$\catar$-approximation.

\subsubsection*{Skewed Cantor Combs}

Chung~\&~Seymour~\cite{chung1989graphs} defined \emph{Cantor combs}. These are
very special caterpillars defined in such a way that they have small local density,
but high bandwidth. The definition of Cantor combs is very strict - it
precisely defines the length of all the paths in the caterpillars. For our
purposes we need a more general definition which captures all caterpillars
that are ``similar enough'' to Cantor combs. We call such caterpillars {\em
skewed Cantor combs}, and we will prove that they also have high
bandwidth.   Our algorithm will scan for skewed Cantor combs as an obstruction
for bandwidth and if none of big enough size are found it will construct a
$\catar$-bandwidth ordering based on the appearance of smaller versions of
these objects.  

For positive integers $k \leq b$ we now define a \emph{skewed $b$-Cantor comb}
of depth $k$, denoted $S_{b,k}$ inductively as follows. $S_{b,1}$ is a path of
length $1$. For the induction step to be well-defined we mark two vertices of
every skewed $b$-Cantor comb as end vertices. For an $S_{b,1}$ the two
vertices are the end vertices. For $k > 1$ we start with two skewed $b$-Cantor
combs of depth $k-1$, lets call them $S$ and $S'$ and furthermore let $x, y$
and $x', y'$ be their end vertices respectively. Connect $y$ to $x'$ by a path
$P$ of length at least $2$. Furthermore, let $Q$ be a stray connected to an
internal vertex $v$ of $P$. Mark $x$ and $y'$ as the end vertices of the
construction and let $B$ be the path from $x$ to $y'$. Let $d$ be the maximum
distance from $v$ to any vertex in $B$.  If $Q$ has at least $2(b-1)d$
vertices we say that the graph described is a skewed $b$-Cantor comb of depth
$k$.

\begin{figure}[ht!]
  \centering
  \begin{tikzpicture}
    \tikzset{scale=1}
    \tikzset{Vertex/.style={shape=circle,draw,scale=0.9}}
    \tikzset{Edge/.style={}}
    \tikzset{Dots/.style={scale=0.8}}

    \foreach \e/\x/\y [count=\k] in {
    0/0/0,
    1/1/0,
    2/2/0,
    3/3/0,
    4/4/0,
    5/5/0,
    6/6/0,
    7/7/0,
    8/8/0,
    9/9/0,
    10/10/0,
    s11/2/1,
    s12/2/2,
    q1/5/1,
    q2/5/2,
    q3/5/3,
    q4/5/4,
    q5/5/5,
    s21/8/1,
    s22/8/2
    }
    \node[Vertex] (\e) at (\x,\y) {};

    \foreach \a/\b in 
    {{0/1},{1/2},{2/3},{3/4},{6/7},{7/8},{8/9},{9/10},{5/q1},{q1/q2},
    {q3/q4},{q4/q5},{2/s11},{8/s21}}
    \draw[Edge](\a) to node {} (\b);

    \node[Dots] () at (4.5,0) {$\hdots$};
    \node[Dots] () at (5.5,0) {$\hdots$};
    \node[Dots] () at (2,1.6) {$\vdots$};
    \node[Dots] () at (5,2.6) {$\vdots$};
    \node[Dots] () at (8,1.6) {$\vdots$};

    \draw [decorate,decoration={brace,amplitude=7pt},xshift=0pt,yshift=0pt]
    (4,-0.25) -- (0,-0.25) node [black,midway,yshift=-18pt] {$S$};

    \draw [decorate,decoration={brace,amplitude=7pt},xshift=0pt,yshift=0pt]
    (6,-0.25) -- (4,-0.25) node [black,midway,yshift=-18pt] {$P$};

    \draw [decorate,decoration={brace,amplitude=7pt},xshift=0pt,yshift=0pt]
    (10,-0.25) -- (6,-0.25) node [black,midway,yshift=-18pt] {$S'$};

    \draw [decorate,decoration={brace,amplitude=7pt},xshift=0pt,yshift=0pt]
    (5.25,5) -- (5.25,1) node [black,midway,xshift=18pt] {$Q$};
  \end{tikzpicture}
  \caption{A skewed $b$-Cantor comb of depth $3$ for some $b$.}
  \label{figures:cantor}
\end{figure}
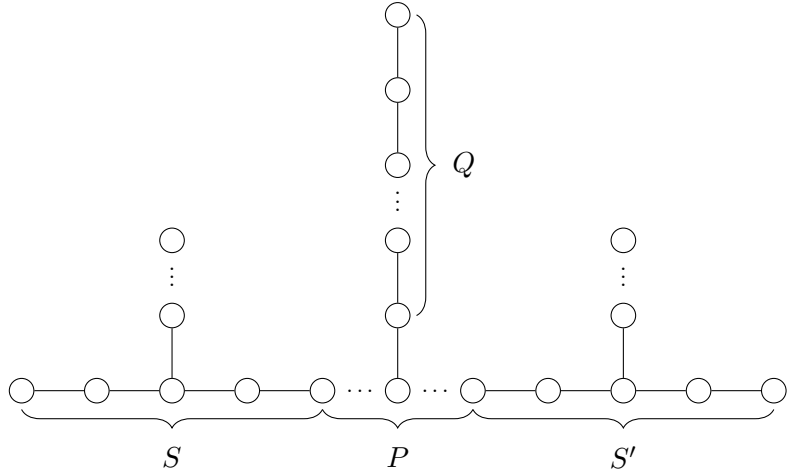

\begin{lemma}
    \label{lemma:skewed-cantor-combs-high-bandwidth-close-to-backbone}
    Let $\hat{S}_{b,k}$ be a skewed $b$-Cantor comb of depth $k$ and $\alpha$
    an optimal bandwidth ordering of $\hat{S}_{b,k}$. Furthermore, let $x$ and
    $y$ be the end vertices of $\hat{S}_{b,k}$ and $B$ the path from $x$ to
    $y$. Then there exists an edge $uv$ of $\hat{S}_{b,k}$ such that
    $I(u,v) \cap I(B)$ is non-empty and $|\alpha(u) - \alpha(v)| =
    \bw(\hat{S}_{b,k})$.
\end{lemma}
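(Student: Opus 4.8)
The plan is to argue by induction on $k$. For the base case $k=1$ the comb $\hat S_{b,1}$ is a single edge, which is exactly its own backbone $B$; any optimal layout places its two endpoints at consecutive positions, so the unique edge $uv$ has stretch $1=\bw(\hat S_{b,1})$ and $I(u,v)=I(B)$, which trivially intersects $I(B)$. So assume $k>1$, write $\hat S=\hat S_{b,k}$, fix an optimal layout $\alpha$ with $b^\star=\bw(\hat S)=\bw(\hat S,\alpha)$, and set $[l,r]=I(B)$. Suppose towards a contradiction that \emph{every} edge $e$ of stretch exactly $b^\star$ has $I(e)\cap I(B)=\emptyset$; equivalently, both endpoints of every such edge lie strictly to one side of $[l,r]$.

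The first step is a structural observation: in $\hat S_{b,k}$ every stray (the stray $Q$, and recursively every stray inside $S$ and $S'$) is attached to a vertex of $B$, since in the construction the connecting paths and their internal attachment points all lie on the backbone. Hence, given any stray $W=w_1w_2\dots$ attached at $w_0\in B$, walking outward from $w_0$ the position sequence $\alpha(w_0),\alpha(w_1),\dots$ starts in $[l,r]$, so the first edge of $W$ leaving $[l,r]$ has an endpoint in $[l,r]$, meets $I(B)$, and therefore — by our assumption — has stretch at most $b^\star-1$. Iterating this along each stray gives that no stray can drift far past $l$ or $r$ before it is forced to contain a max-stretch edge that still meets $I(B)$; in particular no edge of $B$ itself attains stretch $b^\star$, so $B$ as a path occupies an interval of length less than $(|V(B)|-1)b^\star$ is not directly what we need, but rather that the $b^\star$-stretch edges are confined to deep parts of strays.

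The core of the argument is then a packing/counting argument localized at the vertex $v\in P\subseteq B$ carrying the long stray $Q$ with $|V(Q)|\ge 2(b-1)d$ and $d=\max_{w\in V(B)}\dist(v,w)$. Since every vertex of $B$ is within distance $d$ of $v$, the whole interval $[l,r]\subseteq[\alpha(v)-db^\star,\alpha(v)+db^\star]$, and the first $d$ vertices of $Q$ lie in this same interval. Applying the inductive hypothesis to $S$ and to $S'$ — whose backbones $B_S,B_{S'}$ are subpaths of $B$, so that any max-stretch edge guaranteed near $B_S$ or $B_{S'}$ is automatically near $B$ — reduces us to the case where the layouts induced on $S$ and $S'$ have small bandwidth. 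In that regime the interval $[l,r]$ is simultaneously forced to contain all of $B$, the near part of $Q$, and the forced near-backbone portions of $S$ and $S'$, and a counting of positions contradicts $\alpha$ being a valid $b^\star$-ordering unless some $b^\star$-stretch edge meets $I(B)$, as desired.

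The main obstacle I expect is the mismatch between the inductive hypothesis, which concerns \emph{optimal} layouts of the sub-combs, and the restriction of $\alpha$ to $V(S)$ (resp. $V(S')$), which after compression need not be optimal for $S$ (resp. $S'$). Resolving this — either by showing the relevant near-$B_S$ stretch already appears in $\alpha$ irrespective of optimality, or by comparing $\alpha$ with a genuinely optimal layout of the sub-comb and transferring the ``near the backbone'' conclusion — together with pinning down the exact inequality at $v$ that makes the constant $2(b-1)$ in $|V(Q)|$ sufficient, is where the real work lies; the remaining steps are routine interval bookkeeping.
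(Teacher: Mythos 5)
There is a genuine gap: the step you label ``a counting of positions contradicts $\alpha$ being a valid $b^\star$-ordering unless some $b^\star$-stretch edge meets $I(B)$'' is exactly the statement to be proven, and it is never carried out. Moreover, the scaffolding you set up around it does not deliver it. The inductive hypothesis applied to $S$ and $S'$ only guarantees, for an \emph{optimal} ordering of the sub-comb, an edge of stretch $\bw(S)$ (resp.\ $\bw(S')$) meeting $I(B_S)$ -- not an edge of stretch $\bw(\hat{S}_{b,k})$ -- so even after resolving the optimality mismatch you flag (the compressed restriction of $\alpha$ to $S$ need not be optimal), you would not obtain a maximum-stretch edge of $\hat{S}_{b,k}$ near $B$. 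The facts you invoke about the central stray $Q$, the distance $d$, and the constant $2(b-1)$ are also not relevant here: the lemma is a statement about an arbitrary caterpillar with backbone $B$ and needs neither induction on $k$ nor the Cantor-comb structure (those ingredients belong to the \emph{next} lemma, the bandwidth lower bound $\bw(S_{b,k})\geq k$, whose proof your sketch partly reproduces).

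Your first structural observation is actually the right starting point, and the paper finishes from essentially that point with a direct argument. Let $C_B$ be the connected component of $\hat{S}_{b,k}[\alpha^{-1}(I(B))]$ containing $B$; the components hanging off $C_B$ are tails of strays, each with exactly one neighbour in $C_B$. Let $L$ (resp.\ $R$) be the neighbours of $C_B$ mapped left of $\min I(B)$ (resp.\ right of $\max I(B)$). The $|L|$ edges matching $L$ into $C_B$ have one endpoint inside $I(B)$ and $|L|$ distinct endpoints strictly to its left, so one of them has stretch at least $|L|$; symmetrically for $R$. Thus some edge meeting $I(B)$ has stretch at least $\max(|L|,|R|)$. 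If no edge meeting $I(B)$ attained $\bw(\hat{S}_{b,k})$, then all edges incident to or inside $C_B$ (in particular those crossing edges, so also $\max(|L|,|R|)$) would have stretch at most $\bw(\hat{S}_{b,k})-1$; re-embedding each hanging tail outward from its attachment at spacing $|L|$ (resp.\ $|R|$) and compressing would then give a layout of bandwidth strictly below $\bw(\hat{S}_{b,k})$, contradicting optimality of $\alpha$. This is the missing counting/exchange argument; without it (or something equivalent) your proposal does not establish the lemma.
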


\begin{proof}
The graph $\hat{S}_{b,k}$ is a caterpillar with backbone $B$. Let $C_B$ be the
connected component of $\hat{S}_{b,k}[\alpha^{-1}(I(B))]$ that contains $B$.
Observe that $\hat{S}_{b,k} \setminus  C_B$ is a collection of paths, with each
path being a subpath of a stray and having exactly one neighbor in $C_B$.

Let $L$ contain every vertex $u \in N(C_B)$ such that $\alpha(u) < \min[I(B)]$
and $R$ contain every vertex $u \in N(C_B)$ such that $\alpha(u) > \max[I(B)]$.
By definition we have that $L \cup (N(L) \cap C_B)$ induces a matching of size
$L$, such that each matching edge has one endpoint $u$ with $\alpha(u) <
\min[I(B)]$ and the other endpoint $v$ with $\alpha(v) \in \alpha(C_B)$. It follows
that for one of the matching edges $|\alpha(u)-\alpha(v)| \geq |L|$. Thus there
exists an edge $uv$ of $\hat{S}_{b,k}$ such that $I(u,v) \cap I(B)$ is
non-empty and $|\alpha(u) - \alpha(v)| \geq |L|$. An identical argument yields
that there exists an edge $uv$ of $\hat{S}_{b,k}$ such that $I(u,v) \cap I(B)$
is non-empty and $|\alpha(u) - \alpha(v)| \geq |R|$. Thus there exists an edge
$u'y'$ such that $I(u',v') \cap I(B)$ is non-empty and $|\alpha(u') -
\alpha(v')| = \max(|L|,|R|)$.

We now prove that without loss of generality, we can assume that every edge
$uv$ such that neither $u$ nor $v$ are in $C_B$ satisfies $|\alpha(u) -
\alpha(v)| \leq \max(|L|, |R|)$. Let $C_L$ be the set of vertices connected to
$L$ in $G - C_B$ and $C_R$ the set of vertices connected to $R$ in $G - C_B$.
Observe that $C_B, C_L$ and $C_R$ form a partition of $V(\hat{S}_{b,k})$. For
every $v \in C_B \cup L \cup R$ let $\beta(v) = \alpha(v)$. Let $v$ be a vertex
of $C_L \setminus L$ and $u$ the unique vertex of $L$ such that $u$ and $v$ are
connected in $G - C_B$. We then let $\beta(v) = \beta(u) - |L| \cdot
\dist(u,v)$. Handle the vertices of $C_R \setminus R$ symmetrically and let
$\beta'$ be the compressed $\beta$. One can observe that $\beta'$ is a linear
ordering of $\hat{S}_{b,k}$ and that $\bw(\hat{S}_{b,k}, \beta') \leq
\bw(\hat{S}_{b,k}, \alpha) = \bw(\hat{S}_{b,k})$. Clearly, for every edge $uv$
such that neither $u$ nor $v$ are in $C_B$ satisfies $|\alpha(u) - \alpha(v)|
\leq \max(|L|, |R|)$.

Let $uv$ be an edge of $\hat{S}_{b,k}$ such that  $|\alpha(u) - \alpha(v)| =
\bw(\hat{S}_{b,k})$. If one endpoint of $uv$ is mapped to $I(B)$ we are done,
as $uv$ satisfies the conditions of the lemma. On the other hand, if both
endpoints of $uv$ are outside of $I(B)$ then $\bw(\hat{S}_{b,k}) = |\alpha(u) -
\alpha(v)| = \max(|L|,|R|)$. In this case the edge $u'v'$ satisfies the
conditions of the lemma, completing the proof.
\end{proof}

\begin{lemma}
    \label{lemma:skewed-cantor-combs-high-bandwidth}
    For $b \geq k \geq 1$, the bandwidth of any $S_{b,k}$ is at least $k$.
\end{lemma}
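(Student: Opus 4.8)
The plan is to prove $\bw(S_{b,k}) \geq k$ by induction on $k$. The base case $k=1$ is immediate: $S_{b,1}$ is a single edge, so every layout has bandwidth $1 \geq 1$. For the inductive step fix $b \geq k \geq 2$ and write $S_{b,k}$ in terms of its constituents $S, S'$ (skewed $b$-Cantor combs of depth $k-1$, with backbones $B_S, B_{S'}$), the connecting path $P$, the stray $Q$ attached at an internal vertex $v$ of $P$, the backbone $B$ from $x$ to $y'$ (which passes through $v$), and $d = \max_{w \in B} \dist(v,w)$, so that $|V(Q)| \geq 2(b-1)d$. Suppose for contradiction that $\bw(S_{b,k}) \leq k-1$ and let $\alpha$ be an optimal layout of $S_{b,k}$, so $\bw(S_{b,k},\alpha) \leq k-1$. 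Since $S$ and $S'$ are induced subgraphs we have $\bw(S,\alpha|_S), \bw(S',\alpha|_{S'}) \leq k-1$, while the inductive hypothesis gives $\bw(S), \bw(S') \geq k-1$; hence $\alpha|_S$ and $\alpha|_{S'}$ are \emph{optimal} layouts of $S$ and $S'$ of bandwidth exactly $k-1$, and also $\bw(S_{b,k},\alpha)=k-1$.

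Next I would confine the backbone. Every vertex of $B$ is at distance at most $d$ from $v$, and each edge changes the $\alpha$-value by at most $k-1$, so $I(B) \subseteq [\alpha(v)-(k-1)d,\ \alpha(v)+(k-1)d]$, an interval with only $2(k-1)d+1$ positions. If all of $V(Q)$ were placed inside $I(B)$, then $I(B)$ would have to accommodate the disjoint sets $V(Q)$ and $V(B)$, i.e.\ at least $|V(Q)|+|V(B)| \geq 2(b-1)d + 2 > 2(k-1)d+1$ vertices (using $b \geq k$ and $|V(B)|\geq 2$), which is impossible; so some vertex of $Q$ lies strictly outside $[\min\alpha(B),\max\alpha(B)]$, and after possibly reversing $\alpha$ (all hypotheses are symmetric under reversal) we may assume it lies to the right of $\max\alpha(B)$. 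Now I would bring in Lemma~\ref{lemma:skewed-cantor-combs-high-bandwidth-close-to-backbone}: applied to the optimal layouts $\alpha|_S$ and $\alpha|_{S'}$ it produces an edge of $S$ (resp.\ of $S'$) of $\alpha$-difference exactly $k-1$ whose inclusion interval meets $I(B_S) \subseteq I(B)$ (resp.\ $I(B_{S'}) \subseteq I(B)$), so these two ``tight'' edges are also pinned inside the confined window. The contradiction should then come by counting: the window $[\alpha(v)-(k-1)d,\alpha(v)+(k-1)d]$ is forced to contain all of $V(B)$, the initial segment $q_1,\dots,q_d$ of $Q$ (each $q_i$ being within $i(k-1)\leq (k-1)d$ of $\alpha(v)$), and the endpoints of the two tight edges, while $Q$ additionally has to escape the window to the right of $\max\alpha(B)$; pushing these constraints should force either a new edge of $Q$ near $\max\alpha(B)$ or one of the tight edges to have $\alpha$-difference at least $k$, contradicting $\bw(S_{b,k},\alpha)\leq k-1$ and completing the induction.

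The main obstacle is precisely this last packing step. The delicate points are: (i) deciding which vertices are genuinely \emph{forced} into the confined window, since the strays interior to $S$ and $S'$ are allowed to protrude past $I(B)$ and so cannot naively be counted; and (ii) exploiting the slack in $|V(Q)|\geq 2(b-1)d$, where the factor $b-1\geq k-1$ must interact with the length-$(k-1)$ edges coming from the inductive step, so that the bound obtained is $\bw \geq k$ rather than merely $\bw\geq 2$ or $\bw\geq 3$ (which already follow from the presence of a degree-$3$ vertex). Getting the arithmetic of the window's position budget to close against the forced population — and handling the orientation/reversal choices consistently for both $Q$ and the two sub-combs — is where the real work lies.
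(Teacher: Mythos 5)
Your skeleton coincides with the paper's argument (which is phrased as a minimal counterexample rather than induction, but is the same thing): optimality of the compressed restrictions of $\alpha$ to $S$ and $S'$, the two tight edges supplied by Lemma~\ref{lemma:skewed-cantor-combs-high-bandwidth-close-to-backbone}, and the observation that $|V(Q)|\geq 2(b-1)d$ together with the stretch bound on the backbone forces some vertex of $Q$ outside $I(B)$. However, the decisive step is missing: you stop at ``pushing these constraints should force \dots'' and you yourself flag the packing arithmetic as the remaining ``real work''. That gap is genuine, and the capacity-counting route you propose is exactly the one that does not close, for the reason you name in (i): nothing confines the strays of $S$ and $S'$ (nor the bulk of $Q$) to the window $[\alpha(v)-(k-1)d,\ \alpha(v)+(k-1)d]$, so the population provably forced into the window never exceeds its capacity.

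The paper's finish is not a count but a crossing argument, resting on one property your sketch never extracts: since the compressed restriction of $\alpha$ to $S$ is an \emph{optimal} ordering of $S$ and the tight edge $uv$ already has $|\alpha(u)-\alpha(v)|=k-1$, every position of $I(u,v)$ must be occupied by a vertex of $S$ (a non-$S$ vertex strictly between $u$ and $v$ would make the compressed difference drop below $k-1$). Hence $I(u,v)$ is a barrier: any path avoiding $S$ whose endpoints lie on opposite sides of $I(u,v)$ would have to place a vertex inside it, because its steps have length at most $k-1$ while the interval has width exactly $k-1$; the same holds for the tight edge $u'v'$ of $S'$. The contradiction is then a short case analysis on where the attachment vertex $q$ of $Q$ sits relative to these two protected intervals: if $q$ lies between them, the path inside $Q$ from $q$ to the escaped vertex of $Q$ crosses $I(u,v)$; if $q$ lies beyond $I(u',v')$, the connecting path (which avoids $S'$) crosses $I(u',v')$; each crossing is impossible. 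To repair your proof you need this protected-interval property of the tight edges — it is what upgrades them from objects merely pinned inside the window to obstacles that paths disjoint from $S$, respectively $S'$, cannot traverse, and it replaces the packing step that your plan leaves open.
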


\begin{proof}
    The proof of this lemma is inspired by the one for Cantor combs
    given by Chung and Seymour~(\cite{ChungS89}, Lemma 2.1).

    Assume for a contradiction that there is a $\hat{S}_{b,k}$ such that
    $\bw(\hat{S}_{b,k}) < k$. Furthermore, assume without loss of generality
    that $k$ is the smallest such value with respect to $b$. Observe that $k >
    1$. Let $\alpha$ be an ordering of $\hat{S}_{b,k}$ of bandwidth at most
    $k-1$. Let $S, S', P$ and $Q$ be as in the definition of skewed Cantor
    combs. By assumption the bandwidth of both $S$ and $S'$ are $k-1$. Let $x$
    and $y$ be the end vertices of $S$ and $x'$ and $y'$ the end vertices of
    $S'$. Furthermore, let $B$ be the path from $x$ to $y$ and $B'$ the path
    from $x'$ to $y'$. Let $Z$ be the path between the end vertices of
    $\hat{S}_{b,k}$.
    
    Let $\beta$ be the compressed version of $\alpha$ when restricted to $S$.
    Since $\alpha$ is of bandwidth $k-1$, it follows that $\beta$ is of
    bandwidth at most $k-1$ and hence by our assumption $\beta$ is an optimal
    bandwidth ordering of $S$. By
    Lemma~\ref{lemma:skewed-cantor-combs-high-bandwidth-close-to-backbone} we
    know that there exists an edge $uv$ in $S$ such that $I_{\beta}(u,v) \cap
    I_{\beta}(B)$ is non-empty and $|\beta(u)-\beta(v)| = k-1$. It follows
    that $I_{\alpha}(u,v) \cap I_{\alpha}(B)$ is non-empty and
    $|\alpha(u)-\alpha(v)| = k-1$. In the same manner we obtain an edge $u'v'$
    from $S'$. Assume without loss of generality that $\alpha(u) < \alpha(v)$
    and that $\alpha(u') < \alpha(v')$.
    
    Observe that $\alpha^{-1}(I_{\alpha}(u,v)) \subseteq S$ and that
    $\alpha^{-1}(I_{\alpha}(u',v')) \subseteq S'$. It follows directly that the
    inclusion intervals has an empty intersection with $P$. Let $q$ be the
    vertex in $N(Q)$. We can assume without loss of generality that $\alpha(v)
    < \alpha(q)$. There are two cases to consider, either $\alpha(q) <
    \alpha(u')$ or $\alpha(v') < \alpha(q)$.
    
    First we consider the case when $\alpha(q) < \alpha(u')$. Observe that
    $|I(Z)| \leq (k-1)|E(Z)|+1 \leq |V(Q)| + 1$ and $|V(Z)| \geq 5$ since
    $k>1$. It follows from $\alpha(Z) \subseteq I(Z)$ that there is a vertex
    $q' \in Q$ such that $\alpha(q') \notin I(Z)$. Assume without loss of generality
    that $\alpha(q') < \min I(Z)$. It follows that $\alpha(q') < \alpha(u) <
    \alpha(v) < \alpha(q)$. Since there is a path from $q'$ to $q$ disjoint
    from $S$ and $|\alpha(u) - \alpha(v)| = k-1$ it follows that $I(u,v)$ must
    contain a vertex of $Q$, which is a contradiction.

    It remains to consider the case when $\alpha(v') < \alpha(q)$. Observe
    that by assumption $I(u,v)$ and $I(u',v')$ are disjoint. And hence, again we
    consider two cases. First, let $\alpha(v) < \alpha(u')$. We are then in
    the situation that $\alpha(v) < \alpha(u') < \alpha(v') < \alpha(q)$ and
    since there is a path from $v$ to $q$ avoiding $S'$ it follows that this
    path has a non-empty intersection with $I(u',v')$, which is a
    contradiction. The case $\alpha(v') < \alpha(u)$ follows by a symmetric
    argument and hence the proof is complete.
\end{proof}

\subsubsection*{Directions}

\newcommand{\Rstretch}{12b^2}
\newcommand{\Sstretch}{48b^3}
\newcommand{\pos}{\mbox{pos}}
\newcommand{\depth}{\mbox{\em depth}}

Given a caterpillar $T$ and a backbone $B = \left\{ b_1, \dots, b_k \right\}$
we define $pos(P)$ for every stray $P$ in $T$ with respect to $B$, as the
integer $i$ such that $P$ is attached to the vertex $b_i$. Furthermore, we let
$|P|$ denote $|V(P)|$.

\begin{definition}
    Let $T$ be a caterpillar, $B = \left\{ b_1, \dots, b_k \right\}$ a backbone
    of $T$ and $b$ a positive integer. Furthermore, let $\depth$ be a function
    from the strays of $T$ with respect to $B$ to $\mathbb{N}$. For every
    stray $Q$ we let
    \begin{itemize}
        \item $X_Q = \left\{ P \mid \pos(P) +
            \frac{|P|}{2b} < \pos(Q) \mbox{ and } \pos(Q) -
            \frac{|Q|}{2b} \leq \pos(P) - \frac{|P|}{2b}
        \right\}$ and
    \item $Y_Q = \left\{ P \mid \pos(Q) < \pos(P) -
            \frac{|P|}{2b} \mbox{ and } \pos(P) +
            \frac{|P|}{2b} \leq \pos(Q) + \frac{|Q|}{2b}
        \right\}$.
    \end{itemize}
   Let $x_Q = \max(\depth(X_Q))$ and $y_Q = \max(\depth(Y_Q))$. We say that $Q$ is
   \emph{pushed east} if $x_Q > y_Q$, \emph{pushed west} if $x_Q < y_Q$ and
   \emph{lifted} if $x_Q = y_Q$.
\end{definition}

We say that a skewed $b$-Cantor comb of depth $k$ is centered around the stray
$Q$, where $Q$ is as in the definition of $S_{b,k}$.  For a caterpillar $T$ we
say that a \emph{backbone $B$ is maximized} if for every other backbone $B'$
it holds that $|B'| \leq |B|$.

We will now describe an algorithm $\cantoralg$ that given a caterpillar $T$, a
maximized backbone $B$ of $T$ and a positive integer $b$ searches for skewed
Cantor combs in $T$. Let $\depth$ be a function from the strays of $T$ with
respect to $B$ into $\mathbb{N}$. As an invariant, $\depth$ promises there to
be a skewed $(b+1)$-Cantor comb centered around $Q$ of depth $\depth(Q)$. The
exception is if $\depth(Q)$ is $0$, then the stray is so short that we ignore
it and we hence make no promises with respect to skewed $(b+1)$-Cantor combs.
Initially, for every stray $Q$ let $\depth(Q)$ be $2$ if $|Q| \geq 4b$ and $0$
otherwise. Observe that the invariant is true due to $B$ being a maximized
backbone.

Now we search for a stray $Q$ that is lifted such that both $x_Q$ and $y_Q$
are at least $\depth(Q)$. It such a $Q$ is found, increase $\depth(Q)$ by one.
Observe that there is in fact a skewed $(b+1)$-Cantor comb centered around $Q$
of this depth ($\depth(Q)$ after the incrementing). Run this procedure until
such a stray $Q$ can not be found or until $\depth(Q)$ reaches $b+1$ for some
stray. Observe that we can for every stray evaluate $x_Q$ and $y_Q$ in
$O(n^2)$. And since this is done at most $O(bn)$ times, the running time of
$\cantoralg$ is bounded by $O(bn^3)$.

The reader should note that $\cantoralg$ does not detect all skewed $b$-Cantor
combs. In fact, it searches only for a stricter version and might overlook the
deep skewed $(b+1)$-Cantor combs in a caterpillar. But, as it turns out, these
stricter versions are sufficient for our purposes. From now on, we will assume
that the function applied when evaluation whether a stray is pushed west or
east, is the depth function calculated by running $\cantoralg$.

\begin{definition}
    For a caterpillar $T$, a maximized backbone $B = \left\{ b_1, \dots, b_l
    \right\}$ of $T$ and a positive integer $b$ we define the \emph{directional
    stray graph} as the following interval graph: for every stray $P$ add the
    interval 

    \begin{itemize}
        \item $[\pos(P)\Sstretch - \Rstretch|P|, \pos(P)\Sstretch]$ if $P$ is
            pushed west and
        \item $[\pos(P)\Sstretch, \pos(P)\Sstretch + \Rstretch |P|]$ otherwise. 
    \end{itemize}
    We say that an interval originating from a stray pushed west is
    \emph{west oriented} and visa versa.
\end{definition}

\begin{lemma}
    \label{lemma:few-long-but-close-interval}
    Let $T$ be a caterpillar, $b$ a positive integer, $G_I$ some directional stray
    graph of $T$ and $x$ and $y$ two natural numbers such that $x < y$. Then
    either there are at most $2b$ intervals of length at least $y-x$ in $G_I$
    starting within $[x,y]$, or $\bw(T) > b$.
\end{lemma}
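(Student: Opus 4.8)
The plan is to prove the contrapositive: assuming there are at least $2b+1$ intervals of $G_I$ of length at least $y-x$ whose left endpoint lies in $[x,y]$, I would derive $\bw(T) > b$. The only tools I expect to need are the folklore bound $D(T)\le\bw(T)$ and, for one stubborn case, Lemma~\ref{lemma:skewed-cantor-combs-high-bandwidth}.

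First I would unpack what the hypothesis says about the underlying strays. An interval of $G_I$ comes from a stray $P$, has length $\Rstretch|P|$, and has left endpoint $\pos(P)\Sstretch$ if $P$ is pushed east or lifted and $\pos(P)\Sstretch-\Rstretch|P|$ if $P$ is pushed west. So each of our $\ge 2b+1$ strays $P$ is long, $\Rstretch|P|\ge y-x$, and, since the left endpoint is $\le y$ and the length is $\ge y-x$, the interval contains the single point $y$; rewriting and using $\Sstretch/\Rstretch=4b$ this reads $|\pos(P)-y/\Sstretch|\le |P|/(4b)$. For pushed-east/lifted strays one gets the sharper $\pos(P)\in[x/\Sstretch,y/\Sstretch]$, so all of these attach within a backbone window of length $\Delta:=(y-x)/\Sstretch$; for a pushed-west stray the bound is the shifted $\pos(P)\in[x/\Sstretch+|P|/(4b),\,y/\Sstretch+|P|/(4b)]$, so a \emph{long} pushed-west stray attaches proportionally far east of the common location $b_{m_0}$, $m_0:=\lceil y/\Sstretch\rceil$. (Incidentally, all these intervals pairwise intersect at $y$, so they form a clique in $G_I$; I do not think the clique structure itself is what is used here, but it explains why the lemma will feed into bounds on $\chi(G_I)$ later.)

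The main case is when the strays are genuinely clustered, i.e. all attach within a backbone segment of $w=O(\Delta)$ vertices — in particular this covers the case where all $\ge 2b+1$ of them are pushed east. Here I would take the sub-caterpillar $G'$ of $T$ consisting of that backbone segment together with, from each of the $\ge 2b+1$ strays, a common prefix of about $(y-x)/\Rstretch=4b\Delta$ vertices (each stray is long enough for this). Then $|V(G')|\ge (2b+1)\cdot 4b\Delta$ while $\diam(G')\le 2\cdot 4b\Delta+\Delta=(8b+1)\Delta$, so $D(G')\ge\frac{(2b+1)4b\Delta}{(8b+1)\Delta}=\frac{8b^2+4b}{8b+1}>b$, hence $\bw(T)\ge D(T)>b$. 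More generally the count goes through as long as the common prefix length exceeds $(b-1)(w-1)$, which holds whenever $w\le\Delta+1$. This is exactly why the theorem states the threshold $2b$ rather than $2b-1$ (with $2b$ clustered strays the same computation gives $\frac{8b^2}{8b+1}<b$), and the constants $\Sstretch=48b^3$, $\Rstretch=12b^2$ are tuned precisely so the ratio $\Sstretch/\Rstretch=4b$ makes the inequality clear $b$ with $2b+1$ strays.

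The hard part — and the step I expect to be the real obstacle — is the remaining case, where some of the $\ge 2b+1$ strays are long pushed-west strays attached far from $b_{m_0}$: such a stray alone inflates $\diam(G')$ so much that a flat local-density count no longer reaches ratio $b$. Here I would use the \emph{meaning} of ``pushed west'': $P$ is oriented west only because $y_P>x_P\ge 0$, so there is a stray $P'\in Y_P$ with $\depth(P')=y_P\ge 2$ nested in the eastern reach of $P$, and by the invariant of $\cantoralg$ the tree already contains a skewed $(b{+}1)$-Cantor comb of depth $y_P$ centred at $P'$. I would first reduce to the termination state of $\cantoralg$: if any $\depth$ value ever reached $b+1$ then $T$ contains a skewed $(b{+}1)$-Cantor comb of depth $b+1$ and $\bw(T)\ge b+1>b$ by Lemma~\ref{lemma:skewed-cantor-combs-high-bandwidth}, so we may assume all depths are $\le b$ and no stray is liftable. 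Then the plan is to show that a block of $2b+1$ long intervals straddling $b_{m_0}$ in the ``spread'' pattern lets one splice the extremal Cantor combs hanging off the pushed-west (and symmetrically pushed-east) witnesses, one further stray, and a short backbone interval into a skewed $(b{+}1)$-Cantor comb of depth $b+1$ — contradicting that no depth reached $b+1$. The delicate part is the splicing: matching orientations to successive comb levels while checking the $2bd$ stray-length requirement at the new top level, and it is here that the factor $2$ in the bound $2b$ enters (the two orientations) together with the one ``free'' comb level that Lemma~\ref{lemma:skewed-cantor-combs-high-bandwidth} grants.
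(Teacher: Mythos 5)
Your ``clustered'' case is, in essence, the paper's entire proof. The paper assumes $\bw(T)\le b$, takes the minimal connected subgraph spanning the $2b+1$ strays whose intervals start in $[x,y]$, observes that each such stray has at least $q=(y'-x')/12b^2$ vertices while their attachment points span a backbone segment of length only $r=(y'-x')/48b^3$ (so $q=4br$), and then runs exactly your density count, $D(T)\ge\frac{(2b+1)q+r-1}{2q+r}>b$, contradicting $D(T)\le\bw(T)\le b$. There is no case distinction on orientation and no appeal to skewed Cantor combs anywhere in this lemma; the orientation only determines where an interval starts.

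The gap is your second case, which is the part you yourself flag as the real work and then do not carry out. You are right that under the left-endpoint reading a long west-oriented interval starting in $[x,y]$ corresponds to a stray attached roughly $|P|/(4b)$ backbone vertices east of the window, so a flat density count over the union can fail when the west strays have very different lengths; but what you offer for that situation is a plan, not a proof. The ``splicing'' of the pushed-west witnesses into a skewed $(b+1)$-Cantor comb of depth $b+1$ is never constructed, and it is not plausible as sketched: a single block of $2b+1$ intervals in one window gives only a bounded amount of nesting, whereas in the paper the production of a depth-$(b+1)$ comb is the content of Lemma~\ref{lemma:chromatic-number-directional-stray-graph}, which needs $6b^2$ intervals through a common point and gains one level of depth per batch of at most $6b$ discarded intervals; nothing available at this stage hands you a ``free'' level or an induction up to depth $b+1$. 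So as submitted the pushed-west case is unproven. Note also that the paper sidesteps this issue by treating all attachment points of the chosen strays as lying in a window of length $(y'-x')/48b^3$ -- which is immediate for east-oriented and lifted intervals, and these are the only intervals the lemma is later applied to (the set $I$ in Lemma~\ref{lemma:chromatic-number-directional-stray-graph} is east-oriented); if you insist on covering west-oriented intervals under the left-endpoint reading, you need an actual argument there, not the sketch you give.
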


\begin{proof}

    Assume otherwise for a contradiction and let $\bw(T) \leq b$ and $K$ be a
    set of $2b+1$ intervals of length at least $y-x$ starting within $[x,y]$.
    Let $x'$ be the smallest number such that $x \leq x'$ and $x'$ is
    divisible by $\Sstretch$ and $y'$ the largest number such that $y' \leq y$
    and $y'$ is divisible by $\Sstretch$. Observe that all intervals in $K$
    has their starting point within $[x', y']$ by construction. Consider the
    minimum connected, induced subgraph $H$ of $T$ containing the vertices of
    the strays corresponding to the intervals in $K$. We will consider $H$
    with respect to the backbone such that the strays of $H$ are exactly the
    ones corresponding to intervals in $K$. Let $z = y'-x'$ and observe that
    every stray in $H$ contains at least $q = z/\Rstretch$ vertices and that
    the backbone of $H$ is of length $r = z/\Sstretch$.  It follows that

    \begin{align*}
        D(G) &\geq \frac{|V(H)|-1}{\diam(H)} \\
        &\geq \frac{(2b+1)q + r -1}{2q + r} \\
        &> \frac{2bq+r}{2q+r} \\
        &\geq b \frac{2q+r/b}{2q/b+r/b} \\
        &\geq b
    \end{align*}
    
    \noindent
    which contradicts $D(G) \leq b$ and hence we know that there are at most
    $2b$ such intervals. Note that we used the fact that $q > 1$. This follows
    from the fact that $x' < y'$ due to the local density bound and hence $q
    \geq \Sstretch/\Rstretch \geq 4$.

\end{proof}

\begin{lemma}
    \label{lemma:chromatic-number-directional-stray-graph}
    Let $T$ be a caterpillar, $b$ a positive integer and $G_I$ some directional stray
    graph of $T$. Then either $\chi(G_I) < \Rstretch$ or $\bw(T) > b$.
\end{lemma}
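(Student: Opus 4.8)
The plan rests on the standard fact that interval graphs are perfect, so $\chi(G_I)=\omega(G_I)$, together with the Helly property of intervals on a line: a clique of $G_I$ is a family of intervals all containing a common integer. It therefore suffices to prove that if $\bw(T)\le b$ — hence $D(T)\le b$ by the folklore proposition — then no integer $z$ is covered by $\Rstretch=12b^2$ of the intervals of $G_I$. So suppose for contradiction that some integer $z$ lies in at least $12b^2$ intervals. Reversing the backbone of the caterpillar is an isomorphism that swaps west‑ and east‑oriented intervals (it swaps $X_Q$ with $Y_Q$, hence $x_Q$ with $y_Q$), so we may assume at least $6b^2$ of the intervals covering $z$ are east‑oriented; each such interval comes from a stray $P$ with $\pos(P)\Sstretch\le z\le \pos(P)\Sstretch+\Rstretch|P|$, i.e. $\Rstretch|P|\ge z-\pos(P)\Sstretch\ge 0$.

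Next I would sort these $\ge 6b^2$ strays into \emph{shells}: for $j\ge 0$, shell $j$ consists of the strays with $z-\pos(P)\Sstretch\in[2^{j}\Sstretch,\,2^{j+1}\Sstretch)$, together with one extra shell for the strays with $z-\pos(P)\Sstretch<\Sstretch$. In shell $j$ all start points $\pos(P)\Sstretch$ lie in a window of length $2^{j}\Sstretch$, while the corresponding interval has length $\Rstretch|P|\ge z-\pos(P)\Sstretch\ge 2^{j}\Sstretch$; hence Lemma~\ref{lemma:few-long-but-close-interval} bounds shell $j$ by $2b$. For the extra shell, all start points equal the unique multiple of $\Sstretch$ lying in $(z-\Sstretch,z]$, and a unit window in Lemma~\ref{lemma:few-long-but-close-interval} again caps it by $2b$. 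Consequently at least $3b$ shells are non‑empty, and picking one stray from each yields east‑oriented strays $Q_1,\dots,Q_{3b}$ whose attachment points move monotonically away from $z/\Sstretch$ and whose sizes grow geometrically.

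The last and main step is to turn this nested family into a contradiction, and here I would bring in the algorithm $\cantoralg$ that computed the depth function defining $G_I$. Since $\cantoralg$ terminated, it never raised any depth to $b+1$ — otherwise $T$ would contain a skewed $(b+1)$‑Cantor comb of depth $b+1$, which has bandwidth at least $b+1$ by Lemma~\ref{lemma:skewed-cantor-combs-high-bandwidth}, contradicting $\bw(T)\le b$ — so the final $\depth$ is bounded by $b$ and no lifted stray $Q$ satisfies $x_Q=y_Q\ge\depth(Q)$. Now ``$Q_i$ is east‑oriented'' means $x_{Q_i}\ge y_{Q_i}$, i.e. within $Q_i$'s span there is, on the side away from $z$, a stray at least as deep as anything nested on the near side; because the $Q_i$ grow geometrically relative to their positional spread, the span of each $Q_i$ swallows all earlier $Q_{i'}$ (which then land in $Y_{Q_i}$). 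Chaining these inequalities along $Q_1,\dots,Q_{3b}$ should force $\depth$ to climb along the chain past $b$ — equivalently, it should exhibit a skewed $(b+1)$‑Cantor comb of depth exceeding $b$ inside $T$ — contradicting either the termination of $\cantoralg$ or Lemma~\ref{lemma:skewed-cantor-combs-high-bandwidth}. The symmetric argument on the west‑oriented side finishes the proof.

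The hard part is exactly this final conversion: reconciling the coarse data extracted from the clique (a few nested, geometrically growing, uniformly oriented strays per scale) with the precise recursive definition of skewed Cantor combs — in particular verifying the quantitative ``$\ge 2(b-1)d$ vertices'' requirement on each central stray and the disjointness of the two sub‑combs — so that the depth genuinely increases past $b$ rather than merely past $\log b$, and so that the termination invariant of $\cantoralg$ is truly violated. Steps one through three are routine once Lemma~\ref{lemma:few-long-but-close-interval} is in hand; it is the interface with Cantor combs and with the invariants maintained by $\cantoralg$ that carries the weight.
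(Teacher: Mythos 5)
There is a genuine gap, and it sits exactly where you place it yourself. Your first steps match the paper: by perfection of interval graphs and the Helly property, a chromatic number of at least $12b^2$ yields a point $z$ covered by at least $12b^2$ intervals, and by symmetry one may assume at least $6b^2$ of them are east oriented; Lemma~\ref{lemma:few-long-but-close-interval} is then the right tool to force the corresponding strays to be spread out. But the heart of the lemma is the conversion of this spread-out, uniformly oriented family into skewed Cantor combs of depth $2,3,\dots,b+1$, and you do not carry it out: the phrases ``should force $\mathrm{depth}$ to climb along the chain past $b$'' and your own closing admission that the interface with the comb definition and with \texttt{FindSCC}'s invariant ``carries the weight'' are precisely the content of the proof, not a routine verification. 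Without it, nothing contradicts $\bw(T)\le b$.

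Moreover, the structure you extract is too weak to support that induction. Your dyadic shells classify strays only by the distance of the interval's \emph{start} point from $z$; the interval lengths are bounded from below within a shell but not from above, so neither ``sizes grow geometrically'' nor ``the span of each $Q_i$ swallows all earlier $Q_{i'}$'' follows — a stray from an inner shell can be far longer than one from an outer shell, its already-built comb can reach west past the next attachment vertex (destroying the disjointness of the two sub-combs), and the containment of the old comb inside the new stray's span can fail. The paper's proof gets exactly the two quantitative facts the induction needs — the positional gap to the previous stray exceeds the previous stray's span $|P^i|/(2b)$, and $|P^{i+1}|/(2b) > |c_{i+1}-c_i| + |P^i|/(2b)$ — by a different selection rule: repeatedly take the \emph{shortest} remaining interval and discard every interval starting within twice its length (at most $6b$ per round by Lemma~\ref{lemma:few-long-but-close-interval}, affordable since $6b^2-2b>6b(b-1)$). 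With these inequalities, east orientation of the newly chosen stray (via the invariant that $\mathrm{depth}(Q)$ certifies a comb of that depth centered at $Q$, and the fact that the previously chosen stray lies in its east set with depth at least $2$) supplies a mirror comb on the west side, and the stray itself is long enough to serve as the central stray $Q$ in the definition, so the constructed depth increases by exactly one per round up to $b+1$, where Lemma~\ref{lemma:skewed-cantor-combs-high-bandwidth} gives the contradiction; note the paper never needs the termination property of \texttt{FindSCC} that you invoke, only the stated invariant. As written, your proposal establishes the easy preprocessing but leaves the decisive step unproven, and the shell-based intermediate object would need to be replaced (or substantially strengthened with length control) before that step could be completed.
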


\begin{proof}

    Assume for a contradiction that $\chi(G_I) \geq 12b^2$ and that $\bw(T)
    \leq b$. Then there is a number $w$ such that at least $12b^2$ of the
    intervals of $G_I$ contains $w$. This follows from the well-known result
    that $\chi(G_I)$ equals the size of the maximum clique of $G_I$, since
    $G_I$ is an interval graph. Let $I$ be the set of all east oriented
    intervals containing $w$ and assume without loss of generality that $I$ is
    of size at least $6b^2$. Discard the elements of $I$ with the highest
    starting value and let $[x', y']$ be a discarded element. Observe that at
    most $2b$ elements were discarded due to the local density bound.  Hence
    we now have at least $6b^2-2b$ elements left.  We will start by giving a
    lower bound on the length of the intervals in $I$.  Consider an element
    $[x, y]$ of shortest length in $I$. By definition $x < x' \leq y$ and by
    construction $x'-x \geq \Sstretch$, hence $y-x \geq \Sstretch$ and it follows
    that all elements of $I$ are of length at least $\Sstretch$.
    
    Let $[x_2, y_2]$ be a shortest interval in $I$ and recall that the stray
    $P^2$ corresponding to the interval is attached to the backbone vertex
    $b_{c_2}$ for $c_2 = x_2/\Sstretch$. Furthermore, $|P^2| =
    (y_2 - x_2)/\Rstretch \geq \Sstretch / \Rstretch = 4b$. Since the
    backbone used when constructing $G_I$ is maximized it follows that the
    distance from $b_{c_2}$ to any endpoint of the backbone is at least $4$
    and hence there is an $S_{b+1,2}$ centered around $b_{c_2}$. 

    Discard all intervals with their starting point within $[x_2 - 2(y_2-x_2),
    y_2]$ in $I$. We know that at most $6b$ elements are discarded by
    Lemma~\ref{lemma:few-long-but-close-interval}.  Now let
    $[x_3, y_3]$ be a shortest interval in $I$ and recall that the stray
    $P^3$ corresponding to the interval is attached to the backbone vertex
    $b_{c_3}$ for $c_3 = x_3/(\Rstretch)$.  Observe that $|y_3 - x_3| > |x_3 -
    x_2|$ and that $|y_2 - x_2| < \frac{1}{2}|x_3-x_2|$ and hence
    
    \begin{gather*}
    |y_3-x_3| > |x_3-x_2| > \frac{1}{2}|x_3-x_2| + |y_2-x_2| \\
    \implies \\
    \frac{|y_3-x_3|}{2b(\Rstretch)} > \frac{|x_3-x_2|}{\Sstretch} +
    \frac{|y_2-x_2|}{2b(\Rstretch)} \\
    \implies \\
    \frac{|V(P^3)|}{2b} > |c_3 - c_2| + \frac{|V(P^2)|}{2b}.
    \end{gather*}

    Let $S$ be the $S_{b+1,2}$ centered around $b_{c_2}$ and recall that by
    definition the distance from $b_{c_2}$ to any backbone vertex of $S$ is
    bounded from above by $\frac{|V(P^2)|}{2b}$.  It follows that the distance
    from $b_{c_3}$ to any backbone vertex of $S$ is bounded by
    $\frac{|V(P^3)|}{2b}$. Since $[x_3, y_3]$ is east oriented  there is
    another $S_{b+1,i}$ centered around a stray $\bar{P}^2$ such that
    $\pos(\bar{P}^2) + \frac{|\bar{P}^2|}{2b} < c_3$ and $\pos(\bar{P}^2) -
    \frac{|\bar{P}^2|}{2b} \geq c_3 - \frac{|P^3|}{2b}$ for some $i \geq 2$.
    By definition, the $S_{b+1,i}$ contains an $S_{b+1,2}$ as a subgraph in
    such a way that there is an $S_{b+1,3}$ centered around $c_3$.  Discard all
    intervals with starting points within $[x_3 - 2(y_3-x_3), x_3]$ and repeat
    the argument to obtain a $S_{b+1,4}$. We keep repeating the argument until
    we obtain a $S_{b+1,b+1}$

    Notice that we can do this as we are discarding at most $6b$ vertices each
    time, repeating the procedure $b-1$ times and $I$ contains at least
    $6b^2-2b > 6b(b-1)$ intervals. This completes the proof, as we know from
    Lemma~\ref{lemma:skewed-cantor-combs-high-bandwidth} that $\bw(S_{b+1,b+1})
    \geq b+1$.
\end{proof}

\subsubsection*{Algorithm and Correctness}

\begin{algorithm}[ht!]
  \KwIn{A caterpillar $T$ and a positive integer $b$.}
  \KwOut{A $\catar$-bandwidth ordering of $T$ or conclusion that $\bw(T) > b$.}
\BlankLine

Let $B = \left\{ b_1, \dots, b_k \right\}$ be a maximized backbone of $T$.\\
Construct the directional stray graph $G_I$ of $T$ with respect to $B$.\\
Find a minimum coloring of $G_I$.\\
\If{$\chi(G_I) \geq \Rstretch$} {
    \KwRet $\false$.
}
Let $\alpha(b_i) = \Sstretch(n+i)$. \\
Let $\mathcal{P}$ be the collection of strays in $T$ with respect to $B$.\\
For every stray $P$ in $\mathcal{P}$ let $C(P)$ be the color of the interval representing the stray.\\
\For{every $P \in \mathcal{P}$}{
    Let $p_1, \dots, p_k$ be the vertices of $P$ such that $\dist(B, p_i) <
    \dist(B, p_{i+1})$ for every $i$.\\
    Let $\left\{ u \right\} = N(P)$.\\
    \If{$P$ is pushed west} {
        Let $\alpha(p_i) = \alpha(u) + C(P) - i\Rstretch$ for
        every $i$.
    }
    \Else {
        Let $\alpha(p_i) = \alpha(u) + C(P) + (i-1)\Rstretch$ for
        every $i$.
    }
}
\KwRet Compressed version of $\alpha$.

  \caption{$\catalg$}
  \label{alg:catalg}
\end{algorithm}

\begin{theorem}
    \label{theorem:caterpillar-bandwidth-fptapprox}
    There exists an algorithm that given a caterpillar $T$ and a positive integer $b$
    either returns a $\catar$-bandwidth ordering of $T$ or correctly
    concludes that $\bw(T) > b$ in time $O(bn^3)$.
\end{theorem}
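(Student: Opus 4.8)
The plan is to check that Algorithm~\ref{alg:catalg} meets its specification in three parts: the running time, the soundness of the branch that returns $\false$, and the bandwidth guarantee of the layout produced in the other branch. For the running time, a maximized backbone of a caterpillar is computable in $O(n)$ time, constructing the directional stray graph $G_I$ is dominated by the call to $\cantoralg$ and hence takes $O(bn^3)$ time, a minimum coloring of the interval graph $G_I$ (and its chromatic number) is computable in $O(n\log n)$ time by an endpoint sweep, and assigning $\alpha$ to the $O(n)$ vertices and compressing it is linear; so the total is $O(bn^3)$. The $\false$ branch is entered only when $\chi(G_I)\ge\Rstretch$, and then $\bw(T)>b$ by Lemma~\ref{lemma:chromatic-number-directional-stray-graph}, so that conclusion is correct.

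The substance is the branch $\chi(G_I)<\Rstretch$, where I must show the constructed $\alpha$ is a genuine (injective) sparse ordering assigning a value to every vertex of $T$ with $\bw(T,\alpha)\le\catar$. For injectivity I would first record two facts that follow from the choice of constants (note $\Sstretch = 4b\cdot\Rstretch$, so $\Rstretch$ divides $\Sstretch$, and every color satisfies $1\le C(P)\le\chi(G_I)<\Rstretch$). First, every $\alpha$-value of a vertex of a stray $P$ is congruent to $C(P)$ modulo $\Rstretch$, in particular not a multiple of $\Rstretch$, whereas every $\alpha(b_i)=\Sstretch(n+i)$ is a multiple of $\Rstretch$; this separates stray vertices from backbone vertices, and together with distinctness of the values $\Sstretch(n+i)$ it rules out every collision involving a backbone vertex. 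Second, because $1\le C(P)<\Rstretch$, the $\alpha$-values of the vertices of $P$ all lie inside the interval that $G_I$ assigns to $P$, translated by the fixed shift $\Sstretch n$. Now suppose two vertices of distinct strays $P\ne Q$ receive the same $\alpha$-value; reducing the defining equation modulo $\Rstretch$ forces $C(P)\equiv C(Q)$, hence $C(P)=C(Q)$ since $|C(P)-C(Q)|<\Rstretch$; but then $P$ and $Q$ get the same color, so their $G_I$-intervals are disjoint, contradicting the second fact. Collisions within a single stray are impossible since $\alpha$ is strictly monotone along a stray. Hence $\alpha$ is injective.

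With injectivity in hand the bandwidth bound is a three-case check on the edges of the caterpillar: a backbone edge $b_ib_{i+1}$ contributes exactly $\catar$ (the bottleneck), an internal stray edge $p_ip_{i+1}$ contributes $\Rstretch$, and the edge $up_1$ joining a stray to its backbone vertex contributes $C(P)$ or $\Rstretch-C(P)$, each less than $\Rstretch\le\catar$. Thus $\bw(T,\alpha)\le\catar$, and since compressing an injective sparse ordering cannot increase the length of any edge, the returned linear ordering also has bandwidth at most $\catar$. Finally, if $\bw(T)\le b$ then by the contrapositive of Lemma~\ref{lemma:chromatic-number-directional-stray-graph} the algorithm is in this second branch and outputs such a layout, which completes the specification. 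I expect the injectivity argument to be the main obstacle — specifically arranging the constants $\Sstretch,\Rstretch$ and the thresholds $\frac{|P|}{2b}$ in the definition of \emph{pushed east/west} so that every stray's $\alpha$-values sit inside its $G_I$-interval; once that alignment is established the remaining steps are routine arithmetic.
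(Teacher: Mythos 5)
Your proposal is correct and follows essentially the same route as the paper's proof: soundness of the $\false$ branch via Lemma~\ref{lemma:chromatic-number-directional-stray-graph}, injectivity of $\alpha$ by the residue of each value modulo $\Rstretch$ (backbone versus stray, and same residue forcing same color, hence disjoint intervals containing the respective $\alpha$-values), the per-edge check giving bandwidth at most $\Sstretch=\catar$, and compression not increasing edge lengths. Your write-up is in fact slightly more explicit than the paper's (the shift by $\Sstretch n$ and the verification that stray values land inside their intervals), but it is the same argument.
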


\begin{proof}
    
    Recall that $\cantoralg$ runs in $O(bn^3)$ time. Furthermore, a coloring of
    $G_I$ can be found in $O(n)$ time by
    Golumbic~\cite{golumbic2004algorithmic}. Observe that every other step of
    the algorithm trivially runs in $O(n)$ time. And hence the algorithm runs
    in $O(bn^3)$ time. If $\catalg$ returns $\false$, then $\chi(G_I) \geq
    \Rstretch$. It follows from
    Lemma~\ref{lemma:chromatic-number-directional-stray-graph} that $\bw(T) > b$
    and hence the conclusion is correct. We will now prove that $\alpha$ is a
    sparse ordering of $V(T)$ of bandwidth at most $\Sstretch$.  It is clear
    that for any edge $uv \in E(T)$ it holds that $|\alpha(u) - \alpha(v)| \leq
    \Sstretch$. It remains to prove that $\alpha$ is an injective function.
    Assume for a contradiction that there are two vertices $u, v$ such that
    $\alpha(u) = \alpha(v)$. Observe that $\alpha(u) \equiv 0
    \bmod{(\Sstretch)}$ if and only if $u$ is a backbone vertex of $T$. This
    comes from the fact that $\chi(G_I) < 12b^2$. And since it is clear from
    the algorithm that no two vertices of the backbone are given the same
    position we can assume that neither $u$ nor $v$ is a backbone vertex. It
    follows that $\alpha(u) \equiv c(P) \bmod{(\Rstretch)}$ where $P$ is the
    stray containing $u$. Observe that the algorithm gives unique positions to
    all vertices from the same stray and hence $u$ and $v$ must belong to two
    different strays given the same color. Let $P_u$ be the stray containing
    $u$ and $P_ v$ the strain containing $v$. Furthermore, let $[x_u, y_u]$ and
    $[x_v, y_v]$ be the corresponding intervals in $G_I$. Observe that $I(P_u)
    \subseteq [x_u, y_u]$ and $I(P_v) \subseteq [x_v, y_v]$ and hence $[x_u,
    y_u] \cap [x_y, y_v] \neq \emptyset$, which is a contradiction, completing the proof.
\end{proof}
\section{Concluding Remarks}
\label{section:conclusion}

We have shown that the classical $2^{O(b)}n^{b+1}$ time algorithm of
Saxe~\cite{saxe1980dynamic} for the {\sc Bandwidth} problem is essentially
optimal, even on trees of pathwidth at most $2$. On trees of pathwidth $1$,
namely caterpillars with hair length $1$, the problem is known to be polynomial
time solvable. On the positive side, we gave the first approximation algorithm
for {\sc Bandwidth} on trees  with approximation ratio being a function of $b$
and independent of $n$. Our approximation algorithm is based on pathwidth,
local density and a new obstruction to bounded bandwidth called skewed Cantor
combs. We conclude with a few open problems.

\begin{itemize}
    \item Does {\sc Bandwidth} admit a parameterized approximation algorithm on
        general graphs?
    \item Does {\sc Bandwidth} admit an approximation algorithm on trees with
        approximation ratio polynomial in $b$? What if one allows the algorithm
        to have running time $f(b)n^{O(1)}$?
    \item Does there exist a function $f$ such that any graph $G$ with
        pathwidth at most $c_1$, local density at most $c_2$, and containing no
        $S_{c_3,c_3}$ as a subgraph has bandwidth at most $f(c_1, c_2, c_3)$?
\end{itemize}

\newpage

\end{document}